\documentclass[acmsmall]{acmart}
\usepackage{setspace}
\usepackage{graphics,graphicx,color}
\usepackage{booktabs}
\usepackage{dcolumn}
\usepackage{siunitx}
\usepackage{tabulary}
\usepackage{enumerate}
\usepackage{multirow}
\usepackage{hhline}
\usepackage{bm}

\usepackage[ruled, lined, algo2e]{algorithm2e}
\SetAlFnt{\small}
\SetKwInput{KwInit}{Initialize}

\SetCommentSty{mycommfont}

\usepackage{subfigure} 
\usepackage{cleveref}
\usepackage{breakcites}
\usepackage{float} 
\usepackage{multirow}
\usepackage{hhline}
\usepackage{braket}
\usepackage{dsfont} 
\usepackage{multirow}
\usepackage{hhline}
\usepackage[makeroom]{cancel}
\usepackage{ifthen}
\usepackage{comment}
\usepackage{subeqnarray}
\usepackage{setspace}
\usepackage{graphics,color}
\usepackage{url}
\usepackage{mathtools}

\usepackage{lipsum}
\usepackage{amsfonts}
\usepackage{graphicx}
\usepackage{epstopdf}

\ifpdf
  \DeclareGraphicsExtensions{.eps,.pdf,.png,.jpg}
\else
  \DeclareGraphicsExtensions{.eps}
\fi



\newtheorem{assumption}{Assumption}
\newtheorem{remark}{Remark}

\newtheorem{example}{Example}

\newcommand\abs[1]{\left|#1\right|}
\newcommand\norm[1]{\left\Vert#1\right\Vert}
\newcommand\infnorm[1]{\left\Vert#1\right\Vert_\infty}

\newcommand{\bigO}{O}

\newcommand{\N}{\mathcal{N}}
\newcommand{\PX}[1]{\Phi^x_{#1}}

\newcommand{\PU}[1]{\Phi^u_{#1}}

\newcommand{\bPX}[1]{\mathbf{\Phi^x_{#1}}}
\newcommand{\bPU}[1]{\mathbf{\Phi^u_{#1}}}
\newcommand{\Z}{\mathcal{Z}}

\newcommand{\pit}{\mathcal{P}^i_t}
\newcommand{\thit}{\theta^i_t}
\newcommand{\Thit}{\hat{\Theta}^i_t}

\newcommand{\A}[1]{A\left(#1\right)}
\newcommand{\B}[1]{B\left(#1\right)}

\newcommand{\colit}{\bm{\phi}^i_t}
\newcommand{\lcolit}{{\phi}^i_t}
\newcommand{\colitx}{\bm{\phi}^{i,x}_t}
\newcommand{\colitu}{\bm{\phi}^{i,u}_t}
\newcommand{\coljtx}{\bm{\phi}^{j,x}_t}
\newcommand{\coljtu}{\bm{\phi}^{j,u}_t}

\newcommand{\lcolitx}{{\phi}^{i,x}_t}
\newcommand{\lcolitu}{{\phi}^{i,u}_t}
\newcommand{\coljt}{\bm{\phi}^j_{t-d(j\rightarrow i)}}
\newcommand{\lcoljt}{{\phi}^j_{t-d(j\rightarrow i)}}
\newcommand{\lcoljtx}{{\phi}^{j,x}_{t-d(j\rightarrow i)}}
\newcommand{\lcoljtu}{{\phi}^{j,u}_{t-d(j\rightarrow i)}}
\newcommand{\lcoljltx}{{\phi}^{j,x}_{t-1-d(j\rightarrow \ell)}}
\newcommand{\lcoljltu}{{\phi}^{j,u}_{t-1-d(j\rightarrow \ell)}}
\newcommand{\lcoljlt}{{\phi}^j_{t-1-d(j\rightarrow \ell)}}

\newcommand{\M}[1]{\mathcal{M}\left(#1\right)}
\newcommand{\din}[1]{\mathcal{D}_{\text{in}}\left(#1\right)}
\newcommand{\dout}[1]{\mathcal{D}_{\text{out}}\left(#1\right)}
\newcommand{\delay}[2]{d \left(#1 \rightarrow #2\right)}
\newcommand{\trans}[1]{\left(#1\right)^T}


\newcommand{\F}{\bm{F}}
\newcommand{\g}{\bm{g}}



\AtBeginDocument{%
  }

\setcopyright{rightsretained} 
\acmJournal{POMACS}
\acmYear{2023} \acmVolume{7} \acmNumber{1} \acmArticle{26} \acmMonth{3} \acmPrice{}\acmDOI{10.1145/3579452}

\acmJournal{JACM}
\acmVolume{37}
\acmNumber{4}
\acmArticle{111}
\acmMonth{8}





\setcopyright{rightsretained} 
\acmJournal{POMACS}
\acmYear{2023} \acmVolume{7} \acmNumber{1} \acmArticle{26} \acmMonth{3} \acmPrice{}\acmDOI{10.1145/3579452}

\begin{document}
\title[Online Adversarial Stabilization of Unknown Networked Systems]{Online Adversarial Stabilization of \\ Unknown Networked Systems}



\author{Jing Yu}
\affiliation{%
  \institution{California Institute of Technology}
  \city{Pasadena}
  \country{United States}}
\email{jing@caltech.edu}

\author{Dimitar Ho}
\affiliation{%
  \institution{California Institute of Technology}
  \city{Pasadena}
  \country{United States}}
\email{dho@caltech.edu}

\author{Adam Wierman}
\affiliation{%
  \institution{California Institute of Technology}
  \city{Pasadena}
  \country{United States}}
\email{adamw@caltech.edu}







\begin{abstract}
  We investigate the problem of stabilizing an unknown networked linear system under communication constraints and adversarial disturbances. We propose the first provably stabilizing algorithm for the problem. The algorithm uses a distributed version of nested convex body chasing to maintain a consistent estimate of the network dynamics and applies system level synthesis to determine a distributed controller based on this estimated model. Our approach avoids the need for system identification and accommodates a broad class of communication delay while being fully distributed and scaling favorably with the number of subsystems.  
\end{abstract}

\begin{CCSXML}
<ccs2012>
   <concept>
       <concept_id>10003752.10003809.10010047.10010051</concept_id>
       <concept_desc>Theory of computation~Adversary models</concept_desc>
       <concept_significance>500</concept_significance>
       </concept>
   <concept>
       <concept_id>10003752.10003809.10010172</concept_id>
       <concept_desc>Theory of computation~Distributed algorithms</concept_desc>
       <concept_significance>500</concept_significance>
       </concept>
   <concept>
       <concept_id>10003752.10003809.10010047.10010048</concept_id>
       <concept_desc>Theory of computation~Online learning algorithms</concept_desc>
       <concept_significance>100</concept_significance>
       </concept>
   <concept>
       <concept_id>10003752.10003809.10003716.10011138.10011139</concept_id>
       <concept_desc>Theory of computation~Quadratic programming</concept_desc>
       <concept_significance>100</concept_significance>
       </concept>
 </ccs2012>
\end{CCSXML}

\ccsdesc[500]{Theory of computation~Adversary models}
\ccsdesc[500]{Theory of computation~Distributed algorithms}
\ccsdesc[100]{Theory of computation~Online learning algorithms}
\ccsdesc[100]{Theory of computation~Quadratic programming}

\keywords{online control; learning-based control; adversarial control; distributed control; stability; communication delay}

\received{August 2022}
\received[revised]{October 2022}
\received[accepted]{January 2023}

\maketitle

\section{Introduction}
Large-scale networked dynamical systems play a crucial role in many emerging engineering systems such as the power grid \cite{fang2011smart}, autonomous vehicles \cite{li2015overview}, and swarm robots \cite{morgan2014model}. Motivated by the success of learning-based control methods for single-agent (centralized) linear systems, there has been growing interest in learning distributed controllers for unknown networked systems composed of interconnected and spatially distributed linear time-invariant (LTI) subsystems \cite{bu2019lqr, fattahi2020efficient, furieri2020learning, ye2021sample, li2021distributed}.

However, since most existing literature ports centralized learning-based control techniques over to the distributed setting, almost all previous work assumes that the underlying dynamics are stable, or that a stabilizing and distributed controller is known. For a large-scale networked system, such assumptions are often unrealistic, because designing stabilizing distributed controllers itself is a significant task even if the dynamics model is available \cite{rotkowitz2005characterization,   han2003lmi, wang2014localized,fardad2014design, anderson2019system, zheng2020equivalence}. 

Recent work has begun to lift the assumption of the knowledge of a stabilizing controller in the centralized case, e.g. \cite{chen2021black, hu2022sample,  simchowitz2018learning}. This line of work follows the approach of system identification, either by letting the unstable system run open-loop or by exciting the system via control inputs. 
However, such approaches induce explosive transient behaviors due to the instability of the underlying system. 
Without proper generalization to the networked setting, such explosive behavior can cause catastrophic system degradation before a proper stabilizing controller can be learned. 

Further, until now, scalability and information constraints have only been considered separately in learning-based distributed controller design; no general approach exists. On the other hand, information constraints and scalability have been the central topics in distributed control for the past decade due to their theoretical challenge and practical importance \cite{rotkowitz2008information, zheng2017scalable,sturz2020distributed, matni2016regularization, wang2018separable, sturz2020distributed}. Therefore, it is crucial to simultaneously consider such constraints when designing learning-based distributed control algorithms for networked systems. 

\vspace{-0.05in}
\subsection{Contributions} In this work, we overcome the aforementioned challenges by leveraging recent advances in online learning and distributed control. In particular, we propose an approach that combines a distributed version of nested convex body chasing (NCBC), in order to maintain a consistent estimate of the network dynamics, with system level synthesis (SLS), in order to determine a distributed controller based on the selected consistent model. 
This combination yields the first online algorithm that provably stabilizes a networked LTI system with information constraints under adversarial disturbances (Theorem \ref{thrm:main}). The proposed algorithm (Algorithm \ref{alg:main}) is distributed and scales favorably to the number of subsystems in the network.

The proposed approach in this paper is fundamentally different than traditional system identification based methods, which incur prohibitively large state norm under adversarial disturbances, even in the simplest setting (see \Cref{fig:sysid_main}). The reason is that system identification-based approaches seek to learn the full system dynamics, which requires full excitation of the system against worst-case disturbances. On the other hand, our approach does not require precise knowledge of the system. Instead, we maintain model estimates that are consistent with the observations generated by the unknown system at all times.
A consequence of focusing on consistency is a natural endogenous exploration-exploitation scheme where our algorithm performs well (small state norm) while the selected model stays consistent, and gains information about the system whenever it observes a large state norm that renders the selected model inconsistent.

The main result of this paper is an input-to-state stability guarantee (\Cref{thrm:main}), where we draw novel connections between the path length property of NCBC techniques and system stability analysis. This follows from a set of novel technical results for SLS in the learning-based control context.
In particular, we generalize a previous result \cite{anderson2019system} on the characterization of the closed loop under SLS controllers that are synthesized from an arbitrary and potentially incorrect system model (\Cref{lem:closed-loop}). This result enables the analysis of our algorithm when each  subsystem uses local, asynchronous, and wrong model information for local controller synthesis.
Further, we derive a novel perturbation result with explicit constants for finite-horizon SLS synthesis (\Cref{thrm:sensitivity}) that globally bounds the sensitivity of the optimal solution to the SLS problem (a quadratic program with equality and sparsity constraints) with respect to the model. This result is also applicable in other contexts such as a class of MPC problems studied in \cite{borrelli2017predictive, alonso2021data, sieber2021system}. 

\begin{table}
  \caption{Maximum and top $90 \%$  infinity norm of the state ($\infnorm{x(t)}$) for different disturbance profiles averaged over 10 runs. Simulation details are provided in \Cref{sec:simulation}.}
    \vspace{-0.1in}
  \label{fig:sysid_main}
  {\small \begin{tabular}{cccl}
    \toprule
    Algorithm & Correlated Gaussian (Top $90\%$) & Uniform (Top $90\%$)  & State-dependent (Top $90\%$)\\
    \midrule
    \textbf{This work} & $\mathbf{1.21 \times 10^1}$ ($0.31 \times 10^1$) & $\mathbf{2.30 \times 10^1}$ ($0.36 \times 10^1$) & $\mathbf{7.14 \times 10^1}$ ($0.54 \times 10^1$) \\
    SysID & $5.12 \times 10^{11}$ ($1.71 \times 10^{11}$) & $5.12 \times 10^{11}$ ($1.71 \times 10^{11}$) & $5.12 \times 10^{11}$ ($1.71 \times 10^{11}$)\\
  \bottomrule
  \vspace{-0.3in}
\end{tabular}}
\end{table}

\subsection{Related work} This work contributes to a large and growing body of work on the topics related to learning-based control design, online control, and distributed control. We briefly review the literature most related to this work below. 

\textbf{Stabilization of unknown systems. } Stabilizing unknown linear systems has long been a fundamental problem studied in adaptive control theory \cite{ioannou2006adaptive}. It recently reemerged as a learning problem and received considerable attention from the machine learning community \cite{perdomo2021stabilizing, zhao2021learning, treven2021learning, hu2022sample}. Most works have been developed under single-agent setting, with a no-noise assumption \cite{lamperski2020computing, talebi2021regularizability} or Gausssian noise models \cite{faradonbeh2018finite, lale2022reinforcement}. Under the adversarial noise setting, which is the focus of this paper, the only work that guarantees stabilization for LTI systems is \cite{chen2021black}, with a system identification-based approach that achieves order-optimal regret. In contrast, we propose a novel framework for stabilization under adversarial noise that does not rely on accurate identification of the true dynamics. In particular, our method is the first algorithm to stabilize a networked LTI system under adversarial disturbances with information constraints while simultaneously achieving magnitudes of improvement in empirical performance over the state-of-the-art identification-based approach \cite{chen2021black} in the single-agent setting, despite the regret-optimal guarantee in \cite{chen2021black}. 

\textbf{Distributed control. }
Motivated by large-scale cyberphysical systems that are composed of physically distributed subsystems with local dynamical interactions, there is a large body of work on control design for networked systems \cite{zheng2020equivalence, anderson2019system, kashyap2019explicit}. Cyberphysical systems such as the power grid are commonly constrained by a communication layer that allows specific structure of information exchange among the subsystems. such information structure imposes significant challenges for optimal control design, often rendering the problem NP-hard \cite{tsitsiklis1985complexity}. In \cite{rotkowitz2005characterization}, it was shown that a large class of practically relevant distributed control problems is convex and tractable to solve. Since then, many works have focused on this class of problems \cite{lamperski2015optimal, fardad2014design}. However, \cite{wang2019system} observes that the complexity of computation and implementation of distributed controllers developed under this setting can be prohibitively expensive, thus not scalable to large-scale systems. The System Level Synthesis (SLS) framework is developed as a scalable alternative to distributed control design \cite{anderson2019system}. In particular, SLS allows order-constant complexity for synthesis and implementation, due to its special parameterization and implementation of the feedback controller. As a result, many works have adopted SLS as the basis for novel (learning-based) control algorithms in both distributed and centralized setting \cite{dean2020sample, didier2022system, alonso2021data, umenberger2020optimistic}. We contribute to the literature on SLS by developing a suit of technical results for SLS controllers that can find applications beyond the setting of this work.

\textbf{Learning distributed controllers. } 
Many learning-based control algorithms for networked systems adopt a centralized learning or computational approach with the objective of regret minimization, e.g., \cite{fattahi2020efficient, bu2019lqr,ye2021sample, faradonbeh2022joint, furieri2020learning}. All prior work use the stochastic noise or no-noise model and assume a known stabilizing distributed controller is given \cite{li2021distributed, alonso2021data, jing2021learning, alemzadeh2019distributed, talebi2021distributed, alemzadeh2021d3pi}. As far as we are aware, no previous work accommodates communication delay while doing both learning and control. 
The most related to our work are \cite{adaptivesls} and \cite{fattahi2020efficient}, where learning-based SLS controllers are designed to control unknown networked systems. Both of the methods require the knowledge of a stabilizing and distributed controller. \cite{adaptivesls} is only applicable to small-uncertainty scenarios, while \cite{fattahi2020efficient} requires a stabilizing distributed controller and performs centralized learning. In this work, we focus on stabilization and propose the first distributed learning-based control algorithm that guarantees stability for unknown networked systems under adversarial disturbances. 

\textbf{Online learning. }
The problem of online stabilization for unknown dynamical systems is an instance of online decision making problems, where an agent makes a sequence of decisions based on the feedback from an unknown environment with the goal of cost minimization. Online decision making is studied extensively in the online learning literature, with a line of work \cite{goel2019online, shi2020online, yeh2022robust, lin2022online} that makes interesting connections between convex function and body chasing \cite{antoniadis2016chasing, argue2022chasing} and linear control theory. In particular, \cite{ho2021online} proposes an online nonlinear robust control method based on convex body chasing that guarantees finite mistakes under adversarial disturbances without the need for system identification. While \cite{ho2021online} considers binary cost functions, we present novel technical results that establish the first connection between convex body chasing and stability analysis for both single-agent and networked linear dynamical systems.


\subsection{Notation}
Let $\|\cdot\|$ be the $\ell_2$ norm and $\|\cdot\|_F$ be the Frobenius norm.  We denote the $(i,j)$th position of a matrix $M$ as $M(i,j)$ and use $M(:,j), M(i,:)$ for the $j$th column and $i$th row of $M$ respectively. We use $[N]$ for the set of positive integers up to $N$. Positive integers are denoted as $\mathbb{N}_+$.
Bold face lower cases are reserved for vector signal of the form $\mathbf{x} := [x(0)^T,x(1)^T, \dots]^T$ with $x(t) \in \mathbb{R}^n$ is an infinite sequence of vectors indexed by time $t$. 
We reserve bold face capital letters for causal linear operators/transfer matrices with components $K[0], K[1],\ldots,$ such that
\begin{equation*}
    \mathbf{K} := \begin{bmatrix} K[0] & 0 & \dots & \\ K[1] & K[0] & 0 & \dots &  \\ \vdots & \ddots  & \ddots & \ddots \end{bmatrix}.
\end{equation*}
We write $\mathbf{y} = \mathbf{G}\mathbf{x}$ to mean that $y(t) = \sum_{k=0}^t G[k] x(t-k)$.
Given any binary matrix $\mathcal{C} \in \{1,0\}^{N\times N}$, we say $M \in \mathcal{C}$ for a matrix $M\in \mathbb{R}^{N \times N}$ if the sparsity of $M$ is $\mathcal{C}$. We use $\{e_j\}_{j=1}^n$ for the standard basis in $\mathbb{R}^n$.

\section{Preliminaries and problem setup}
\label{sec:dynamics}
We consider the task of stabilizing an unknown networked system made up of $N$ interconnected, heterogeneous linear time-invariant (LTI) subsystems, illustrated in Figure \ref{fig:example}. 
For each subsystem $i \in [N]$, let $x^i(t) \in \mathbb{R}^{n_i}$, $u^i(t) \in \mathbb{R}^{m_i}$, $w^i(t) \in \mathbb{R}^{n_i}$ be the local state, control, and disturbance vectors respectively. Each subsystem $i$ has dynamics,
\begin{equation}
        \label{eq:local_sys}
        x^i(t+1) = \sum_{j\in \N(i)} \left( A^{ij}x^j(t) +  B^{ij}u^j(t) \right) + w^i(t),
\end{equation}
where we write $j \in \N(i)$ if the states or control actions of subsystem $j$ affect those of subsystem $i$ through the open-loop network dynamics ($i \in \N(i)$).
Concatenating all the subsystem dynamics, we can represent the global dynamics as
\begin{equation}
    \label{eq:global_sys}
    x(t+1) = A x(t) + B u(t) + w(t),
\end{equation}
where $x(t) \in\mathbb{R}^{n_x}$, $u(t) \in \mathbb{R}^{n_u}$, $w(t) \in \mathbb{R}^{n_x}$, with $n_x = \sum_{i=1}^N n_i$ and $n_u = \sum_{i=1}^N m_i$, and we define $A^{ij},\, B^{ij} \equiv 0$ for all $j\not \in \N(i)$. 
The networked LTI model \eqref{eq:local_sys} has been extensively studied in the networked control literature for various applications such as robotic swarms \cite{mukherjee2022reinforcement}, voltage control for the distribution network of the power grid \cite{yeh2022robust}, and many other large-scale cyber-physical systems \cite{lemos2012distributed,zhang2016controllability}. An example is the linearized swing equation for power systems, where the global system is composed of a mesh of interacting buses \cite{gholami2020fast, wang2016localized}. In this setting, the states $x^i$ of each bus $i$ is two-dimensional and corresponds to the phase angle relative to some given setpoint and the associated frequency. The input $u^i$ at bus $i$ is the controllable load, while $w^i$ is the bounded load disturbances that are often correlated in space and time. 

\begin{figure}
\centering     
\subfigure[System with communication graph $\mathcal{G}^C$. ]{\label{fig:example}\includegraphics[scale = 0.35]{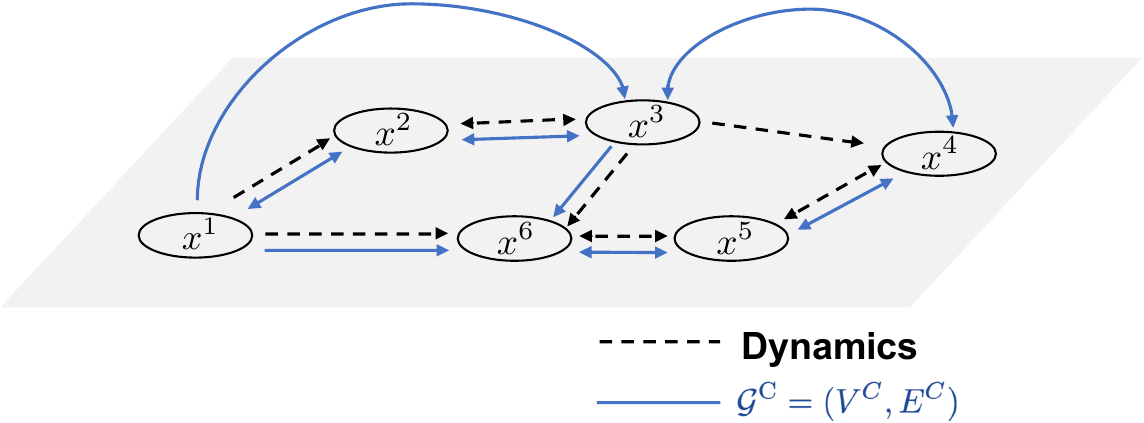}}\hfill
\subfigure[$A$, $B$ matrix with parameter $\Theta$.]{\label{fig:AB}\includegraphics[scale = 0.25]{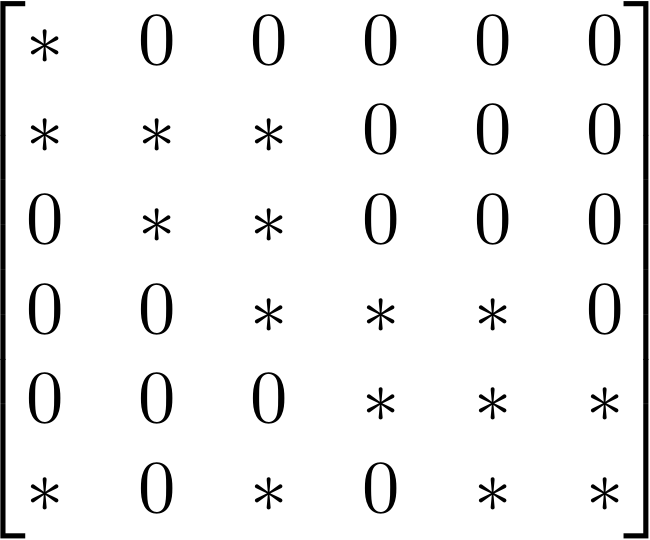}}
\hfill
\subfigure[Adjacency matrix $\mathcal{C}$.]{\label{fig:C}\includegraphics[scale = 0.25]{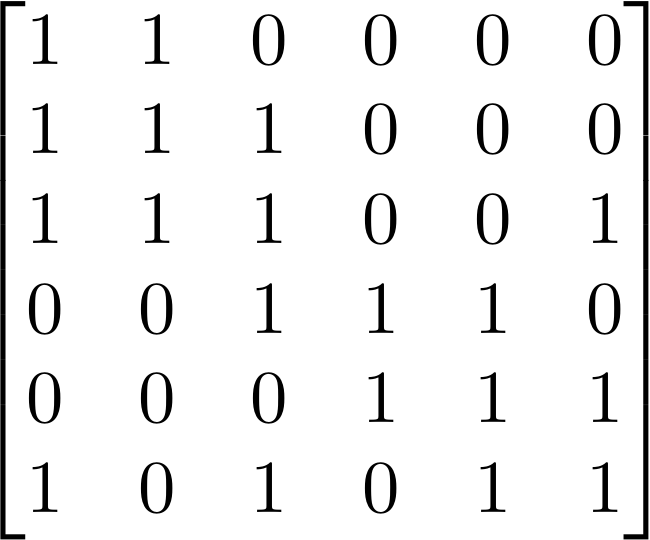}}
\caption{Example networked LTI system with information constraints.}
\end{figure}

We assume that the topology among the subsystems is known, \textit{i.e.}, the sets $\N(i)$ for $i \in [N]$ are known.
However, the parameters of the dynamics (entries of matrices $A^{ij}$, $B^{ij}$) are unknown. 
Let $\theta^i$ denote the unknown \emph{local parameter} for subsystem $i$, i.e., $\theta^i:= \left( A^{ij},\, B^{ij}\right)_{j\in \N(i)}$. Further, let $\Theta:= \left(\theta_1,\ldots,\theta_N\right)$ be the \emph{global parameter}. 
We write $A(\Theta)$ and $B(\Theta)$  (equivalently $A^{ij}(\theta^i)$, $B^{ij}(\theta^i)$) to emphasize that $A$  and $B$ are matrices constructed with appropriate zeros according to the network topology (known), and the nonzero entries specified by $\Theta$ (unknown). 
\begin{example}
    \label{ex:model}
    Consider the networked system in Figure \ref{fig:example} where each subsystem $i\in [6]$ has $x^i(t) \in \mathbb{R}$ and $u^i(t) \in \mathbb{R}$. For each $i$, the set $\mathcal{N}(i)$ contains the subsystems that has a dashed arrow pointing towards $x^i$ in the figure. For example, $\N(6) = \{1,\, 3,\, 5, \, 6\}$. 
    Each $A^{ij}$ and $B^{ij}$ for $j \in \N(i)$ is a scalar.
    The stacked global dynamics has matrix $A$ and $B$ with structure shown in Figure \ref{fig:AB}.
    The unknown local parameter $\theta^i$ corresponds to the $*$ entries of the $i^{\text{th}}$ row of $A$ and $B$, while the global parameter $\Theta$ is a vector containing $*$ entries in matrix $A$ and $B$.
\end{example}

We now introduce three core assumptions needed for our algorithm and analysis.  As we highlight below, these are standard assumptions in the learning-based control literature. 
\begin{assumption}[Adversarial disturbances]
    \label{assump:noise}
   $\left\|w(t)\right\|_{\infty} \leq W $ for \eqref{eq:global_sys}. 
\end{assumption}

\begin{assumption}[Compact Parameter Set]
    \label{assump:compact}
    The network structure $\N(i)$ for $i \in [N]$ is known. The true system parameter $\Theta^\star := \left(\theta^{1,\star},\ldots,\theta^{N,\star}\right)$ is an element of a known compact convex set $\mathcal{P}_0 = \mathcal{P}^1_0 \times \dots \times \mathcal{P}^{N}_0 $, which is a product space of local parameter sets where $\theta^{i,*} \in \mathcal{P}_0^i$. 
    The known parameter set is bounded such that there exists a known constant $\kappa >0$ where $\norm{[A\left(\Theta\right) \, B\left(\Theta\right)]}_F \leq \kappa$ for all $\Theta \in \mathcal{P}_0$.
\end{assumption}

\begin{assumption}[Controllability]
    \label{assump:controllable}
    For all $\Theta \in \mathcal{P}_0$, $(A(\Theta),B(\Theta))$ is controllable. 
\end{assumption}

Bounded adversarial disturbances is a common model in the adversarial online learning and control problems \cite{agarwal2019online, hazan2020nonstochastic, didier2022system}. Since we make no assumptions on how large the bound on the disturbance $W$ is, Assumption \ref{assump:noise} models a variety of disturbance models, such as bounded and correlated stochastic noise or state-dependent disturbances such as the linearization and discretization error for nonlinear continuous dynamics \cite{tu2019sample}. {Moreover, the known bound $W$ can be relaxed to an unknown parameter $\eta$ with $\eta \leq W$ for a known constant $W$ to reduce conservatism for large $W$. } 
Assumptions \ref{assump:compact} and \ref{assump:controllable} are standard in the learning-based control literature, e.g., see  \cite{cohen2019learning, agarwal2019online}. {We impose controllability in Assumption \ref{assump:controllable} for ease of exposition but it can be relaxed to stabilizability by adjusting the choice of model-based controller to an infinite-horizon controller such as the one proposed in \cite{yu2021localized} for the algorithm. }

\subsection{Stability} 
One of the fundamental goals for control design is to ensure stability. In this paper,
we aim to learn a stabilizing controller for the networked linear system \eqref{eq:global_sys} in the sense of input to state stability (ISS) \cite{sontag2008input}.
ISS is one of the main notions of stability for both linear and nonlinear systems \cite{jiang2001input, aswani2013provably}. Here we adapt the ISS definition to the $\ell^\infty$-norm.
\begin{definition}[ISS]
    A dynamical system of the form \eqref{eq:global_sys} is said to be input to state stable (ISS) if
there exist functions $\beta:\mathbb{R}_+ \times \mathbb{N} \rightarrow \mathbb{R}_+$ that is continuous, strictly increasing, and bijective with respect to the second argument with $\lim_{t\rightarrow \infty}\beta(a, t) = 0$ for all $a \geq 0$, $t\in \mathbb{N}$, and $\gamma : \mathbb{R}_+ \rightarrow \mathbb{R}_+$ that is continuous, strictly increasing, and bijective such that for all initial state $x(0)$, disturbance sequence $\mathbf{w}$, and time $t\geq t_0$ for $t_0 \in \mathbb{N}_+$, we have 
$\infnorm{x(t)} \leq \beta(\infnorm{x(t_0)}, t-t_0) + \gamma(\sup_{t \geq t_0} \infnorm{w(t)})$.
\end{definition}




\subsection{Distributed design and information constraints}
\label{sec:info_constraints}

For large-scale networks such as the power grid with state dimension in the orders of thousands to millions, it is unrealistic and prohibitively costly for a central agent to learn a global policy online. A promising remedy is to decompose the global policy learning into a \emph{local} one, where each subsystem in the network learns a local policy in a distributed fashion.
In this work, we propose a distributed learning-based control algorithm for the networked linear system \eqref{eq:global_sys} that guarantees stability of the global system. 

In addition to distributed design, networks of the form \eqref{eq:local_sys} are often modelled with additional information constraints that require careful consideration. In this work we consider two common information constraints. The first is \emph{communication delay}, where the dynamical system is endowed with a communication network that specify delayed information transmission among subsystems. The second is \emph{local information}, where each subsystem only computes with (delayed) local information within a specified neighborhood, and discard information outside of the neighborhood. 
We come back to these information constraints and present definitions in \Cref{sec:comm_constraints,sec:localized_control}.

\subsection{Algorithm preliminaries}

Our proposed algorithm makes use of two emerging techniques, one from the learning community, i.e., nested convex body chasing (NCBC), and one from the control community, i.e., system level synthesis (SLS). We provide important background on each below before introducing our algorithm in the next section.

\subsubsection{Preliminaries on NCBC}
\label{sec:NCBC}
The Nested Convex Body Chasing (NCBC) problem is a well-studied online learning problem \cite{bubeck2020chasing,argue2021chasing}. 
At every round $t$, the player is presented a convex body $\mathcal{K}_t \subset \mathbb{R}^n$ which is nested in the previous body, e.g., $\mathcal{K}_t \subseteq \mathcal{K}_{t-1}$.
The player selects a point $q_t \in \mathcal{K}_t$ with the objective of minimizing the total path length of the selection for $T$ rounds, e.g., $\sum_{t=0}^T \|q_{t+1} - q_{t}\|$. 
There are many algorithms for the NCBC problem such as greedy projection of the previously selected point onto the current body \cite{argue2019nearly}. Among these, the Steiner point selector has been shown to achieve optimal competitive ratio against the offline optimal selector \cite{bubeck2020chasing}. The Steiner point of a convex body $\mathcal{K}$ can be interpreted as the average of the extreme points and is defined as 
$$\text{St}(\mathcal{K}):= \mathbb{E}_{v: \|v\|\leq 1} \left[ g_{\mathcal{K}}(v) \right],$$ 
where $g_{\mathcal{K}}(v) := \text{argmax}_{x\in\mathcal{K}} v^\top x $ {and the expectation is taken with respect to the uniform distribution over the unit ball.} 
The Steiner point selector achieves the following total path length, 
\begin{equation}
\label{eq:steiner}
\sum_{t=0}^T \|\text{St}(\mathcal{K}_{t})- \text{St}(\mathcal{K}_{t+1})\| \leq n \cdot \text{diam}(\mathcal{K}_0), \, \text{for all } T \in \mathbb{N_+}.
\end{equation}
We note that the Steiner point can be approximated with any accuracy by solving sampling based linear programs, \cite[Algorithm 3]{argue2021chasing}.

\subsubsection{Preliminaries on SLS}
\label{sec:sls}
Even when the dynamics \eqref{eq:local_sys} is known, it remains challenging to design distributed and localized control policies that accommodates communication delay and information constraints due to nonconvexity and computational scalability issues. Motivated by this problem, \cite{wang2019system}
introduces the SLS framework that synthesizes distributed controllers by parameterizing controllers with the closed-loop system responses induced under them. In \cite{dean2019safely, dean2020robust, fattahi2020efficient}, SLS plays a central role for model-based learning algorithm design and analysis.

We illustrate SLS via a simple example. Consider a fixed static controller $K \in \mathbb{R}^{n_u \times n_x}$ such that $u(t) = Kx(t)$. Then the system \eqref{eq:global_sys} has the following closed-loop responses to the exogenous disturbances $\mathbf{w}$,
\begin{align}    
\label{eq:static_ex}
    x(t) = \sum_{k=0}^t (A + BK)^k w(t-k-1) , \quad 
    u(t) = \sum_{k=0}^t K(A + BK)^k w(t-k-1),
\end{align}
where we absorb the initial state $x(0)$ into $w(-1)$.  Instead of directly synthesizing $K$, SLS optimizes the linear operators that map $\mathbf{w}$ to $\mathbf{x}$ and $\mathbf{u}$. Let $\Phi^x[k] := (A + BK)^k$ and $\Phi^u[k] := K(A + BK)^k$. Then \eqref{eq:static_ex} can be written as $x(t) = \sum_{k=0}^t \PX{}[k]w(t-k-1)$ and $u(t) = \sum_{k=0}^t \PU{}[k]w(t-k-1)$. In signal and transfer matrix form, $\mathbf{x} = \bPX{} \mathbf{w}$ and $\mathbf{u} = \bPU{} \mathbf{w}$.
We call $\bPX{}$ and $\bPU{}$ with components $\PX{}[k]$ and $\PU{}[k]$ the \emph{closed-loop responses}. More generally, for a fixed linear causal controller $\mathbf{K}$, the closed loop dynamics of \eqref{eq:global_sys} can be written in signal/transfer matrix notation as 
\begin{align}
\mathbf{x} = \Z\mathbf{A} \mathbf{x} + \Z \mathbf{B} \mathbf{u} +\mathbf{w}, \quad \mathbf{u} = \mathbf{K} \mathbf{x}, 
\label{eq:signal}
\end{align}
where $\Z$ is the block-downshift operator with identity matrices of size $n_x$ by $n_x$ in all the first block sub-diagonal positions and zeros everywhere else. Operator $\mathbf{A}, \mathbf{B}$ are block diagonal matrices with matrix $A$ and $B$ on the diagonal respectively. 
We can similarly derive the closed-loop responses $\mathbf{\PX{}}:\mathbf{w} \rightarrow \mathbf{x} $ and $\mathbf{\PU{}}: \mathbf{w} \rightarrow \mathbf{u}$ from \eqref{eq:signal} as
$$\left[\begin{array}{c}\mathbf{x}\\\mathbf{u}\end{array}\right] 
= \left[\begin{array}{c} \left( I - \Z(\mathbf{A} + \mathbf{B} \mathbf{K}) \right)^{-1}\\ \mathbf{K}\left( I - \Z(\mathbf{A} + \mathbf{B} \mathbf{K})\right)^{-1}  \end{array}\right] \mathbf{w} = \left[\begin{array}{c} \mathbf{\PX{}}\\ \bPU{}\end{array}\right] \mathbf{w}. $$
Therefore, SLS uses the closed-loop responses $\bPX{}$ and $\bPU{}$ as the alternative parameterization for controller $\mathbf{K}$.
The following theorem characterizes an affine subspace of the achievable system responses $\mathbf{\PX{}}$ and $\mathbf{\PU{}}$ under some feedback linear controller $\mathbf{K}$.
\begin{theorem}[Adapted from \cite{anderson2019system}]
        \label{thrm:SLS}
        For system \eqref{eq:global_sys}, any linear causal operators $\mathbf{\PX{}},\mathbf{\PU{}}$ with finite impulse response of horizon $H$ and satisfying the following
                \begin{subequations}
                \label{eq:feasibility}
                \begin{align}
                        \PX{}[0] &= I, \quad
                        \PX{}[k+1] = A\PX{}[k] + B\PU{}[k]\,,\quad \text{ for } k = 0,\, \ldots,H-1 \label{eq:feasibility-1}\\
                        \PX{}[\tau] &=0 \text{ for } \tau \geq H \label{eq:feasibility-2}
                \end{align}
                \end{subequations}
         are closed-loop responses for \eqref{eq:global_sys} under a stabilizing linear controller $\mathbf{K}$. 
         Moreover, given any linear causal operators $\bPX{}$, $\bPU{}$ that satisfy \eqref{eq:feasibility}, the following SLS controller constructed using $\bPX{}$, $\bPU{}$, 
    \begin{subequations}
        \label{eq:central-controller}
        \begin{align}
            \hat{w}(t) &= x(t)  - \sum_{k=1}^{H-1} \Phi^x[k] \hat{w}(t-k)\label{eq:central1}\\
            u(t) &= \sum_{k=0}^{H-1} \Phi^u[k] \hat{w}(t-k)\label{eq:central2}
        \end{align}
        \end{subequations}
    with $\hat{w}(0) = x(0)$ achieves the desired closed-loop response prescribed by $\bPX{}$, $\bPU{}$.
\end{theorem}
We remark that here we are restricting to the space of linear causal operators with \emph{finite impulse responses} (FIR) up to horizon $H$, instead of the entire space of linear causal operators. The horizon $H$ is a system-dependent design parameter relating to controllability of \eqref{eq:global_sys}. Under Assumption \ref{assump:controllable}, $H\leq n_x$. Moreover, \eqref{eq:feasibility} provides affine constraints on finite number of nonzero parameters of the closed-loop responses. Therefore, one can tractably optimize the closed-loop responses with respect to a convex cost. A common choice is the Linear Quadratic Regulator (LQR) cost on the state and input expressed in terms of the closed-loop responses, e.g.,
\begin{equation}
    \label{eq:central-sls}
    \begin{aligned}
        &\min_{\bPX{},\, \bPU{}}  && \sum_{k=0}^\infty
        \left\|
         \begin{bmatrix} Q^{1/2} & 0\\ 0& R^{1/2}\end{bmatrix} \begin{bmatrix} \PX{}[k] \\ \PU{}[k] \end{bmatrix}
         \right\|_F \quad \text{s.t.} \,\,\, \eqref{eq:feasibility}\, .
    \end{aligned}
\end{equation}

In this work, we leverage the \textit{SLS controllers} \eqref{eq:central-controller} that is parameterized by and constructed from the operators $\bPX{}$, $\bPU{}$. The interpretation of \eqref{eq:central-controller} is intuitive. When $\bPX{}$, $\bPU{}$ satisfy \eqref{eq:feasibility}, they are valid closed-loop responses, mapping $\mathbf{w}$ to $\mathbf{x}$ and $\mathbf{u}$ under \eqref{eq:global_sys}. Then equation \eqref{eq:central1} estimates the disturbance entering the state in the last time step by computing the difference between the currently observed state $x(t)$ and the counterfactual state $\sum_{k=1}^{H-1} \Phi^x[k] \hat{w}(t-k)$ that should have been observed according to the closed-loop repsonse $\bPX{}$ if there were no disturbance. Indeed, a simple calculation using substitution will reveal that $\hat{w}(t) = w(t-1)$, i.e., that the estimated disturbance from an SLS controller constructed with operators that satisfy \eqref{eq:feasibility} is the perfect one-step delayed estimation of the true disturbances. Then \eqref{eq:central2} computes the control action to attenuate the estimated disturbance according to the prescription of the closed-loop responses $\bPU{}$.

\textbf{Distributed controller synthesis and implementation. }
A feature of SLS is that both the closed-loop response synthesis \eqref{eq:central-sls} and the controller implementation \eqref{eq:central-controller} can be performed in a \emph{distributed} manner, unlike the commonly adopted optimal LQR control method via the Riccati equation \cite{simchowitz2020naive}. This is crucial for scalability of the control algorithm for large-scale systems. 

Observe that \eqref{eq:central-sls} is a column separable problem. This means that we can partition matrix variables $\PX{}[k],\,\PU{}[k]$ into columns such as $\PX{}[k](:,i),\,\PU{}[k](:,i)$ corresponding to each subsystem $i$. We refer to \cite{wang2018separable} for the definition of column separability and the verification of \eqref{eq:central-sls} as a column separable problem.
Thus, subsystem $i$ only needs to solve the column subproblems corresponding to its dynamics \eqref{eq:local_sys} in the global dynamics \eqref{eq:global_sys} as follows. Let $\bm{\phi}^{i,x}$ and $\bm{\phi}^{i,u}$ denote the $i$th column of $\bPX{}$ and $\bPU{}$ respectively and let $\bm{\phi}^i$ collectively stand for $\bm{\phi}^{i,x}$, $\bm{\phi}^{i,u}$. The $i$th column subproblem is  
\begin{equation}
\label{eq:central-column}
\begin{aligned}
 &\min_{\bm{\phi^i}}  && \sum_{k=0}^\infty\left\|
\begin{bmatrix} Q^{1/2} & 0\\ 0& R^{1/2}\end{bmatrix} \begin{bmatrix} \phi^{i,x}[k] \\ \phi^{i,u}[k] \end{bmatrix}\right\|_F \\
        &\quad \text{s.t.} && \phi^{i,x}[k+1] = A \phi^{i,x}[k] + B \phi^{i,u}[k]\,\quad  \text{ for }k=0,\ldots,H-1 \\
        &\quad && \phi^{i,x}[0] = e_i,\quad \phi^{i,x}[H] = 0 \,,
\end{aligned}
\end{equation}
where the constraints in \eqref{eq:central-column} is the column-wise decomposition of the constraints \eqref{eq:feasibility} for the closed-loop repsonse synthesis \eqref{eq:central-sls}. 
It is straightforward to see that stacking the solutions to the column subproblems recovers the optimal solution to \eqref{eq:central-sls}.

When the dynamics interaction among subsystems \eqref{eq:local_sys} is sparse, additional sparsity can be imposed on the closed-loop responses during synthesis \eqref{eq:central-sls}. With sparse $\bPX{}$ and $\bPU{}$, the implementation of the controller \eqref{eq:central-controller} can be distributed in a similar decomposition as the synthesis procedure. In particular, each subsystem computes a disjoint subset of coordinates of $\hat{w}(t)$. Due to sparsity, such local computation for subsystem $i$ only requires the solutions to the column subproblems from the local neighbors of $i$ via communication instead of from the entire network.

\section{Online stabilization under adversarial disturbances}
\label{sec:centralized}
In this section, we propose a novel online algorithm presented in \Cref{alg:centralized} that stabilizes an unknown networked linear system \eqref{eq:global_sys} under bounded and potentially adversarial disturbances. The algorithm selects hypothesis models using methods for NCBC and constructs an SLS distributed controller based on the hypothesis model. Our approach is distinguished from prior learning-based control methods in that it does not perform system identification as part of the algorithm. 

We first introduce our algorithm without any communication or localization constraints. Then, in \Cref{sec:distributed}, we extend the algorithm to a distributed one that accommodates communication delay and local information (\Cref{alg:main}). {Though inspired by the approach in \cite{ho2021online}, \Cref{alg:centralized,alg:main} are the first to consider the control goal of stabilization, which can not be subsumed under the framework proposed in \cite{ho2021online} where only binary cost functions are considered. To cast stabilization in terms of a binary cost function, one needs to specify the largest norm of the state and control input of the closed-loop system, which is unavailable a priori\footnote{A crude approximation of the largest norm can be achieved by computing the worst-case state norm over all systems in the initial parameter set $\mathcal{P}_0$, but such approximation results in significant conservatism and requires the knowledge of control theoretical constants of the controller, e.g., SLS controllers, that may not always be available.}.} Moreover, our algorithms perform both the parameter selection and the model-based control design \textit{distributedly} for each local subsystem based on \textit{delayed} information from other subsystems, whereas \cite{ho2021online} is a single-agent algorithm.  

\Cref{alg:centralized} starts with the construction of a set of candidate models that are consistent with the online data (line~\ref{algoline:central-consist-1}) after observing the latest state transition (line~\ref{algoline:obs}). A hypothesis model is selected from the set of candidate models with NCBC techniques (line~\ref{algoline:central-consist-3}) if the previously selected hypothesis model is invalidate by the new observation (line~\ref{algoline:central-consist-2}). Based on the selected hypothesis model, model-based control design is performed using the SLS procedure introduced in \Cref{sec:sls} (line \ref{algoline:central-sls-1} - \ref{algoline:central-sls-2}).
We discuss the details of \Cref{alg:centralized} in the following subsections.
\begin{algorithm2e}[t]
\LinesNumbered
    \DontPrintSemicolon
    \SetNoFillComment
    \KwIn{Parameter set $\mathcal{P}_0$}
    \KwInit{$t=0$, $u(0)=0$}
    \For{$t = 1,2,\dots$ }{
             Observe $x(t)$ \label{algoline:obs}\;
             \tcc{CONSIST: Select consistent models}
             Construct $\mathcal{P}_t$ with \eqref{eq:central-set} \label{algoline:central-consist-1}\;
            \lIf{$\Theta_{t-1} \in \mathcal{P}_t$} 
{
    $\Theta_t \leftarrow \Theta_{t-1} $ 
}\label{algoline:central-consist-2}
\lElse
{$\Theta_t \leftarrow$ $\text{St}(\mathcal{P}_t)$ } \label{algoline:central-consist-3}
            \tcc{CONTROL: Perform model-based control with SLS}
            \label{algoline:model-based}
            Synthesize $\bPX{t}$, $\bPU{t}$ using \eqref{eq:central-sls} based on $\Theta_t$ \label{algoline:central-sls-1} \;
            Compute $u(t)$ using the SLS controller \eqref{eq:central-controller} with $\bPX{t}$, $\bPU{t}$   \label{algoline:central-sls-2}
    }
    \caption{Online stabilization under adversarial disturbances}
    \label[algorithm]{alg:centralized}
\end{algorithm2e}

\subsection{CONSIST: Consistent hypothesis model selection}
The first component of \Cref{alg:centralized} is to select a hypothesis model $\Theta_t$ in order to perform model-based control. We name this component CONSIST. Due to the potentially adversarial disturbances such as state-dependent noise, standard identification methods such as linear regression do not guarantee accurate estimation of the model. Instead, we leverage NCBC for hypothesis selection. 

After observing the latest state transition from $x(t-1)$, $u(t-1)$ to $x(t)$, the algorithm constructs the set of all $\Theta$'s such that $A(\Theta), B(\Theta)$ satisfy \eqref{eq:global_sys} with some admissible disturbances defined in Assumption \ref{assump:noise}. In particular, each observed transition defines a set of linear constraints on $\Theta$ and we construct the \textit{consistent parameter set}, $\mathcal{P}_t$ at each time $t$, as 
\begin{align}
    \label{eq:central-set}
    \mathcal{P}_t:= \left\{\Theta \in \mathcal{P}_{t-1} \,:\,  \left\|x(t) - \left(  A(\Theta) x(t-1) +  B(\Theta)u(t-1) \right) \right\|_\infty \leq W \right\}
\end{align}
with $\mathcal{P}_0$ as the local initial parameter set defined in Assumption \ref{assump:compact}. Note that the consistent parameter set $\mathcal{P}_t$ is always convex, and nested within the parameter set $\mathcal{P}_{t-1}$ recursively. 
Moreover, $\mathcal{P}_t$ is nonempty for all $t\in\mathbb{N}_+$ because the true parameter $\Theta^\star$ belongs to every $\mathcal{P}_t$. The key property of $\mathcal{P}_t$ is that for all $t$, any $\Theta_t \in \mathcal{P}_t$ could have generated the observed trajectory up to $x(t)$ and is equally likely to be the true system model. By construction, the observed state trajectory can be written as 
\begin{subequations}
\label{eq:consistency}
\begin{align}
    x(t) 
    &= A(\Theta^\star) x(t-1) + B(\Theta^\star)u(t-1) + w^\star(t-1)  \label{eq:true}\\
        &=  A({\Theta}_t) x(t-1) + B({\Theta}_t)u(t-1) + \tilde{w}(t-1), \label{eq:consistent}
\end{align}
\end{subequations}
where ${w^\star}(t)$ is the true disturbance and $\{\tilde{w}(k)\}_{t=0}^\infty$ is some admissible disturbance sequence such that $\|\tilde{w}(t)\|_\infty \leq W$. 
We say a model is \emph{consistent} with observations up to time $t$ if it belongs to $ \mathcal{P}_t$. Among all consistent models, we need to select a hypothesis model $\Theta_t$ in order to perform model based control. An ideal candidate is one that can remain inside of \emph{future} consistent parameter sets. To see why, consider an extreme case where the first selected parameter $ \Theta_1$ stays consistent for the entire online operation as we apply control actions generated based on $\Theta_1$.  
Since the consistent model \eqref{eq:consistent} generates the same trajectory as the true model \eqref{eq:true}, any guarantees that the model-based control policy has for $\Theta_1$ will manifest in the observation. Note $\Theta_1$ does not necessarily have to be close to $\Theta^\star$. {We remark that $\mathcal{P}_t$ is called the membership set in control literature \cite{bai1998convergence, akccay2004size}, where most work study the convergence properties of $\mathcal{P}_t$ to $\Theta^\star$ in the context of system identification given input-output data. In contrast, we construct $\mathcal{P}_t$ not to identify the system but to use it for downstrem control tasks in the online interactive setting.}

This intuition motivates us to select a $\Theta_t$ that could remain an element of the (yet unknown) future consistent parameter set. In particular, if the hypothesis model selected at a previous time is consistent for the current observation, we continue to use it. If the previous hypothesis model is invalidated by the new observation, then we want to select a new $\Theta_t$'s from the nested and convex body $\mathcal{P}_t$ with the objective of moving as little as possible for future bodies. This is an instance of NCBC introduced in \Cref{sec:NCBC}. 
The total path length cost function in NCBC formalizes a measure of \textit{model consistency} in our case: the less the a selector moves, the longer the selected points stay consistent overall. In \Cref{alg:centralized}, we select the Steiner point of $\mathcal{P}_t$ as the hypothesis model. The finite path length guarantee of Steiner point in \eqref{eq:steiner} 
can be interpreted as a finite budget for the adversarial disturbances: If the disturbances try to make the state norm large, then the selected (wrong) hypothesis model will be quickly invalidated thanks to the excitation from the disturbances. This will make CONSIST frequently re-select new hypothesis models. However, such inconsistent model selection has bounded occurrences due to the finite path length guarantee \eqref{eq:steiner} of the Steiner point, i.e., CONSIST gains information and stops moving eventually. 

\subsection{CONTROL: Model-based control with SLS}
After the selection of a hypothesis model $\Theta_t$ from the consistent parameter set, \Cref{alg:centralized} performs the SLS closed-loop response synthesis \eqref{eq:central-sls} and implementation \eqref{eq:central-controller} based on $\Theta_t$. We name this component of the algorithm CONTROL.

\subsection{Distributed implementation of \Cref{alg:centralized}}
\label{sec:distributed_implementation}
Per discussion in \Cref{sec:sls}, it is straight forward to see that \Cref{alg:centralized} can be implemented by each subsystem in a distributed fashion. In particular, in the CONSIST component, subsystem $i$ constructs a local consistent parameter set $\mathcal{P}^i_t$ based on the local observations generated from the local dynamics \eqref{eq:local_sys}. Subsystem $i$ then selects the Steiner point of $\mathcal{P}^i_t$ as its local hypothesis model $\theta^i_t$. In the CONTROL component, all subsystems collects the local hypothesis models from other subsystems and construct a global estimate $\Theta_t = (\theta^1_t,\ldots,\theta^N_t)$ since we assume no communication delay here. Based on $\Theta_t$, each subsystem synthesizes columns of $\bPX{t}$ and $\bPU{t}$ by solving the subproblems decomposed from \eqref{eq:central-sls}. After collecting and assembles the column solutions via instantaneous communication, each subsystem computes a disjoint subset of coordinates of $\hat{w}(t)$ and $u(t)$, corresponding to the positions of the local states $x^i(t)$ and input $u^i(t)$ in the global dynamics \eqref{eq:global_sys} respectively.

\subsection{Stability guarantee}
Our main results in this section is the following ISS guarantee for \Cref{alg:centralized}.
\begin{theorem}
    \label{thrm:global}
    Under Assumption \ref{assump:noise}-\ref{assump:controllable}, Algorithm \ref{alg:centralized} guarantees the stability of the closed loop of \eqref{eq:global_sys} in the sense of ISS such that for all $t \geq t_0$ 
    $$
         \max\{\|x(t)\|_\infty , \|u(t)\|_\infty\} \leq  \bigO \left( e^{{n_x}^{5/2}}\right) \cdot \left( e^{- (t-t_0)/H} x(t_0) + \sup_{t_0\leq k< t} \infnorm{w(k)} \right),
    $$
  where $x(t_0)$ is the initial condition, ${n_x}$ is the total state dimension of the global network \eqref{eq:global_sys}, and $H$ is the finite impulse response horizon for the SLS model-based control synthesis.
\end{theorem}
We remark that the decay factor $e^{-t/H}$ corroborates the fact that $H$ quantifies the controllability of the parameter set  $\mathcal{P}_0$. Intuitively, the smaller $H$ can be for the SLS synthesis \eqref{eq:synth} to be feasible, the easier the systems in the set can be learned and controlled. 
\begin{proof} 
The main idea of the proof is as follows. First, we characterize the closed loop dynamics of \eqref{eq:global_sys} under \emph{any} SLS controllers constructed with arbitrary linear causal operators (\Cref{lem:closed-loop}). We then relax the original SLS condition \eqref{eq:feasibility} in \Cref{thrm:SLS} to a sufficient condition for ISS of the closed-loop dynamics under bounded adversarial disturbances (\Cref{lem:sufficient}). Crucially, we show that the bounded path length property \eqref{eq:steiner} of the selected hypothesis models in \Cref{alg:centralized} implies the satisfaction of the sufficient condition for closed-loop stability. This implication is established through a novel perturbation analysis (\Cref{thrm:sensitivity}) of the SLS closed-loop response synthesis problem \eqref{eq:central-sls}. We defer the proofs of the helper lemmas used here to \Cref{sec:proof_global}. 

Specifically, we show that given arbitrary $\bPX{},\,\bPU{}$ with FIR horizon $H$, the closed-loop dynamics of \eqref{eq:global_sys} under an SLS controller constructed from $\bPX{},\,\bPU{}$ is characterized as follows.
\begin{lemma}[Closed-loop characterization]
    \label[lemma]{lem:closed-loop}
        The closed loop of \eqref{eq:global_sys} under Algorithm \ref{alg:centralized} is characterized as follows for all time $t \in \mathbb{N}$:
    \begin{subequations}
    \label{eq:cl}
        \begin{align}
            x(t)  &= \sum_{k=0}^{H-1} \PX{t}[k] \hat{w}(t-k), \quad
            u(t) = \sum_{k=0}^{H-1} \PU{t}[k] \hat{w}(t-k) \label{eq:cl-1}\\
            \hat{w}(t) &= \sum_{k=1}^H \left( A \PX{t-1}[k-1] + B \PU{t-1}[k-1] - \PX{t}[k] \right)\hat{w}(t-k) + {w}(t-1). \label{eq:cl-2}
        \end{align}
    \end{subequations}
    where $A,\,B$ are the true model parameters from \eqref{eq:global_sys} while $w(t)$ is the true unknown bounded disturbances with $\infnorm{w(t)} \leq W$. The linear causal operators $\bPX{t}$, $\bPU{t}$ are synthesized via \eqref{eq:central-sls} based on the selected hypothesis model at $t$ and $\hat{w}(t)$ is the estimated disturbance from the SLS controller \eqref{eq:central-controller}.  
\end{lemma}
This result generalizes \Cref{thrm:SLS} where we characterize the closed loop behaviour of SLS controllers constructed from \textit{any} linear casual operators, not necessarily those satisfying \eqref{eq:feasibility-1}. Under \Cref{alg:centralized}, we can further replace the true model in \eqref{eq:cl-2} with the selected hypothesis model (Steiner point of the consistent set) $\Theta_t$, i.e., 
$$\eqref{eq:cl-2} = \sum_{k=1}^H \left( A(\Theta_t) \PX{t-1}[k-1] + B(\Theta_t) \PU{t-1}[k-1] - \PX{t}[k] \right)\hat{w}(t-k) + \tilde{w}(t-1),$$ 
with admissible disturbances such that $\infnorm{\tilde{w}} \leq W$ due to the consistency property \eqref{eq:consistency} of $\Theta_t$. 

Moreover, \Cref{lem:closed-loop} leads to a simple sufficient condition for stability of the closed loops under any SLS controllers. To see this, we first argue that there exist constants that bound the decay rate of the closed loop responses synthesized from \eqref{eq:central-sls}.
In particular, due to the finite impulse response property imposed by \eqref{eq:feasibility-2} of the synthesized closed-loop responses, there always exists a large enough $C>0$ and $\rho \in (0,1)$ such that 
$$ \norm{\begin{bmatrix}\phi^{i,x}[k]\\ \phi^{i,u}[k]\end{bmatrix}}_F\leq C \rho^k \quad \text{for all closed-loop responses satisfying \eqref{eq:feasibility-2}}. $$
This property is commonly employed in SLS-based analysis \cite{dean2019safely,dean2020sample, fattahi2020efficient}. We use $C$ and $\rho$ for the sake of proof here and does not require the knowledge of them for Algorithm \ref{alg:centralized} to run.

With the decay property, according to \Cref{lem:closed-loop}, if $\infnorm{\hat{w}(t)} \leq \hat{W}_\infty$ for some $\hat{W}_\infty >0$, then we can bound the global state via \eqref{eq:cl-1} as follows, 
\begin{align*}
    \infnorm{x(t)}  &\leq \hat{W}_\infty \sum_{k=0}^{H-1} \infnorm{ \PX{t}[k] } \leq \hat{W}_\infty C^{1/2}{n_x}^{1/2} \frac{1}{1-\rho^{1/2}}.  \nonumber
\end{align*}
The bound on control input $\infnorm{u(t)}$ follows analogously. Therefore, the stability of the closed loop reduces to the boundedness of $\hat{w}(t)$ in \eqref{eq:cl-2}.
To show this, we prove the following.

\begin{lemma}[Sufficient condition for $H$-convolution ISS]
    \label[lemma]{lem:sufficient}
    Let $H \in \mathbb{N}_+$. 
    For $k \in [H]$, let $\left\{a_{t}[k]\right\}_{t=1}^\infty$ and $\left\{w_{t}\right\}_{t=0}^\infty$ be positive sequences.
    Let $\{s_t\}_{t=0}^\infty$ be a positive sequence such that
    \begin{align}
    s_t \leq \sum_{k=1}^H a_{t-1}[k]\cdot s_{t-k} + w_{t-1} \,.
    \end{align}
    Then $\{s_t\}_{t=0}^\infty$ is ISS if $\sum_{t=0}^\infty \sum_{k=1}^H a_{t}[k] \leq L$ for some $L \in \mathbb{R}_+$. In particular, for all $t\geq t_0$,
    \begin{align}
    \label{eq:iss-seq}
    s_t \leq e^{-(t-t_0)/H} \cdot e^L s_{t_0} + \frac{ \left( e^L + e -1 \right)}{e-1} \sup_{t_0\leq k< t} w_k \, .\end{align}
\end{lemma}

The above sufficient condition is suitable for analyzing dynamical evolution under adversarial inputs. Consider taking the norm on both sides of \eqref{eq:cl-2}. Then \Cref{lem:sufficient} is immediately applicable with $s_t = \infnorm{\hat{w}(t)} $, and 
\begin{equation}
    \label{eq:error-term}
    a_t[k] =\infnorm{ A(\Theta_t) \PX{t-1}[k-1] + B(\Theta_t) \PU{t-1}[k-1] - \PX{t}[k]} .
\end{equation}
Therefore, a sufficient condition for ISS of \eqref{eq:global_sys} under \Cref{alg:centralized} is the boundedness of \eqref{eq:error-term} summing over time $t\in \mathbb{N}_+$ and horizon $k\leq H$. This quantity represents the total error of the implemented closed-loop responses $\bPX{t}, \, \bPU{t}$ synthesized from the selected hypothesis dynamics model $\Theta_t$, with respect to the \textit{correct} closed-loop responses generated from the true model $\Theta^\star$. 

To bound \eqref{eq:error-term}, we make a crucial connection between the total path length of the Steiner point model selection in \Cref{alg:centralized} and \eqref{eq:error-term}. This is established via the following perturbation result for the SLS closed-loop response synthesis problem \eqref{eq:central-sls}, where the formal statement (\Cref{thm:mainsensitivity_global}) and proof is presented in Appendix \ref{sec:sensitivity}. 
\begin{theorem}[Informal, Perturbation bound]
    \label{thrm:sensitivity}
Let $\phi^\star(A,B) := [\mathbf{x}^{\star,\top}, \mathbf{u}^{\star,\top}]^\top$ denote the concatenated optimal solution to the following optimization problem
\begin{equation}
\label{eq:mpc}
\begin{aligned}
    &\min_{x,u}  &&
    \,\,\, \sum_{t = 0}^H x(t)^TQx(t) + u(t)^TRu(t)   \\
    &\text{s.t.} &&  x(t+1) = Ax(t) + Bu(t), \quad
     x(0) = x_0, \quad x(H) = 0 \, ,
\end{aligned}
\end{equation}
with $Q,R \succ 0$.
Let $(A_1,B_1)$ and $(A_2,B_2)$ be two system matrices such that \eqref{eq:mpc} is feasible. Then the corresponding optimal solutions $\phi^\star(A_1, B_1)$ and $\phi^\star(A_2, B_2)$ satisfy
\begin{align*}
\|\phi^\star(A_1, B_1) - \phi^\star(A_2, B_2)\|_F \leq
\Gamma \norm{\begin{bmatrix} A_1-A_2\\B_1-B_2\end{bmatrix}}_F \,,
\end{align*}
where $\norm{\phi^\star(A,B)}_F := \sum_{k=0}^H \norm{[x(k)^\top , u(k)^\top]}_F$. Constant $\Gamma>0$ involves the system theoretical quantities for $A_1, A_2, B_1, B_2, Q,R$.
\end{theorem}
The quadratic program \eqref{eq:mpc} corresponds to the column-wise decomposed subproblems of the SLS closed-loop response synthesis \eqref{eq:central-sls}. Therefore, \eqref{eq:error-term} can be bounded as follows.
\begin{align*}
     \eqref{eq:error-term} &= \infnorm{ A(\Theta_t) (\PX{t-1}[k-1] - \PX{t}[k-1]) + B(\Theta_t) (\PU{t-1}[k-1] -\PU{t}[k-1])}\\
     &\leq {2n_x} \kappa \norm{\begin{bmatrix}  \PX{t-1}[k-1] - \PX{t}[k-1] \\ \PU{t-1}[k-1] -\PU{t}[k-1]\end{bmatrix}}_F,
\end{align*}
where the equality is due to the constraint \eqref{eq:feasibility} during the model-based control step in Line \ref{algoline:model-based} of \Cref{alg:centralized}. The inequality invokes Assumption \ref{assump:compact}. Finally we show the total error summing \eqref{eq:error-term} over all time step $t$ and horizon $k \leq H$ is bounded by the total path length of the selected hypothesis models via the Steiner point.
\begin{align}
    \sum_{t=0}^\infty \sum_{k = 1}^H \eqref{eq:error-term} &\leq {2n_x \kappa}  \sum_{t=0}^\infty \sum_{k = 1}^H \norm{\begin{bmatrix}  \PX{t-1}[k-1] - \PX{t}[k-1] \\ \PU{t-1}[k-1] -\PU{t}[k-1]\end{bmatrix}}_F \nonumber\\
    &\leq {2n_x^{3/2} \kappa \Gamma} \sum_{t=0}^\infty \norm{ \Theta_{t-1} - \Theta_t}_F \quad \leq {2n_x^{5/2} \kappa \Gamma} \text{diam}(\mathcal{P}_0), \label{eq:total-error}
    \end{align}
where we use \Cref{thrm:sensitivity} for the second inequality and the total path length bound \eqref{eq:steiner} of the Steiner point selector for the last inequality. Finally, we plug the total bound \eqref{eq:total-error} in \eqref{eq:iss-seq} for an ISS bound on $\hat{w}(t)$, which gives the desired state and control input bound in \Cref{thrm:global}.
\end{proof}

\begin{remark}
    NCBC algorithms other than the Steiner point selector can be substituted in \Cref{alg:centralized} as long as the finite path length guarantee \eqref{eq:steiner} holds. Therefore, we can use a more computationally efficient algorithm with respect to the number of constraints in \eqref{eq:central-set}, such as greedy projection, at the expense of a larger worst-case path length bound. 
Such trade-off is potentially important since the number of constraints in \eqref{eq:central-set} grows linearly with time. A topic of continuing work is to find an efficient representation of \eqref{eq:central-set} that does not involve linear growth in the number of constraints.
\end{remark}

\paragraph{Comparison of \Cref{thrm:global} with previous results} 
Compared to the state-of-art system identification-based algorithm for online control under adversarial disturbances given in \cite{chen2021black}, which induces $\Omega(2^n)$ state and control input norm, our algorithm also incur state norms that are exponential-polynomial in the global dimension. However, our bound is a worst-case guarantee which is on average not achieved during deployment. On the other hand, the exponential bound in \cite{chen2021black} is qualitatively obtained, since system identification-based methods require full excitation of the system despite adversarial disturbances \cite[Lemma 14]{chen2021black}. This is the reason behind the orders of magnitude of performance improvement of our algorithm over system identification-based methods observed in the numerical study shown in \Cref{fig:sysid_main}. 

\paragraph{Comparison of \Cref{thrm:sensitivity} with previous results} The Lipschitz continuity of optimal control problems, similar to \eqref{eq:mpc}, has been investigated in learning-based LQR literature, e.g., \cite{tu2019gap,faradonbeh2020optimism}. However, our perturbation result \Cref{thrm:sensitivity} (formal statement in \Cref{thm:mainsensitivity_global}) is with regard to a finite-horizon quadratic program with terminal state constraints, whereas previous Lipschitz continuity analysis is performed with respect to the infinite-horizon LQR optimal gain. As a result, we use a different set of tools from matrix theory, unlike the Riccati equation (value function) based analysis for infinite-horizon LQR problems in previous works. In \Cref{sec:distributed}, we further generalize the perturbation result to handle sparsity constraints.

\section{Adversarial stabilization with information constraints}
\label{sec:distributed}
The implementation of \Cref{alg:centralized} assumes that each subsystem has instantaneous access to the information from other subsystems, such as the local consistent hypothesis models, and the column solutions to the subproblems decomposed from \eqref{eq:central-sls}. Such instantaneous information sharing is often unrealistic in large-scale networked control systems. Therefore, in this section we extend the presentation in \Cref{sec:centralized} to a fully distributed algorithm, shown in \Cref{alg:main}, that for the first time guarantees the stability of unknown interconnected LTI systems with information constraints under bounded adversarial disturbances. These results are the main contributions of the paper. 

Specifically, we consider two classes of information constraints, namely \emph{communication delay} and \emph{local information}, which we define formally below. After defining these information constraints, we describe the adjustments to \Cref{alg:centralized} and present our main result.

\subsection{Communication delay} 
\label{sec:comm_constraints}
A key feature of large-scale networked systems is that information observed locally at each subsystem cannot be immediately available to the global network. Instead, information sharing among subsystems is constrained by communication limitations.  
Such limitations often lead to delayed partial observation and pose further challenges for learning-based algorithm design \cite{ye2021sample, li2021distributed, alonso2021data}. 
To formalize the communication constraints, we define a communication graph $\mathcal{G}^C=(V^C, E^C)$ for \eqref{eq:global_sys}, where $V^C = [N]$ and $E^C$ is the set of directed communication link from one subsystem to the other. Self-loops at all vertices are included in  $E^C$ and they represent zero delay. The communication graph is demonstrated by the solid blue lines in Figure \ref{fig:example}. We use $\mathcal{C} \in \{1,0\}^{N\times N}$ to denote the adjacency matrix associated with the communication graph $\mathcal{G}^C$. 
Moreover, we define the information delay induced by $\mathcal{G}^C$ as follows. 
\begin{definition}[Information delay]
\label{def:delay}
    The information delay from subsystem $i$ to $j$ is defined to be the total distance of the shortest path from $i$ to $j$ according to $\mathcal{G}^C$ and is denoted as $d(i\rightarrow j)$.
\end{definition}

Globally, the $k$th power of the adjacency matrix $\mathcal{C}^k$ has nonzero $(i,j)$th entry if subsystem $i$ gets $k$-delayed information from subsystem $j$.
Locally, at time step $t$, subsystem $i$ has access to subsystem $j$'s full information up to time $t-d(j \rightarrow i)$. Moreover, $d(j\rightarrow i)$ is the smallest integer such that $\mathcal{C}^{d(j\rightarrow i)} (i,j) \not =0$. With slight abuse of notation, we write $\mathcal{C}^k$ to mean the support of the matrix so $\mathcal{C}^k\in \{1,0\}^{N\times N}$. 
\begin{example}
    Consider the system in Figure \ref{fig:example} where the solid blue line denotes the communication among subsystems. The adjacency matrix $\mathcal{C}$ is depicted in Figure \ref{fig:C}.
    Observe that $\mathcal{C}(1,3) =0 $ but $\mathcal{C}^2(1,3) \not = 0$. Therefore, the delay from subsystem 3 to subsystem 1 is $d(3 \rightarrow 1) = 2$. 
\end{example}

Given $\mathcal{G}^C$, we make a mild assumption on the communication delay. 
This assumption ensures that the graph describing the global dynamics is a subgraph of the communication graph. Such an assumption ensures nested information structure \cite{ho1972team} and is commonly adopted \cite{lamperski2015optimal,ye2021sample}. It holds true for systems where communication operates at least as fast as the dynamical propagation.
\begin{assumption}[Communication Topology]
    \label{assump:comm}
     $\mathcal{C}(i,j) = 1$ for all $j \in \mathcal{N}(i)$.
\end{assumption}

The communication delay model considered here is well-established in the distributed control literature \cite{lamperski2015optimal, shah2013cal, ye2022regret} and is applicable to many engineering systems \cite{shi2012robust, ma2015stabilization}.  We refer interested readers to \cite{rotkowitz2008information} for a detailed discussion on information structures and their consequences for distributed control design. While we specify the communication delay to be synchronous with the discrete time dynamics propagation for ease of exposition, our results can be readily applied to systems with faster communication than the dynamics propagation.



\subsection{Local information}
\label{sec:localized_control}
Even though communication delay causes asynchronous partial information for each subsystem, eventually each subsystem can obtain the delayed global information. 
However, due to the scale of the global network, it can be prohibitively costly for subsystems to compute their local control actions using such delayed global information. Moreover, a larger delay between subsystems means, intuitively, that they are more dynamically decoupled due to Assumption \ref{assump:comm}. Therefore, by discarding information from far-away subsystems, each subsystem has a smaller and more up-to-date information set. 
A common approach is to require each subsystem  $i$ to only use delayed information from a local neighborhood.
In this work, we define three neighborhoods, $\din{i}$, $\dout{i}$, and $\M{i}$ that subsystem $i$ is allowed to access information from.
This is sometimes referred to as localized control in multi-agent reinforcement learning \cite{qu2020scalable,lin2020distributed,qu2020bscalable} and distributed control \cite{alonso2021data, wang2018separable} as a method for ensuring a scalable implementation of the control policy in large-scale networked systems. 
Below we define each of the neighborhoods.

\begin{definition}[$\bar{d}$-incoming/outgoing neighbors] The $\bar{d}$-incoming and outgoing neighbors of subsystem $i$ according to $\mathcal{G}^C$ are respectively
\label{def:neighbors}
    $$\din{i}=\{j\in[N]: d(j\rightarrow i )\leq \bar{d}\}\,, \quad \dout{i}=\{j\in[N]: d(i\rightarrow j )\leq \bar{d}\}\,.$$
\end{definition}
The localization parameter $\bar{d}$ is a design choice that is network structure dependent. Here we focus on the cases where the dynamics topology and communication graph have sparse enough edges that the network structure can be leveraged to design a localization parameter $\bar{d}$ (given) that is much smaller than the size of the global network and scales well with the number of subsystems. 

\begin{definition}[$\bar{d}$-interaction neighbors] 
\label{defn:M}
The $\bar{d}$-interaction neighbors of subsystem $i$ according to local interaction \eqref{eq:local_sys} and $\mathcal{G}^C$ is defined as
    $$\M{i} = \left\{ \ell \in [N]: j \in \N(\ell) \text{ for some }j \in \dout{i} \right\}.$$
\end{definition}
The intuition behind $\M{i}$ is that any subsystem $\ell \in \M{i}$ is dynamically influence by subsystem $j$ because $j \in \N(\ell)$. Furthermore, $j$ makes local decisions such as $u^j(t)$ based on the information from subsystem $i$ because $j \in \dout{i}$. Therefore, it is sensible for subsystem $i$ to take the information from $\ell$ into consideration during decision making, since $\ell$ will be indirectly affected by decisions made at $i$ through information sharing and dynamical interaction via $j$.

Finally, we make the following feasibility assumption. 
\begin{assumption}[Feasibility]
    \label{assump:feasibility}
    For all $\Theta \in \mathcal{P}_0$, there exists a stabilizing controller for $A(\Theta), B(\Theta)$ such that each agent with local dynamics \eqref{eq:local_sys} uses delayed and locally available information from its $\bar{d}$-interaction, incoming, and outgoing neighbors according to $\mathcal{G}^C$.
\end{assumption}
\Cref{assump:feasibility} ensures the well-posedness of the distributed controller learning problem and is commonly employed  \cite{ibrahimi2012efficient, abbasi2011regret, li2021safe}. If a parameter set $\mathcal{P}_0$ has a few singular points where $(A,B)$ loses feasibility such as when $B=0$, a simple heuristic is to ignore these points in the algorithm since we assume the underlying system is controllable. We discuss the case of nonconvex parameter sets in \Cref{sec:convex}.

\subsection{A fully distributed and localized algorithm}
We now describe how to extend \Cref{alg:centralized} to handle communication delay and localized control constraints.  To do this we add additional information exchange steps to \Cref{alg:centralized} in each of the two components. The full algorithm is shown in \Cref{alg:main}. For ease of exposition, we let the subsystems have scalar state and fully actuated control actions ($n_x=n_u=N$) in order to minimize notation. It is straight-forawrd to generalize the presented algorithm and analysis to vector subsystems.
\begin{algorithm2e}[t]
    \LinesNumbered
    \DontPrintSemicolon
    \SetNoFillComment
    \KwIn{Parameter set $\mathcal{P}_0$}
    \KwInit{$t=0$, $u(0)=0$, $\mathcal{I}(i,0)=\emptyset$ for $i \in [N]$}
    \For{$t = 1,2,\dots$ }{
        \For{ Subsystem $i = 1,2,\dots,N$}{
            Observe $x^i(t)$ \label{algoline:observe} \\
            \tcc{CONSIST: Select consistent models}
             Construct $\pit$ with \eqref{eq:local-chasing}\;
 \lIf{$\theta^i_{t-1} \in \pit$}{$\thit \leftarrow \theta^i_{t-1} $ \label{line:consist-3}}
            \lElse
            {$\thit \leftarrow$ $\text{St}(\pit)$ } 
            \tcc{CONTROL: Perform model-based control with SLS}
            Assemble local estimate of the global model $A\left(\hat{\Theta}^i_t\right),B\left(\hat{\Theta}^i_t\right)$ with \eqref{eq:delayed-local-model} \label{algoline:assemble}\\
            Synthesize closed-loop response columns $\colit$ using  \eqref{eq:synth} based on $A\left(\hat{\Theta}^i_t\right),B\left(\hat{\Theta}^i_t\right)$ \label{algoline:local-synth}\;
            Assemble delayed local column solutions $\bigcup_{j\in \din{i}} \coljt$ \label{algoline:assemble2}\;
            Compute local control action $u^i(t)$ using \eqref{eq:local-controller} with the assembled column solutions \label[algorithm]{algoline:control}     
        }
    }
    \caption{Distributed online stabilization under information constraints}
    \label[algorithm]{alg:main}
\end{algorithm2e}

\subsubsection{CONSIST}
\label{sec:local_select}
This component of \Cref{alg:main} is identical to that of the distributed implementation of \Cref{alg:centralized} discussed in \Cref{sec:distributed_implementation}. Formally, subsystem $i$ constructs the \textit{local consistent parameter set}, $\pit$ according to local dynamics \eqref{eq:local_sys} as 
\begin{align}
    \label{eq:local-chasing}
    \pit:= \left\{\theta^i \in \mathcal{P}^i_{t-1} \,:\,  \left\|x^i(t) - \left(\sum_{j\in \N(i)}  A^{ij}(\theta^i) x^j(t-1) +  B^{ij}(\theta^i)u^j(t-1) \right) \right\|_\infty \leq W \right\}
\end{align}
with $\mathcal{P}_0^i$ as the local initial parameter set defined in Assumption \ref{assump:compact}. The communication delay pattern allows the construction of $\pit$ because each subsystem $i$ precisely has access to $x^j(t-1)$ and $u^j(t-1)$ from its immediate dynamical interaction neighbors $\mathcal{N}(i)$ by Assumption \ref{assump:comm}. 

Analogous to \Cref{alg:centralized}, each subsystem $i$ selects the Steiner point of $\pit$ as the \emph{local hypothesis model} if the previous selection is invalidated by the latest observation.

\subsubsection{CONTROL}
\label{sec:subcontroller}
Since the local hypothesis models are no longer shared instantly among subsystems due to the communication delay and local information constraints, we modify the model-based control component of \Cref{alg:centralized} and carefully keep track of the available information. To give an overview, at every step $t$, subsystem $i$ first assembles a local estimate of the ``global'' model using delayed information from other subsystems (line \ref{algoline:assemble}). Based on the estimated global model, subsystem $i$ synthesizes the $i$th column of the SLS closed-loop responses by solving the column subproblem of \eqref{eq:central-sls} as discussed in \Cref{sec:distributed_implementation} (line \ref{algoline:local-synth}). Then, subsystem $i$ assembles a local SLS controller with the local column solutions $\colit$ computed from the previous step and the delayed column solutions from other subsystems (line \ref{algoline:assemble2}). Finally, the local control action is computed using the locally assembled SLS controller \eqref{eq:local-controller} (line \ref{algoline:control}).

\textbf{Local estimate of the global model (line \ref{algoline:assemble}). }
After selecting a local hypothesis model, Subsystem $i$  assembles a local estimate of the ``global'' parameter by collecting the available (delayed) local hypothesis models from its neighbors in $\M{i}$,
\begin{equation}
    \label{eq:delayed-local-model}
\hat{\Theta}^i_t := \left( {\theta}^j_{t-d(j\rightarrow i)}\right)_{j\in \M{i}},
\end{equation}
where the local neighborhood $\M{i}$ (\Cref{defn:M}) represents the the set of neighbors whose model information $i$ needs for synthesizing its local column solution later in \eqref{eq:synth}. 

\textbf{Local column synthesis (line \ref{algoline:local-synth}). }
Analogous to line~\ref{algoline:central-sls-1} in \Cref{alg:centralized}, subsystem $i$ now performs model-based control via SLS by solving the {column} subproblem \eqref{eq:central-column} with additional communication delay and local information constraints based on the locally estimated ``global'' parameter $\Thit$. It is well-established that information constraints described in \Cref{sec:comm_constraints,sec:localized_control} becomes convex sparsity constraints on $\bPX{}$ and $\bPU{}$ \cite{wang2019system}. In particular, these information constraints can be represented as binary matrices $\mathcal{C}^k$ (for delay) and $\mathcal{C}^{\bar{d}}$ (for local information) with $k \in [H]$. Now, the column subproblem for subsystem $i$ changes from \eqref{eq:central-column} to 
\begin{subequations}
    \label{eq:synth}
    \begin{alignat}{3}
        &\min_{\colitx, \, \colitu}  && \quad
        \sum_{k=0}^\infty\left\|
\begin{bmatrix} Q^{1/2} & 0\\ 0& R^{1/2}\end{bmatrix} \begin{bmatrix} \phi^{i,x}_t[k] \\ \phi^{i,u}_t[k] \end{bmatrix} \right\|_F \label{eq:sls-cost} \\
        &\text{s.t.} && \lcolitx[k+1] = \A{\Thit}\lcolitx[k] +  \B{\Thit}\lcolitu[k]\,, \quad \text{ for } k = 0,1,\dots,H-1\label{eq:characterization2}\\
        & &&   \lcolitx[0] = e_i, \quad \lcolitx[H]=0 \label{eq:characterization1}\\
        & && \lcolitx[k],\,\, \lcolitu[k] \,\in \,\mathcal{C}^k(:,i) \cap  \mathcal{C}^{\bar{d}}(:,i)\,,\quad \text{ for } k = 0,1,\dots,H-1. \label{eq:comm-constraint}
    \end{alignat}
\end{subequations}
where \eqref{eq:sls-cost}-\eqref{eq:characterization2} are the same LQR cost and closed-loop response characterization in \eqref{eq:central-column}. The communication and local information constraints are introduced via \eqref{eq:comm-constraint}. We refer interested readers to \cite{wang2016localized, yu2021localized} for a standard derivation on how \eqref{eq:comm-constraint} is equivalent to the information constraints specified in \Cref{sec:comm_constraints,sec:localized_control}. The problem \eqref{eq:synth} is always feasible due to Assumption \ref{assump:controllable} and \ref{assump:feasibility}. 

Delay in the local parameter information results in differently synthesized columns of different $\bPX{}, \bPU{}$ for different subsystems. This contrasts \Cref{alg:centralized} where all subsystems use the same global model as input to the local synthesis problems and output a column of the same $\bPX{}, \bPU{}$. 

\textbf{Asynchronous closed-loop response assembly (line~\ref{algoline:assemble2}). }
Once local closed-loop columns are synthesized, subsystem $i$ has to assemble other relevant columns from subsystem $j$ from $\din{i}$ in order to perform the downstream task of local control action computation via the local version of the SLS controller \eqref{eq:central-controller}, shown in \eqref{eq:local-controller}. In particular, \eqref{eq:local-controller} requires the $i^{\text{th}}$ element of every column $j$ such that $\mathcal{C}^{\bar{d}}(i,j) \not = 0$. By definition, $\din{i}$ (\Cref{def:neighbors}) is the set of $j$'s such that $\mathcal{C}^{\bar{d}}$ has nonzero $(i,j)^{\text{th}}$ element. Thus, only closed-loop columns from $j \in \din{i}$ are required. The assembled closed-loop responses for each subsystem has asynchronous columns with varying delays. 

\textbf{Local Control Action Computation (line~\ref{algoline:control}). }
The final step in CONTROL is to compute a local control action, where each subsystem $i$  plugs the assembled closed-loop responses into the SLS controller \eqref{eq:central-controller}. Due to the sparsity constraints (from information constraints) enforced on the column solutions during the synthesis \eqref{eq:synth}, the matrix-vector computation in \eqref{eq:central-controller} does not require the entire network's delayed column solution. Instead, subsystem $i$ computes a local version of \eqref{eq:central-controller},
\begin{subequations}
    \label{eq:local-controller}
        \begin{align}
        \hat{w}^i(t) &= x^i(t) - \sum_{j \in \din{i}}\sum_{k = 1}^{H-1} \lcoljtx[k](i)\cdot \hat{w}^j(t-k)  \label{eq:sls-1}\\
        u^i(t) &=  \sum_{j \in \din{i}}\sum_{k = 0}^{H-1}  \lcoljtu[k](i) \cdot \hat{w}^j(t-k), \label{eq:sls-2}
        \end{align}
\end{subequations}
where $x^i(t), u^i(t), \hat{w}^i(t) \in \mathbb{R}$ are the local state, control action, and estimated disturbance respectively. The local controllers are initiated with $\hat{w}^i(0) = x^i(0)$.
Similar to the global controller \eqref{eq:central-controller}, the intuition behind \eqref{eq:local-controller} is that each subsystem $i$ \textit{counterfactually} 
assumes that the global closed loop of \eqref{eq:global_sys} behaves exactly as the columns $\coljt$ prescribe. In particular, the $i^{\text{th}}$ position of the $j$th column solution $\bm{\lcoljt}$ maps the $j^{\text{th}}$ position of $\mathbf{w}$ ($\mathbf{w}^j$) to the $i^{\text{th}}$ position of $\mathbf{x}$ and $\mathbf{u}$ ($\mathbf{x}^i$ and $\mathbf{u}^i$). 
Therefore, \eqref{eq:sls-1} estimates the local disturbances by comparing  observed local state $x^i(t)$ and the counterfactual state computed with $\coljt$'s. Then \eqref{eq:sls-2} acts upon the computed disturbance.  

In this step, the errors caused by the delayed information propagate further during \eqref{eq:local-controller} when each subsystem computes control action using the assembled closed-loop column solutions from different sets of sub-controllers in \eqref{eq:delayed-local-model}. 
This contrasts the setting in \Cref{alg:main}, where without communication delay, all subsystems use the globally agreed closed-loop operators $\bPX{}$,$\bPU{}$ to compute the local control action using \eqref{eq:central-controller}. 

Thanks to \eqref{eq:comm-constraint}, regardless of the delay, all closed-loop columns has the correct sparsity required by the communication and locality constraints. 
Consequently, any assembled closed loop columns used for \eqref{eq:local-controller} at each subsystem preserve the required sparsity. Therefore, the SLS controller implemented with these column solutions conforms to the information constraints.

\subsection{Stability guarantee}
\label{sec:distributed_stability}
We now present the main result of this paper.  This is the first stabilization result for a distributed policy (\Cref{alg:main}) in a networked setting with unknown dynamics, communication delay, local information constraint and adversarial disturbances.
\begin{theorem}[Stability]
    \label[theorem]{thrm:main}
    Under Assumptions \ref{assump:noise}-\ref{assump:feasibility}, Algorithm \ref{alg:main} guarantees the ISS of the closed loop of \eqref{eq:global_sys} such that for all $t \geq t_0$, 
    $$ \max\{\|x(t)\|_\infty , \|u(t)\|_\infty\} \leq  \bigO \left( e^{(\bar{n})^{9/2}\bar{d}}\right) \left( e^{- (t-t_0)/H} x(t_0) + \sup_{t_0\leq k\leq t} \infnorm{w(k)} \right) \,,$$
    where $x(t_0)$ is the initial condition, local dimension $\bar{n} = \max\{\|\mathcal{C}^{\bar{d}} \|_1,\, \|\mathcal{C}^{\bar{d}} \|_\infty,\, \max_j \abs{\M{j}} \} $ represents the total state dimension in the $\bar{d}$-neighborhood specified by the dynamics interaction \eqref{eq:local_sys} and the communication graph $\mathcal{G}^C$. Parameter $\bar{d}$ is the largest local delay each subsystem allows for delayed information, and $H$ is the SLS closed-loop response finite impulse horizon.
\end{theorem}
\Cref{thrm:main} highlights that only the local constants $\bar{d}$ and $\bar{n}$ impact the stability guarantee, in contrast to the dependence on the global network dimension in \Cref{alg:centralized} and in system-identification based approaches \cite{chen2021black}.  Further, the result makes explicit that communication delay adds an exponential factor of error on the state deviation from the desired steady state compared to \Cref{thrm:global}. 
When the network connectivity is sparse, local constants $\bar{n}$ and $\bar{d}$ can remain small even if the number of subsystems in the network is large and growing \citep{yazdanian2014distributed, wang2016localized}.  

\textbf{Proof Outline.} The proof of \Cref{thrm:main} follows a similar structure as that of \Cref{thrm:global}. We defer formal proofs to \Cref{sec:proof}. The main challenge here is to characterize the error caused by asynchronous information at different subsystems throughout the algorithm due to delay. 

To begin, we use \Cref{lem:closed-loop} and show that despite the fact that each subsystem in Algorithm \ref{alg:main} uses differently delayed information to compute the local parameter, sub-controller, and control actions, the closed loop for the global system under such distributed policy can be characterized with a simple global representation. In particular, denote the actual closed-loop response implemented by \Cref{alg:main} as $\bPX{t}$, $\bPU{t}$. By observation, each element of $\PX{t}[k]$, $\PU{t}[k]$ is
$$\PX{t}[k](i,j) := \lcoljtx[k](i), \quad \PU{t}[k](i,j) := \lcoljtu[k](i) \,.$$
Therefore, the closed loop of \eqref{eq:global_sys} under \Cref{alg:main} can be characterized by \eqref{eq:cl} with $\bPX{t}$, $\bPU{t}$. It follows from \Cref{lem:sufficient} that as long as the error term 
    \begin{align}
    \label{eq:error-2}
        \sum_{t=1}^\infty  \sum_{k=1}^H \infnorm{ A(\Theta_t) \PX{t-1}[k-1] + B(\Theta_t) \PU{t-1}[k-1] - \PX{t}[k]}  \, 
    \end{align}
is bounded, then the closed loop is ISS. Here $\Theta_t$ is the consistent global model constructed from the \emph{local} consistent hypothesis models selected by all subsystems at time $t$. In \Cref{sec:proof}, we quantify the effect of delay that manifests in $\bPX{t}$ and $\bPU{t}$.

To bound \eqref{eq:error-2}, we extend the perturbation bound in \Cref{thrm:sensitivity} to accommodate the additional sparsity constraints in \eqref{eq:synth} (\Cref{cor:sls-sensitivity}). This result allows us to make a connection between \eqref{eq:error-2} and the total path length of each subsystem;s local parameter selection. Furthermore, \Cref{cor:sls-sensitivity} has potential application for a class of SLS-based distributed and localized MPC problems \cite{alonso2021data, sieber2021system}.

\section{Numerical examples}
\label{sec:simulation}
The main contribution of this work focuses on deriving a stability guarantee for the proposed method under adversarial disturbances and information constraints. In this section, we provide a preliminary numerical exploration of the performance improvement of our approach compared the state-of-the-art adversarial control method in the single-agent case in \Cref{sec:single-agent}. We further test our method on a mesh network of discretized swing dynamics for power systems, where we demonstrate near-optimal performance of \Cref{alg:centralized} and \Cref{alg:main} compared to the offline optimal controller synthesized according to the true dynamics in \Cref{sec:multi-agent}. Further, we study the effect of the localization parameter and the network size under correlated Gaussian noise. 
\subsection{Single-agent: Double integrator dynamics}
\label{sec:single-agent}

We consider the classic double integrator dynamics \cite{recht2019tour}, 
\begin{equation*}
    \begin{bmatrix} x^1 \\  x^2 \end{bmatrix}(t+1) = \begin{bmatrix} 1 & 1 \\ 0 & 1\end{bmatrix} \begin{bmatrix} x^1 \\  x^2 \end{bmatrix}(t) + \begin{bmatrix} 0 \\ 1\end{bmatrix} u(t) +  \begin{bmatrix} w^1 \\  w^2 \end{bmatrix}(t),
\end{equation*}
where $x(t) = [x^1, x^2]^T(t) \in \mathbb{R}^2$, $u(t) \in \mathbb{R}$.  Disturbance $w(t)\in \mathbb{R}^2$ is the bounded ($\infnorm{w(t)} \leq 1$). The system models a unit mass vehicle with position ($x^1$) and velocity ($x^2$) as its state under force $u$. 

To the best of our knowledge, the only online algorithm that guarantees stability under bounded adversarial disturbances is \cite{chen2021black}, where system identification is performed before a certainty-equivalent controller is synthesized based on the estimated dynamics. Therefore, we study the performance of our algorithm and that of \cite{chen2021black}. The results are summarized in \Cref{fig:sysid_main}, where we report the averaged maximum and top $90\%$ state deviation from origin, i.e. $ \max_t \|x(t)\|_\infty$ across 10 runs under three different disturbance profiles. In particular, we generate correlated (across coordinates) Gaussian noise projected to $-1$ and $1$, the uniform disturbance, and the projected state-dependent adversarial disturbance, where the adversary chooses $w(t) = \text{sign}\left(A(\Theta^\star) x(t) + B(\Theta^\star) u(t)\right)$.
 
To instantiate \cite{chen2021black}, we use exact system theoretical constants required for the algorithm and perform the black-box system identification algorithm in \cite[Algorithm 2]{chen2021black} with identification accuracy set to be $10^{-2}$ (largest error tolerable by the algorithm). Then, we generate a stabilizing controller with \cite[Algorithm 3]{chen2021black}. For the proposed approach, we use the optimal LQR feedback gain in place of the centralized SLS controller \eqref{eq:central-sls} and \eqref{eq:central-controller}, since under Assumption \ref{assump:controllable}, the SLS controller synthesized under with LQR cost is equivalent to the optimal LQR feedback \cite{anderson2019system}.  
We remark that for all disturbance profiles and regardless of the choice of stabilizing controller, the system identification algorithm of \cite{chen2021black} always requires control inputs in the order of $10^{11}$. Therefore, across all disturbances, the trajectories generated by \cite{chen2021black} are nearly identical.

\subsection{Multi-agent:Discretized swing dynamics in a power system}
\label{sec:multi-agent}
We now consider a power network with randomly generated sparse edges representing dynamical interactions over a 5 by 5 mesh, where each vertex represents a bus, illustrated in \Cref{fig:mesh}. The local dynamics at bus $i$ is given by the two-state discretized swing equations \cite{anderson2019system},
\begin{align*}
    &x^i(t+1) = \begin{bmatrix}
        1 & \Delta t \\ -\frac{\sum_{j\in \mathcal{N}(i)} k_{ij}}{m_i} \Delta t & 1 \end{bmatrix} x^i(t) + \sum_{j\in\mathcal{N}(i)}\begin{bmatrix}
        0 & 0 \\ -\frac{k_{ij}}{m_i} \Delta t& 0 \end{bmatrix} x^j(t) + \begin{bmatrix} 0\\1\end{bmatrix} \left(u^i(t) + w^i(t)\right)
\end{align*}
where the states are the phase angle (first state) and frequency (second state) deviation from the set point (origin), $\Delta t = 0.1$s is the discretization time step, and $m_i, \, k_{ij},\, u^i,\, w^i, $ are the inertia, line susceptance between bus $i$ and $j$, control action, and external disturbance respectively.   We assume each bus has a phase measurement unit and a frequency sensor to measure $x^i$.

We randomly generate each $k_{ij}$ between $[0.1,\,1]$ and $m_i$ between $[0.1,\,10]$, and assume these parameters are unknown to the algorithm except their bounds. The global network is generated to be open-loop unstable. We use correlated (across buses) Gaussian disturbances with a known bound. In \Cref{fig:swing} we compare the performance of \Cref{alg:centralized} (information shared globally and without delay), \Cref{alg:main}, and the offline optimal distributed SLS controller synthesized from \eqref{eq:central-sls} with the knowledge of $k_{ij}$'s and $m_i$'s, all subject to the same distributed control design requirements. Specifically, the communication network is assume to be the same as the dynamical interaction mesh graph, and we choose the localization parameter to be $\bar{d}=3$, which is much smaller compared to the network size of 25. The centralized algorithm where no communication delay is present matches closely with the trajectory generated by the offline optimal controller, whereas the presence of the information constraints for \Cref{alg:main} degrades the performance. However, we highlight that despite the exponential dependency on the local dimensions in \Cref{thrm:main}, the actual performance of \Cref{alg:main} in this case is significantly better than the theoretical guarantee. 

Furthermore, we compare the effects of different localization parameter choices. On the one hand, larger $\bar{d}$ results in larger worst-case guarantee in \Cref{thrm:main} due to delayed information for local computation. On the other, larger $\bar{d}$ means that each agent in the network can access more (delayed) information. This trade-off manifests on the left of \Cref{tab:varying-d}, where $\bar{d}=5$ appears to achieve lower average state norm over 4 random runs with correlated Gaussian noises, slightly outperforming controllers with $\bar{d}=3$ (too little information) and $\bar{d}=10$ (too much delay from far-away neighbors). On the right of \Cref{tab:varying-d}, we corroborate \Cref{thrm:main} where the stability guarantee only depends on local constants $\bar{d}$ and $\bar{n}$. We randomly generate 3x3, 5x5, and 6x6 mesh networks of similar network structure, and the resulting state norm does not scale with the network size.

\begin{figure}
\centering     
\subfigure[5 by 5 mesh network]{\label{fig:mesh}\includegraphics[scale = 0.35]{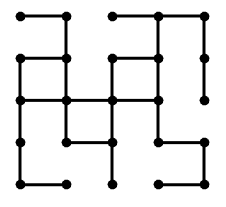}} \quad \quad 
\subfigure[State trajectory of the optimal distributed controller, Algorithm \ref{alg:centralized}, \ref{alg:main}.]
{\label{fig:swing}\includegraphics[scale = 0.32]{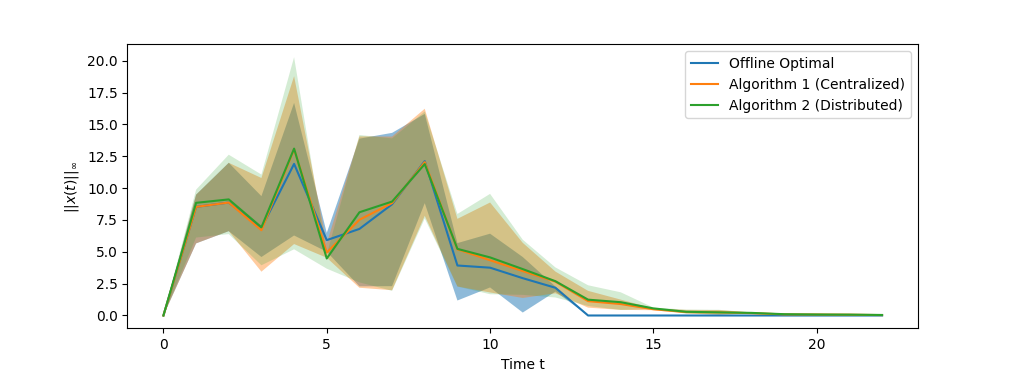}}
\caption{Example networked LTI system with information constraints.}
\end{figure}

\begin{table}
  \caption{Comparison of the state norm ($\infnorm{x(t)}$) for different localization parameters $d$ on the 5 by 5 network (left) and comparison for different network sizes with $N$ agents with fixed localization parameter $d=3$ (right).}
  \label{tab:varying-d}
  {\small \begin{tabular}{ccc|ccc}
    \toprule
    Localization parameter & Mean & Top 95\% &Network Size & Mean & Top 95\%\\
    \midrule
    $\bar{d}=3$ & $3.98$ & $14.02$ & $N=9$ & 2.96 &  10.28 \\
    $\bar{d}=5$ & $3.85$  & $  14.18$ &  $N=25$ & 3.98  &  14.02\\
    $\bar{d}=10$ & 4.19 &    14.08 & $N=36$ &4.27 &   14.05\\
  \bottomrule
\end{tabular}}
\end{table}

\section{Concluding Remarks}
In this work, we propose the first learning-based algorithm that provably achieves online stabilization for networked LTI systems subject to communication delays under adversarial disturbances. 
We leverage nested convex body chasing and distributed control. The novel approach achieves orders of magnitude of performance improvement over state-of-the-art methods for single-agent systems and handles information delays for networked multi-agent systems. {Since most systems are time-varying in nature, an immediate extension of this work is to combine general convex body chasing and model-based control methods to handle time-varying dynamical systems. 
Future directions include extending the communication model to incorporate stochastic and time-varying delays among agents as well as exploring connections to emerging results in function and body chasing, such as when predictions are available \cite{christianson2022chasing}.}

\begin{acks}
The authors thank Varun Gupta and Yingying Li for helpful discussions as well as the anonymous reviewers for their careful reading of this paper and insightful suggestions. 
This work was supported by the National Science Foundation under grants CNS-2146814, CPS-2136197, CNS-2106403, NGSDI-2105648. 
\end{acks}

\bibliographystyle{ACM-Reference-Format}
\bibliography{misc/reference}

\newpage
\appendix

\section{Notation Summary}

\begin{table}[H]
    \caption{Notations and definitions for the model setup, algorithms, and proofs}
    \label{tab:notations}
    \centering
    \begin{tabular}{p{0.2\linewidth}  p{0.8\linewidth}}
        \toprule
        Notation & Meaning  \\
        \midrule
        $x^i(t)$, $u^i(t)$, $w^i(t)$ & Local state ($\mathbb{R}^{n_i}$), control action ($\mathbb{R}^{m_i}$), and disturbances ($\mathbb{R}^{n_i}$) at subsystem $i$;\\
        $\N(i)$ & Dynamical neighbors of subsystem $i$ where $x^j(t-1)$ affects $x^i(t)$ for $j \in \N(i)$;\\
        $x(t)$, $u(t)$, $w(t)$ & Global state, control action, and disturbance vector concatenated from the local ones in \eqref{eq:local_sys}; \\
        $A^{ij}$, $B^{ij}$ & Local dynamics matrices describing how states and control action of subsystem $j$ affects subsystem $i$ for $j \in \N(i)$ in \eqref{eq:local_sys}; \\
        $A$, $B$ & Concatenated global dynamics matrices from $A^{ij}$'s and $B^{ij}$'s ;\\
        $\theta^i$ & The parameters for the nonzero locations in local dynamics matrices and we write $A^{ij}(\theta^i)$, $B^{ij}(\theta^i)$. In particular, $\theta^i \cap \theta^j = \emptyset$ for all $i \not =j$;\\
        $\Theta$ & The concatenated local parameters for the global dynamics with $\Theta:= \bigcup_{i \in [N]} \theta^i$;\\
        $\mathcal{P}_0$ & The known initial compact convex parameter set where the true dynamics parameter lies; \\
        $\mathcal{G}^C$ & Communication graph defined over system \eqref{eq:global_sys} with vertices $V^C$  corresponding to subsystems and directed edges $ E^C$;\\
        $\mathcal{C}$ & The adjecency matrix of $\mathcal{G}^C$;\\
        $\delay{i}{j}$ & Communication delay from subsystem $i$ to subsystem $j$ defined as the graph distance from $i$ to $j$ according to $\mathcal{G}^C$; \\
        $\din{i}$ & $\bar{d}$-incoming neighbors of subsystem $i$ where $\din{i}:=\{j\in[N]: d(j\rightarrow i )\leq \bar{d}\}$. In particular, $j\in\din{i}$ if $\mathcal{C}^{\bar{d}} (i,j) \not = 0$; \\
        $\dout{i}$ & $\bar{d}$-outgoing neighbors of subsystem $i$ where $\dout{i}:=\{j\in[N]: d(i\rightarrow j )\leq \bar{d}\}$. In particular, $j\in\din{i}$ if $\mathcal{C}^{\bar{d}} (j,i) \not = 0$;\\
        $\M{i}$ & Subsystems whose model information is needed for sub-controller synthesis at subsystem $i$ with Algorithm \ref{alg:main} where $\M{i} = \left\{ \ell \in [N]: j \in \N(\ell) \text{ for some }j \in \dout{i} \right\}$;\\
        $\bar{d}$-neighbor of $i$  & The union of all subsystems in $\din{i}$, $\dout{i}$, $\M{i}$; \\ 
        $\pit$ & Local consistent parameter set constructed by subsystem $i$ at time $t$ with \eqref{eq:local-chasing}; \\
        {$\thit$} & {Local consistent parameter for subsystem $i$ for $A^{ij}$ and $B^{ij}$ constructed with Algorithm \ref{alg:main}};  \\
        $\hat{\Theta}^i_t$ & The assembled local estimate of the "global" parameter where $\hat{\Theta}^i_t := \bigcup_{j\in \M{i}} {\theta}^j_{t-d(j\rightarrow i)}$; \\
        {$\colit$} & {{Local column solutions generated by subsystem $i$ at time $t$ from \eqref{eq:synth}}};  \\
        $\colitx$, $\colitu$ & The $x$ and $u$ components of $\colit$, respectively. They are synthesized from \eqref{eq:synth};\\
        $\Theta_t$ & The collection of all local consistent parameters at time $t$ where ${\Theta}_t = \bigcup_{i=1}^N \thit$;\\
        {$A_t, B_t$, $\tilde{w}(t)$} & {The global consistent matrices $A(\Theta_t), B(\Theta_t)$}, and corresponding admissible disturbance;  \\
        $a^i_t$, $b^i_t$ & The $i^{\text{th}}$ row of $A_t$, $B_t$ respectively;\\
        $\hat{w}(t)$ & {Concatenated global estimated disturbance from $\hat{w}^i(t)$ in \eqref{eq:local-controller}}; \\
        $\bPX{t}$ & {Concatenated global closed loop operators where $\PX{t}[k](i,j) := \lcoljtx[k](i)$ from \eqref{eq:local-controller} };  \\
        $\bPU{t}$ & {Concatenated global closed loop operators where $\PU{t}[k](i,j) := \lcoljtu[k](i)$ from \eqref{eq:local-controller} };  \\
        \bottomrule
    \end{tabular}
\end{table}

\begin{table}[H]
    \caption{Constants used throughout the paper}
    \label{tab:notations}
    \centering
    \begin{tabular}{p{0.2\linewidth}  p{0.8\linewidth}}
        \toprule
        Constants & Meaning  \\
        \midrule
        $N$ & {Number of subsystems in the global dynamics \eqref{eq:global_sys}};\\
        $n_i, m_i$ & {Local state and control action dimension for subsystem $i$ in \eqref{eq:local_sys}};\\
        $n_x, n_u$ & {Global state and control dimension with $n_x = \sum_{i=1}^N n_i$ and $n_u = \sum_{i=1}^N m_i$};\\
        $W$ & The known bound on the true disturbances such that $\infnorm{w(t)} \leq W$;\\
        $\kappa$ & The bound on all possible system matrices where $\norm{A(\Theta)}_2$, $\norm{B(\Theta)}_2 \leq \kappa$ for all $\Theta \in \mathcal{P}_0$;\\
        $\bar{d}$ & The localization parameter such that each subsystem is constrained to only use information from its $\bar{d}$-neighbors in Algorithm \ref{alg:main}; \\
        $\bar{n}$ & The largest total local state dimension for the $\bar{d}$-neighbors of the subsystems where $\bar{n} = \max\{\|\mathcal{C}^{\bar{d}} \|_1,\, \|\mathcal{C}^{\bar{d}} \|_\infty,\, \max_j \abs{\M{j}} \}$ ;\\
        $C$, $\rho $  &  The decay rate for the closed-loop columns $\colit$ synthesized in \eqref{eq:synth} such that $\norm{\lcolit[k]}_2 \leq C \rho^k$; \\
        \bottomrule
    \end{tabular}
\end{table}

\section{Proofs for Section \ref{sec:centralized}}
\label{sec:proof_global}

Below we restate and prove the auxiliary results needed for the proof of \Cref{thrm:global} in \Cref{sec:centralized}.
\begin{lemma}[Closed loop Dynamics]
    \label[lemma]{lem:closed-loop-a}
        The closed loop of \eqref{eq:global_sys} under Algorithm \ref{alg:centralized} is characterized as follows for all time $t \in \mathbb{N}_+$:
    \begin{subequations}
        \begin{align}
            x(t)  &= \sum_{k=0}^{H-1} \PX{t}[k] \hat{w}(t-k), \quad
            u(t) = \sum_{k=0}^{H-1} \PU{t}[k] \hat{w}(t-k) \label{eq:cl-a-1}\\
            \hat{w}(t) &= \sum_{k=1}^H \left( A(\Theta_t) \PX{t-1}[k-1] + B(\Theta_t) \PU{t-1}[k-1] - \PX{t}[k] \right)\hat{w}(t-k) + {w}(t-1). \label{eq:cl-a-2}
        \end{align}
    \end{subequations}
    where $A,\,B$ are the true model parameters from \eqref{eq:global_sys} while $w(t)$ is the true unknown bounded disturbances with $\infnorm{w(t)} \leq W$. The linear causal operators $\bPX{t}$, $\bPU{t}$ are synthesized via \eqref{eq:central-sls} based on the selected hypothesis model at $t$ and $\hat{w}(t)$ is the estimated disturbance from the SLS controller \eqref{eq:central-controller}.  
\end{lemma}
\begin{proof}
    First, we write out the global closed-loop dynamics of \eqref{eq:global_sys} under the SLS controller \eqref{eq:central-controller} with the synthesized closed-loop responses,
    \begin{subequations}
        \begin{align}
            x(t)  &= A\left(\Theta^\star\right) x(t-1) + B\left(\Theta^\star\right) u(t-1) + w(t-1) \label{eq:cl1}\\
            \hat{w}(t) &= x(t) - \sum_{k = 1}^{H-1} \Phi^x_t[k] \hat{w}(t-k) \label{eq:cl2} \\
            u(t) &=  \sum_{k = 0}^{H-1} \Phi^u_t[k] \hat{w}(t-k), \label{eq:cl3}
        \end{align}
    \end{subequations}
    where \eqref{eq:cl1} is the global dynamics \eqref{eq:global_sys} while \eqref{eq:cl2} and \eqref{eq:cl3} are the implemented SLS controller. 
    Now, we use the \text{consistency} property of all the consistent hypothesis model $\Theta_t$ selected by \Cref{alg:centralized} and represent dynamics \eqref{eq:cl1} in terms of the global consistent parameter $A_t:=A(\Theta_t), B_t := B(\Theta_t)$,
    \begin{equation}
        \label{eq:cl1-1}
        x(t)  = A_t x(t-1) + B_t u(t-1) + \tilde{w}(t-1),
    \end{equation}
    with admissible consistent disturbances $\infnorm{\tilde{w}(t)} \leq W$ for all time $t$. The replacement of ($A\left( \Theta^\star\right)$, $B\left( \Theta^\star\right))$, $w(t)$)  with ($A_t$, $B_t$, $\tilde{w}(t)$) is by definition of the consistent set \eqref{eq:central-set}. 
    Next, observe that by moving $x(t)$ to the left side, \eqref{eq:cl2} becomes:
    \begin{align}
        x(t)  &= \sum_{k = 1}^{H-1} \Phi^x_t[k] \hat{w}(t-k) + \hat{w}(t)  \nonumber \\
        & = \sum_{k = 0}^{H-1} \Phi^x_t[k] \hat{w}(t-k)        \,, \label{eq:cl4}
    \end{align}    
    where in the last equality we used the fact that each $\PX{t}[0] = I$ by the constraint \eqref{eq:feasibility}. 
    Now we substitute \eqref{eq:cl1-1} into \eqref{eq:cl2} to get 
    \begin{subequations}
    \begin{align}
        \hat{w}(t) &= x(t) - \sum_{k = 1}^{H-1} \Phi^x_t[k] \hat{w}(t-k)\\
        &= A_t x(t-1) + B_t u(t-1) - \sum_{k = 1}^{H-1} \Phi^x_t[k] \hat{w}(t-k) + \tilde{w}(t-1)\\
        &= A_t \sum_{k = 0}^{H-1} \PX{t-1}[k] \hat{w}(t-1-k) + B_t \sum_{k = 0}^{H-1} \PU{t-1}[k] \hat{w}(t-1-k) - \sum_{k = 1}^{H-1} \Phi^x_t[k] \hat{w}(t-k)  \nonumber  \\
        &\qquad + \tilde{w}(t-1)  \label{eq:18c}\\
        &= \sum_{k = 1}^{H-1} \left( A_t\PX{t-1}[k-1] +  B_t\PU{t-1}[k-1] - \PX{t}[k]  \right)\hat{w}(t-k) \nonumber \\
        &\qquad +\left( A_t \PX{t-1}[H-1] + B_t \PU{t-1}[H-1] - \PX{t-1}[H]\right)\hat{w}(t-H) + \tilde{w}(t-1) \label{eq:18d} \\
        &= \sum_{k = 1}^{H} \left( A_t\PX{t-1}[k-1] +  B_t\PU{t-1}[k-1] - \PX{t}[k]  \right)\hat{w}(t-k)  + \tilde{w}(t-1), \label{eq:cl5}
    \end{align}   
    \end{subequations}
    where in \eqref{eq:18c} we substituted \eqref{eq:cl4} and \eqref{eq:cl3} into $x(t-1)$ and $u(t-1)$ respectively. In \eqref{eq:18d}, we grouped the terms according to $\hat{w}(t-k)$ and used the fact that the closed-loop responses are synthesized in \eqref{eq:central-sls} such that $\PX{t-1}[H] = 0$ for all $t$. 
    Together, \eqref{eq:cl3},\eqref{eq:cl4}, and \eqref{eq:cl5} are as requested.
\end{proof}

\begin{lemma}[Sufficient condition for $H$-convolution ISS]
    \label[lemma]{lem:sufficient-a}
    Let $H \in \mathbb{N}$. 
    For $k \in [H]$, let $\left\{a_{t}[k]\right\}_{t=1}^\infty$ and $\left\{w_{t}\right\}_{t=1}^\infty$ be positive sequences.
    Let $\{s_t\}_{t=0}^\infty$ be a positive sequence such that
    \begin{align}
    s_t \leq \sum_{k=1}^H a_{t-1}[k]\cdot s_{t-k} + w_{t-1} \,.\label{eq:seq-recursion}
    \end{align}
    Then $\{s_t\}_{t=0}^\infty$ is bounded if $\sum_{t=0}^\infty \sum_{k=1}^H a_{t}[k] \leq L$ for some $L \in \mathbb{R}_+$. In particular, for all $t\geq t_0$,
    $$s_t \leq e^{-(t-t_0)/H} \cdot e^L s_{t_0} + \frac{ \left( e^L + e -1 \right)}{e-1} \sup_{t_0\leq k< t} w_k \, .$$
\end{lemma}
\begin{proof}
Fix $t_0$ and $t \geq t_0$. Denote $\{z_{t_i}\}$ as a finite subsequence of $\{s_{\tau}\}_{\tau = t_0}^t$ such that 
\begin{align*}
    z_{t_N} &= s_t\\
    z_{t_{i-1}} &= \max_{t_i - H \leq \tau \leq t_i - 1} s_{\tau} \,, \quad \text{ for } i = N, N-1,\dots,1,
\end{align*}
with $t_N = t$ and $z_{t_i} = s_{t_i}$.  
This construction of the $\{z_{t_i}\}$ has to terminate at $z_{t_0} = s_{t_0}$. Therefore, $N$ is at least $\frac{(t-t_0)}{H}$ and at most $t-t_0$. By the recursive relationship of $s_t$ in \eqref{eq:seq-recursion}, we have for any $i$,
\begin{align}
    z_{t_i} = s_{t_i} &\leq \sum_{k=1}^H a_{t_i-1}[k] s_{t_i-k} + w_{t_i-1} \nonumber \\
    &\leq \left( \sum_{k=1}^H a_{t_i -1}[k] \right) z_{t_i -1} + w_{t_i-1} \nonumber \\
    &= \hat{a}_{t_i-1} \cdot z_{t_i -1 } + w_{t_i-1}, \label{eq:z-recursion}
\end{align}
where we use the fact that $a_t[k]\geq 0$ for all $t$ and $k$. We also denote $\hat{a}_{t_i -1} = \left( \sum_{k=1}^H a_{t_i -1}[k] \right) $ for the last equality. By the recursion \eqref{eq:z-recursion}, we have 
\begin{align}
    s_t = z_{t_N}
    \leq \prod_{i=1}^N \hat{a}_{t_i-1} \cdot z_{t_0} + \left(\sup_{t_0\leq k <t} w_k\right) \left(  1+ \sum_{j=1}^N \prod_{i=j}^N \hat{a}_{t_i -1}   \right) \label{eq:s-last-step}
\end{align}
Now, $\prod_{i=j}^N \hat{a}_{t_i - 1} = \prod_{i=j}^N \left(\left( \hat{a}_{t_i - 1}-1\right)  + 1\right)  \leq \prod_{i=j}^N e^{\hat{a}_{t_i -1} - 1} = e^{\sum_{i=j}^N \left(\hat{a}_{t_i -1} -1\right)} \leq e^{L-(N-j+1)}$, where the last inequality is due to the hypothesis that $\sum_{t=0}^\infty \hat{a}_t \leq L$. Plug this inequality for $\prod_{i=j}^N \hat{a}_{t_i - 1} $ back to \eqref{eq:s-last-step}, we continue with
\begin{align*}
     s_t &\leq e^{-(t-t_0)/H} \cdot s_{t_0}e^L + \left(\sup_{t_0\leq k <t} w_k\right)  \left(1+\sum_{j=1}^N e^{L-(N-j)}\right)\\
     &\leq e^{-(t-t_0)/H} \cdot s_{t_0}e^L + \left(\sup_{t_0\leq k <t} w_k\right)  \left(1 + e^L \sum_{j=0}^{N-1} e^{-j}\right)\\
     &\leq e^{-(t-t_0)/H} \cdot s_{t_0}e^L + \left(\sup_{t_0\leq k <t} w_k\right)  \left(1 + e^L \frac{1}{e-1}\right)\,,
\end{align*}
where we used $z_{t_0} = s_{t_0}$ and that $N$ is at least $(t-t_0)/H$. This is the required bound, which holds for any $t, t_0 \in \mathbb{N}$.
\end{proof}


\section{Proof of Theorem \ref{thrm:main}}
\label{sec:proof}

\begin{theorem}[Stability, Scalar Subsystems]
    Under Assumptions \ref{assump:noise}-\ref{assump:feasibility}, Algorithm \ref{alg:main} guarantees the ISS of the closed loop of \eqref{eq:global_sys} with 
    $$  \max\{\|x(t)\|_\infty , \|u(t)\|_\infty\} \leq  \bigO \left( e^{(\bar{n})^{9/2}\bar{d}}\right) \left( e^{- (t-t_0)/H} x(t_0) + \sup_{t_0\leq k\leq t} \infnorm{w(k)} \right) \,,$$
    where $x(t_0)$ is the initial condition, local dimension $\bar{n} = \max\{\|\mathcal{C}^{\bar{d}} \|_1,\, \|\mathcal{C}^{\bar{d}} \|_\infty,\, \max_j \abs{\M{j}} \} $ represents the total state dimension in the $\bar{d}$-neighborhood specified by the dynamics interaction \eqref{eq:local_sys} and the communication graph $\mathcal{G}^C$. Parameter $\bar{d}$ is the largest local delay each subsystem allows for delayed information, and $H$ is the SLS closed-loop response finite impulse horizon.
\end{theorem}
\begin{proof}
    \label[Proof]{prof:main}
We first characterize the closed loop dynamics of \eqref{eq:global_sys} under Algorithm \ref{alg:main}. In particular, despite the fact that each subsystem uses differently delayed information to compute the local parameter, column solutions to the closed-loop responses, and control actions, the closed loop for the global system under such distributed policy can be simply characterized as 
\begin{subequations}
    \begin{align}
        x(t)  &= \sum_{k=0}^{H-1} \PX{t}[k] \hat{w}(t-k), \quad
        u(t) = \sum_{k=0}^{H-1} \PU{t}[k] \hat{w}(t-k) \label{eq:cl-xu}\\
        \hat{w}(t) &= \sum_{k=1}^H \left( A_t \PX{t-1}[k-1] + B_t \PU{t-1}[k-1] - \PX{t}[k] \right)\hat{w}(t-k) + \tilde{w}(t-1),\label{eq:cl-w}
    \end{align}
\end{subequations}
by \Cref{lem:closed-loop-a}. Here $u(t),\,\hat{w}(t)$ are concatenated control action and estimated disturbance from \eqref{eq:local-controller}. $A_t, B_t$ are the global consistent parameter concatenated with the local consistent parameters $A^{ij}(\thit), B^{ij}(\thit)$. Vector $\tilde{w}(t)$ are the admissible consistent disturbances corresponding to $A_t$, $B_t$ with the property that $\infnorm{\tilde{w}(t)}\leq W$ for all time $t$. Operators $\bPX{t}, \bPU{t}$ are shorthand for global closed-loop operators when \eqref{eq:local-controller} is implemented, with $$\PX{t}[k](i,j) := \lcoljtx[k](i), \quad \PU{t}[k](i,j) := \lcoljtu[k](i) \,.$$


We follow similar procedure in the proof of \Cref{thrm:global} and bound $\infnorm{\hat{w}(t)}$ from \eqref{eq:cl-w} by examining the following dynamical evolution,
\begin{align}
    \infnorm{\hat{w}(t)} 
    &\leq  \sum_{k=1}^H \infnorm{ A_t \PX{t-1}[k-1] + B_t \PU{t-1}[k-1] - \PX{t}[k]} \infnorm{\hat{w}(t-k)} + \infnorm{\tilde{w}(t-1)}  \,. \label{eq:middle-w-hat}
 \end{align}
By \Cref{lem:sufficient-a}, as long as $\sum_{t=1}^\infty  \sum_{k=1}^H \infnorm{ A_t \PX{t-1}[k-1] + B_t \PU{t-1}[k-1] - \PX{t}[k]} \leq L$ for some positive constant $L$, then we can bound \eqref{eq:middle-w-hat} with 
\begin{align}
    \infnorm{\hat{w}(t)} &\leq  e^{-(t-t_0)/H} \cdot e^L x(t_0) + \sup_{t_0 \leq k <t}\infnorm{\tilde{w}(t)} \frac{ \left( e^L + e -1 \right)}{e-1}  .\nonumber
\end{align}
Therefore, what's left is to show $$\sum_{t=1}^\infty  \sum_{k=1}^H \infnorm{ A_t \PX{t-1}[k-1] + B_t \PU{t-1}[k-1] - \PX{t}[k]}\leq L \, ,$$ 
which is proved in \Cref{prop:bounded-error-a} where $L = \bigO \left( \text{poly}\left(\bar{n}\right)\bar{d}\right)$. This concludes the proof.
\end{proof}

\begin{lemma}[Bounded error for closed loop operators]
    \label[proposition]{prop:bounded-error-a}
    Let $\bPX{t}, \bPU{t}$ denote the global closed loop operators concatenated from sub-controllers generated with Algorithm \ref{alg:main} where $\PX{t}[k](i,j) := \lcoljtx[k](i)$ and $\PU{t}[k](i,j) := \lcoljtu[k](i)$. Denote matrices $A_t, B_t$ as the global consistent parameter concatenated with local consistent parameters $A^{ij}(\thit), B^{ij}(\thit)$. Then we have
    \begin{align}
        \sum_{t=1}^\infty  \sum_{k=1}^H &\infnorm{ A_t \PX{t-1}[k-1] + B_t \PU{t-1}[k-1] - \PX{t}[k]} \label{eq:error1}\\
        &\leq  (\bar{d} + 3 ) \bar{n}^3 \text{diam}(\mathcal{P}_0)  \left(\kappa \bar{n}^{\frac32} \Gamma H + \frac{ C}{1-  \rho} \right) \, , \nonumber
    \end{align}
    where $\bar{n} = \max\{\|\mathcal{C}^{\bar{d}} \|_1,\, \|\mathcal{C}^{\bar{d}} \|_\infty,\, \max_j \abs{\M{j}} \} $, and $\bar{d}$ is the largest local delay each subsystem considers for the algorithm, while $H$ is SLS controller horizon. Here, $\Gamma$ is a system-theoretical constant that does not depend on the global dynamics properties detailed in \Cref{thm:mainsensitivity_global}.
\end{lemma}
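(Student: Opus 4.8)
The plan is to bound the summand of \eqref{eq:error1} entrywise, exploiting two facts: each sub-controller column satisfies the SLS characterization \eqref{eq:characterization2} with respect to its \emph{own} synthesis model, and the local consistent parameters $\thit$ produced by CONSIST drift by only a bounded total amount over all time.

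First I would fix an index pair $(i,j)$ in the support of $\mathcal{C}^{\bar d}$ and a lag $k$, and expand the $(i,j)$ entry of $A_t\PX{t-1}[k-1] + B_t\PU{t-1}[k-1] - \PX{t}[k]$. Since $\PX{t}[k](i,j)=\lcoljtx[k](i)$ and the realizing column $\bm{\phi}^j$ was synthesized at time $\tau = t-\delay{j}{i}$ with assembled model $\hat\Theta^j_\tau$, constraint \eqref{eq:characterization2} lets me rewrite $\lcoljtx[k](i)$ as $[\A{\hat\Theta^j_\tau}\lcoljtx[k-1] + \B{\hat\Theta^j_\tau}\lcoljtu[k-1]](i)$. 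The entry then collapses to a sum over intermediate subsystems $\ell$ of terms $A_t(i,\ell)\,\phi^{j,x}_{t-1-\delay{j}{\ell}}[k-1](\ell) - \A{\hat\Theta^j_\tau}(i,\ell)\,\phi^{j,x}_\tau[k-1](\ell)$ and the analogous $B$ terms, which differ only because they use parameters selected, and sub-controllers synthesized, at slightly different times.

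Next I would split each term by adding and subtracting a cross term, yielding a \textbf{model-mismatch} piece $[A_t(i,\ell)-\A{\hat\Theta^j_\tau}(i,\ell)]\,\phi^{j,x}_{t-1-\delay{j}{\ell}}[k-1](\ell)$ and a \textbf{controller-drift} piece $\A{\hat\Theta^j_\tau}(i,\ell)\,[\phi^{j,x}_{t-1-\delay{j}{\ell}}[k-1](\ell)-\phi^{j,x}_\tau[k-1](\ell)]$. The model-mismatch factor is a difference of rows of the dynamics built from $\theta^i$ selected at times $t$ and $t-\delay{j}{i}-\delay{i}{j}$, a gap of at most $2\bar d$; I would bound it by the total movement of subsystem $i$'s Steiner iterates over that window times the decaying norm $\norm{\lcoljtx[k-1]}\le C\rho^{k-1}$ from \Cref{assump:decay}, so summing over $k$ produces the $\tfrac{C}{1-\rho}$ factor. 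The controller-drift factor compares two sub-controllers of subsystem $j$ synthesized within a window of width at most $\bar d + 1$; here \Cref{thrm:sensitivity} bounds it by $\kappa$, $\Gamma$, a norm-conversion factor, and the total model movement seen by $j$ over that window, and summing over $k=1,\dots,H$ gives the $\kappa\,\bar n^{3/2}\Gamma H$ factor.

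Finally I would sum over $t$, over the $\bigO(\bar n)$ nonzero columns $j$, and over the $\bigO(\bar n)$ intermediate indices $\ell$. The key point is that both pieces are controlled by one-step movements $\norm{\thit-\theta^i_{t-1}}$ over a time window of width $\bigO(\bar d)$, so each one-step movement is charged to at most $\bigO(\bar d)$ summands; swapping the order of summation produces the prefactor $(\bar d + 3)$. The nested structure \eqref{eq:local-chasing} together with the competitive property of the Steiner point selector then bounds $\sum_t\norm{\thit-\theta^i_{t-1}}$ by $\text{diam}(\mathcal{P}_0)$ up to dimension factors absorbed into $\bar n^2$, giving the claim. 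The main obstacle is the delay bookkeeping: because each entry $(i,j)$ is realized by a sub-controller synthesized at its own delayed time while $A_t,B_t$ use current parameters, I must check that all parameters and sub-controllers entering one entry live in an $\bigO(\bar d)$ time window so their pairwise differences telescope into per-step Steiner movements; a secondary subtlety is verifying that \eqref{eq:synth} stays feasible along the whole window so \Cref{thrm:sensitivity} applies uniformly.
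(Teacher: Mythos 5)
Your proposal follows essentially the same route as the paper's proof of \Cref{prop:bounded-error-a}: the same entrywise expansion of $\PX{t}[k](i,j)$ via the synthesis constraint \eqref{eq:characterization2} at the realizing (delayed) synthesis time, the same add-and-subtract split into a model-drift piece (bounded using the $C\rho^{k-1}$ decay and telescoped into per-step Steiner movements over an $\bigO(\bar{d})$ window, yielding $C/(1-\rho)$) and a controller-drift piece (bounded via the SLS sensitivity result applied to $\hat{\Theta}^j$ differences, yielding the $\kappa\,\bar{n}^{3/2}\Gamma H$ factor), and the same final charge of one-step parameter movements against $\text{diam}(\mathcal{P}_0)$ via the Steiner-point competitive ratio. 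The only correction worth noting is that, because \eqref{eq:synth} carries the extra sparsity constraint \eqref{eq:comm-constraint}, the controller-drift step must invoke the structured corollary \Cref{cor:sls-sensitivity} (obtained by de-constraining via the reparameterization in Lemma \ref{lem:jing}) rather than \Cref{thrm:sensitivity} directly, which is exactly how the paper proceeds.
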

\begin{proof}
    To ease notation, we use $a^i_t$ and $b^i_t$ to denote the $i^{\text{th}}$ row of $A_t$ and $B_t$ respectively.
    
    Our strategy is to bound each term in \eqref{eq:error1} for a fixed $t$ and $k$. We will see that the summation of these terms over all $k$ and $t$ remain bounded. Each term in \eqref{eq:error1} can be bounded as follows. 
        \begin{align}
            &\infnorm{ A_t \PX{t-1}[k-1] + B_t \PU{t-1}[k-1] - \PX{t}[k]} \nonumber \\
            =& \max_{i \in [N]} \sum_{j\in\din{i}} \abs{ \trans{a^i_t} \PX{t-1}[k-1](:,j) + \trans{b^i_t} \PU{t-1}[k-1](:,j) - \underbrace{\PX{t}[k](i,j)}_{\text{Defined to be } \lcoljtx[k](i)  }  } \,.\label{eq:tail1}
        \end{align}
   Due to the sparsity constraints that correspond to the information constraints placed on the closed-loop responses during synthesis \eqref{eq:synth}, the only nonzero elements in a particular row $i$ of $\PX{t}[k]$ are the positions at $j \in \din{i}$. Hence, we can write  sum of each row $i$ as sum of the elements in position $(i,j)$ where $j \in \din{i}$ in \eqref{eq:tail1}. 
    Recall that $\lcoljtx$ are synthesized in \eqref{eq:synth} such that 
    \begin{align}
        \label{eq:coljti}
        \lcoljtx[k](i) = &\trans{a^i_{t - \delay{j}{i} - \delay{i}{j}}}\lcoljtx[k-1] \nonumber \\ 
        &+\trans{b^i_{t - \delay{j}{i} - \delay{i}{j}}}\lcoljtu[k-1]
    \end{align}
    because $\lcoljtx$  is synthesized by $j$ at time $t-\delay{j}{i}$.  The $i^{\text{th}}$ position of $\lcoljtx$ in particular uses model information from subsystem $i$, which is transmitted to $j$ from $i$ with delay $\delay{i}{j}$. Therefore, we substitute \eqref{eq:coljti} into \eqref{eq:tail1} to get
        \begin{align}
            \eqref{eq:tail1} =& \max_{i \in [N]} \sum_{j\in\din{i}} \Bigg| \trans{a^i_t} \PX{t-1}[k-1](:,j) - \trans{a^i_{t - \delay{j}{i} - \delay{i}{j}}}\lcoljtx[k-1] \nonumber \\
            &+ \trans{b^i_t} \PU{t-1}[k-1](:,j) -  \trans{b^i_{t - \delay{j}{i} - \delay{i}{j}}}\lcoljtu[k-1]  \Bigg| \label{eq:tail2}
        \end{align}
    Adding and subtracting $\trans{a^i_t}\lcoljtx[k-1] $ and $\trans{b^i_t}\lcoljtu[k-1] $ in \eqref{eq:tail2}, we can group terms and get 
    \begin{subequations}
        \label{eq:tail3}
        \begin{align}
            \eqref{eq:tail2} &\leq \max_{i \in [N]} \sum_{j\in\din{i}} \Bigg| \trans{a^i_t} \left(\PX{t-1}[k-1](:,j) - \lcoljtx[k-1]\right) \nonumber \\
            & \quad \quad \quad \quad \quad \,\,\,\,\,\,\,+\trans{b^i_t} \left(\PU{t-1}[k-1](:,j) - \lcoljtu[k-1]\right)\Bigg| \label{eq:group1}\\
            &+  \max_{i \in [N]}\sum_{j\in\din{i}} \Bigg|  \trans{a^i_t - {a^i_{t - \delay{j}{i} - \delay{i}{j}}}} \lcoljtx[k-1] \nonumber \\
            &\quad \quad \quad \quad \quad \,\,\,\,\,\,\,+  \trans{b^i_t - {b^i_{t - \delay{j}{i} - \delay{i}{j}}}} \lcoljtu[k-1] \Bigg|. \label{eq:group2}
        \end{align}            
    \end{subequations}
    We now consider \eqref{eq:group1} and \eqref{eq:group2} separately. For the remainder of the proof, we use $\coljtx$ and $\coljtu$ as shorthand for the $j$th column of $\bPX{t}$ and $\bPU{t}$ respectively. Apply Cauchy-Schwarz, 
    \begin{subequations}
        \begin{align}
            \eqref{eq:group1} &\leq \max_{i \in [N]} \sum_{j\in\din{i}} \norm{a^i_t}_2 \norm{\phi^{j,x}_{t-1}[k-1] - \lcoljtx[k-1]}_2 \nonumber \\
            & \quad \quad \quad \quad \quad \,\,+ \norm{b^i_t}_2 \norm{\phi^{j,u}_{t-1}[k-1] - \lcoljtu[k-1]}_2 \\
           \text{(by Assumption \ref{assump:compact})}\quad &\leq \kappa \cdot \max_{i \in [N]} \sum_{j\in\din{i}} \norm{\phi^{j,x}_{t-1}[k-1] - \lcoljtx[k-1]}_2 \nonumber \\
            & \quad \quad \quad \quad \quad \,\,+ \norm{\phi^{j,u}_{t-1}[k-1] - \lcoljtu[k-1]}_2 \\
            &= \kappa \cdot \max_{i \in [N]} \sum_{j\in\din{i}}\left(   \sum_{\ell \in \dout{j}} \abs{\phi^{j,x}_{t-1}[k-1](\ell) - \lcoljtx[k-1](\ell)}^2  \right)^{1/2} \nonumber \\
            &  \quad \quad \quad \quad +  \left(   \sum_{\ell \in \dout{j}} \abs{\phi^{j,u}_{t-1}[k-1](\ell) - \lcoljtu[k-1](\ell)}^2  \right)^{1/2} \label{eq:28c} \\
            &= \kappa \cdot \max_{i \in [N]} \sum_{j\in\din{i}}\left(   \sum_{\ell \in \dout{j}} \abs{\lcoljltx[k-1](\ell) - \lcoljtx[k-1](\ell)}^2  \right)^{1/2} \nonumber \\
            &  \quad \quad \quad \quad +  \left(   \sum_{\ell \in \dout{j}} \abs{\lcoljltu[k-1](\ell) - \lcoljtu[k-1](\ell)}^2  \right)^{1/2},  \label{eq:group1tail1}        
        \end{align}
    \end{subequations}
    where to arrive at \eqref{eq:28c} we used the fact that the nonzero elements in any column/sub-controller synthesized or assembled at subsystem $j$ corresponds to the elements in $\dout{j}$. The last equality comes from the definition of $\bPX{t-1}$,$\bPU{t-1}$. Continuing, we bound any sum using the largest summand multiplied by the number of summands:
    \begin{subequations}
        \begin{align}
            \eqref{eq:group1} \leq \eqref{eq:group1tail1}
            &\leq \kappa \cdot \max_{i \in [N]} \sum_{j\in\din{i}}\left(   \bar{n} \cdot \max_{\ell \in \dout{j}} \norm{\lcoljltx[k-1] - \lcoljtx[k-1]}_2^2  \right)^{1/2} \nonumber \\
            &  \quad \quad \quad \quad +  \left(   \bar{n} \cdot \max_{\ell' \in \dout{j}} \norm{\phi^{j,u}_{t-1-\delay{j}{\ell'}}[k-1] - \lcoljtu[k-1]}_2^2  \right)^{1/2},       \label{eq:grouptail1}    \\
            &=  \kappa {\bar{n}}^{3/2}\cdot \max_{i \in [N]} \max_{j\in\din{i}}\Bigg( \left( \max_{\ell \in  \dout{j}} \norm{\lcoljltx[k-1] - \lcoljtx[k-1]}_2^2  \right)^{1/2}\nonumber\\
            & \quad \quad \quad \quad +  \left( \max_{\ell' \in  \dout{j}} \norm{\phi^{j,u}_{t-1-\delay{j}{\ell'}}[k-1] - \lcoljtu[k-1]}_2^2  \right)^{1/2}   \Bigg). \label{eq:group1tail2}
         \end{align}
    \end{subequations}
    Recall that $\lcoljlt$  are generated by subsystem $j$ using model information $\hat{\Theta}^j_{t-1-\delay{j}{\ell}}$ during synthesis procedure (\eqref{eq:delayed-local-model}, Algorithm \ref{alg:main}). Similarly, $\lcoljt$ are generated using $\hat{\Theta}^j_{t-\delay{j}{i}}$. Therefore, we can invoke \Cref{cor:sls-sensitivity} and arrive at 
        \begin{align}
            \eqref{eq:group1} &\leq \eqref{eq:group1tail1} \leq \eqref{eq:group1tail2} \nonumber\\
            &\leq \kappa {\bar{n}}^{3/2}  \Gamma \max_{i \in [N]} \max_{j\in\din{i}} \Bigg( \left( \max_{\ell \in \dout{j}} \norm{\hat{\Theta}^j_{t-1-d(j\rightarrow \ell)} - \hat{\Theta}^j_{t-\delay{j}{i}}}_F^2  \right)^{1/2} \nonumber \\
            &\quad \quad \quad \quad  \quad \quad \quad \quad \quad \quad + \max_{\ell' \in \dout{j}}\left(\norm{ \hat{\Theta}^j_{t-1-d(j\rightarrow \ell')} - \hat{\Theta}^j_{t-\delay{j}{i}} }_F^2 \right)^{1/2} \Bigg) \label{eq:group1tail3}
        \end{align}
    For any fixed $i$, $j$, $\ell$, $\ell'$, the following holds true.
        \begin{align}
            \eqref{eq:group1tail3} &= \kappa {\bar{n}}^{3/2}  \Gamma \left( \norm{\hat{\Theta}^j_{t-1-d(j\rightarrow \ell)} - \hat{\Theta}^j_{t-\delay{j}{i}}}_F + \norm{ \hat{\Theta}^j_{t-1-d(j\rightarrow \ell')} - \hat{\Theta}^j_{t-\delay{j}{i}} }_F \right) \nonumber\\
            &= \kappa {\bar{n}}^{3/2}  \Gamma \sum_{m\in\M{j}} \norm{\theta^m_{t-1-d(j\rightarrow \ell)-d(m\rightarrow j)} - \theta^m_{t-\delay{j}{i} - \delay{m}{j}}}_F \nonumber\\
            &\quad \quad \quad \quad  + \sum_{m\in\M{j}} \norm{\theta^m_{t-1-d(j\rightarrow \ell')-d(m\rightarrow j)} - \theta^m_{t-\delay{j}{i} - \delay{m}{j}}}_F \nonumber\\
            &\leq \kappa {\bar{n}}^{3/2} \Gamma \sum_{m\in\M{j}} \left(\sum_{p=\min(t_1,t_2)}^{\min (t_1,t_2)+\delta_t +1} \norm{\theta^m_{t-p+1} - \theta^m_{t-p}}_F + \sum_{p=\min(t_1',t_2)}^{\min (t_1',t_2)+\delta_t' +1} \norm{\theta^m_{t-p+1} - \theta^m_{t-p}}_F\right), \label{eq:group1-final}
        \end{align}
    where we define $t_1 = 1+ \delay{j}{\ell} + \delay{m}{j}$, $t_1' = 1+ \delay{j}{\ell'} + \delay{m}{j}$, $t_2 = 1+ \delay{j}{i} + \delay{m}{j}$, and $\delta t = \abs{\delay{j}{i} - \delay{j}{\ell} -1}$, $\delta t' = \abs{\delay{j}{i} - \delay{j}{\ell'} -1}$. We stop at \eqref{eq:group1-final} for the moment for our bound for \eqref{eq:group1} and change course to bound the other term \eqref{eq:group2} in \eqref{eq:tail3}. We start with cauchy-schwarz for \eqref{eq:group2}.
        \begin{align}
            \eqref{eq:group2} &\leq  \max_{i \in [N]}\sum_{j\in\din{i}} \norm{  {a^i_t - {a^i_{t - \delay{j}{i} - \delay{i}{j}}}}}_2 \norm{\lcoljtx[k-1] }_2\nonumber \\
            &\quad \quad \quad \quad \quad \,\,\,\,\,\,\,+ \norm{ {b^i_t - {b^i_{t - \delay{j}{i} - \delay{i}{j}}}}}_2 \norm{\lcoljtu[k-1]}_2\nonumber\\
            &\leq C\rho^{k-1} \bar{n} \cdot  \max_{i \in [N]}\max_{j\in\din{i}} \norm{  {a^i_t - {a^i_{t - \delay{j}{i} - \delay{i}{j}}}}}_2 + \norm{ {b^i_t - {b^i_{t - \delay{j}{i} - \delay{i}{j}}}}}_2 \nonumber \\
            &=  C\rho^{k-1} \bar{n} \cdot  \max_{i \in [N]}\max_{j\in\din{i}} \norm{  {\theta^i_t - {\theta^i_{t - \delay{j}{i} - \delay{i}{j}}}}}_2 \label{eq:group2-final}
        \end{align}
    Here we have used the decay property of the finite-impulse-response closed-loop responses to bound the decay rate of the sub-controllers. The last equality holds by recalling that we have defined $a^i_t$ and $b^i_t$ to be the $i^{\text{th}}$ row of the $A_t$ and $B_t$ respectively, which is constructed from the global consistent parameter $\Theta_t = \cup_{i=1}^N \theta^i_t$. Therefore, by definition, $[a^i_t, b^i_t] = \theta^i_t$.
    
    We now return to bound $\sum_{t=0}^\infty \sum_{k=1}^H\infnorm{ A_t \PX{t-1}[k-1] + B_t \PU{t-1}[k-1] - \PX{t}[k]}$. In particular, we have so far showed that

        \begin{align}
         \label{eq:final}
            \sum_{t=0}^\infty \sum_{k=1}^H\infnorm{ A_t \PX{t-1}[k-1] + B_t \PU{t-1}[k-1] - \PX{t}[k]} \leq \sum_{t=0}^\infty \sum_{k=1}^H \eqref{eq:group1-final} + \eqref{eq:group2-final}. 
        \end{align}
    Therefore, our goal is to bound each component of the right hand side. Specifically, 
        \begin{align}
            &\sum_{t=0}^\infty \sum_{k=1}^H \eqref{eq:group1-final} \nonumber\\
            &\quad \leq  \sum_{t=0}^\infty \sum_{k=1}^H \kappa {\bar{n}}^{3/2} \Gamma \sum_{m\in\M{j}} \left(\sum_{p=\min(t_1,t_2)}^{\min (t_1,t_2)+\delta_t +1} \norm{\theta^m_{t-p+1} - \theta^m_{t-p}}_F + \sum_{p=\min(t_1',t_2)}^{\min (t_1',t_2)+\delta_t' +1} \norm{\theta^m_{t-p+1} - \theta^m_{t-p}}_F\right),
        \end{align}
    for a different tuple of $(i\in[N],j\in\din{i},\ell\in\dout{j},\ell'\in\dout{j})$ at each $t$. However, for any $(i,j,\ell,\ell')$, the following holds.
    \begin{subequations}
        \begin{align}
            &\sum_{t=0}^\infty \sum_{k=1}^H \eqref{eq:group1-final} \nonumber\\
            &\leq \kappa {\bar{n}}^{3/2} \Gamma 
            \sum_{k=1}^H
            \sum_{m\in\M{j}} \left(\sum_{p=\min(t_1,t_2)}^{\min (t_1,t_2)+\delta_t +1} \sum_{t=0}^\infty \norm{\theta^m_{t-p+1} - \theta^m_{t-p}}_F + \sum_{p=\min(t_1',t_2)}^{\min (t_1',t_2)+\delta_t' +1}  \sum_{t=0}^\infty \norm{\theta^m_{t-p+1} - \theta^m_{t-p}}_F\right), \nonumber\\
            &\leq 2\kappa {\bar{n}}^{9/2} \Gamma H  \text{diam}(\mathcal{P}_0) \left( \max_{i\in[N]\,, j\in\din{i},\,\ell\in \dout{j}}  (1+1+|\delay{j}{i})-\delay{j}{\ell} -1 | \right) \label{eq:compete} \\
            &\leq 2\kappa  {\bar{n}}^{9/2} \Gamma H  \text{diam}(\mathcal{P}_0) (\bar{d}+3) \label{eq:group1-end}.
        \end{align}
    \end{subequations}
    Here we have used in the competitiveness of each local Steiner point selector via \eqref{eq:steiner} in \eqref{eq:compete} with competitive ratio of $\bar{n}/2$. Furthermore, by definition of $\din{i}$ and $\dout{j}$, we know that the largest delay for $\delay{j}{i}$ and $\delay{j}{\ell}$ for any choice of $i,j,\ell$ is less than $\bar{d}$.

    Finally, we investigate the second component of the right hand side of \eqref{eq:final}.
    \begin{subequations}
        \begin{align}
            \sum_{t=0}^\infty \sum_{k=1}^H \eqref{eq:group2-final} &= \sum_{t=0}^\infty \sum_{k=1}^H  C\rho^{k-1} \bar{n} \cdot  \max_{i \in [N]}\max_{j\in\din{i}} \norm{  {\theta^i_t - {\theta^i_{t - \delay{j}{i} - \delay{i}{j}}}}}_2 \\
            &\leq \sum_{k=1}^H C \rho^{k-1} \bar{n} \max_{i \in [N]}\max_{j\in\din{i}} \sum_{p=0}^{\delay{j}{i} + \delay{i}{j} +1} \sum_{t=0}^{\infty}\norm{\theta^i_{t-p+1} - \theta^i_{t-p}}_2\\
            &\leq C\bar{n}^3 
            \text{diam}(\mathcal{P}_0)(\bar{d}+1)/(1-\rho) \label{eq:group2-end}
        \end{align} 
    \end{subequations}
    where we once again used the competitive ratio of the local Steiner point selector \eqref{eq:steiner}. Moreover, by definition of $\din{i}$, the largest delay $\delay{i}{j}$ for any $j \in \din{i}$ is less than $\bar{d}$.

    Finally, we have the bound on the target quantity with \eqref{eq:group1-end} and \eqref{eq:group2-end} and conclude
    \begin{align*}
        \sum_{t=1}^\infty  \sum_{k=1}^H \infnorm{ A_t \PX{t-1}[k-1] + B_t \PU{t-1}[k-1] - \PX{t}[k]} &\leq \eqref{eq:group1-end} + \eqref{eq:group2-end}\\
        &\leq  2(\bar{d} + 3 ) \bar{n}^3 \text{diam}(\mathcal{P}_0)  \left(\kappa \bar{n}^{\frac32} \Gamma H + \frac{ C}{1-  \rho} \right) \, . \nonumber
    \end{align*}
\end{proof}


\begin{corollary}[of Theorem \ref{thm:mainsensitivity_global}, Structured SLS sensitivity]
    \label[corollary]{cor:sls-sensitivity}
    Consider the optimal solutions $\phi$, ${\phi}'$ to \eqref{eq:synth} with two different parameters input $\Theta$, ${\Theta'}$ respectively. Then we have 
    \begin{align*}
        \norm{\phi - {\phi'}}_2 \leq \Gamma \norm{\Theta - {\Theta'}}_2,
    \end{align*}
    with $ \Gamma = \bigO\left(\Gamma_A + \Gamma_B \right) $ where $\Gamma_A$ and $\Gamma_B$ are constants in \Cref{thm:mainsensitivity_global}.
\end{corollary}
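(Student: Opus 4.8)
The plan is to read off \eqref{eq:synth} as a structured instance of the finite-horizon control problem \eqref{eq:mpc} to which \Cref{thm:mainsensitivity_global} applies, and then to rewrite its conclusion in terms of $\norm{\Theta - \Theta'}_2$. First I would set up the correspondence: identify the stage variables through $x(k) = \lcolitx[k]$ and $u(k) = \lcolitu[k]$, take the initial and terminal conditions $x_0 = e_i$, $x(H) = 0$ from \eqref{eq:characterization1}, read the dynamics $A = \A{\Thit}$, $B = \B{\Thit}$ off \eqref{eq:characterization2}, and note that the $\mathcal{H}_2$ objective \eqref{eq:sls-cost} is exactly the quadratic cost in \eqref{eq:mpc} with $Q = R = I$. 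The only feature of \eqref{eq:synth} absent from \eqref{eq:mpc} is the locality constraint \eqref{eq:comm-constraint}, which confines each $\lcolitx[k], \lcolitu[k]$ to the support $\mathcal{C}^{k}(:,i) \cap \mathcal{C}^{\bar{d}}(:,i)$; crucially this support pattern is fixed by the communication graph and is \emph{independent} of the parameter $\Theta$ being perturbed.

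Next I would argue that \Cref{thm:mainsensitivity_global} extends to this constrained problem unchanged. The sensitivity bound there rests on strong convexity of the objective --- guaranteed by $Q = R = I \succ 0$ --- together with feasibility of both perturbed problems, and its proof is a perturbation argument for the minimizer of a strongly convex quadratic program whose only $\Theta$-dependence sits in the dynamics equality constraints $x(k+1) = \A{\Theta}x(k) + \B{\Theta}u(k)$. Since the sparsity conditions \eqref{eq:comm-constraint} are themselves $\Theta$-independent linear equalities, I would fold them into the feasible affine set: the objective remains strongly convex on this smaller set, and both perturbed problems remain feasible by Assumptions \ref{assump:controllable} and \ref{assump:feasibility}, so the same argument goes through verbatim. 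With $Q = R = I$ the condition-number prefactor $\max\{\sigma_{\max}(Q),\sigma_{\max}(R)\}/\min\{\sigma_{\min}(Q),\sigma_{\min}(R)\}$ collapses to $1$, yielding $\norm{\phi - \phi'}_2 \leq \Gamma_A \norm{\A{\Theta} - \A{\Theta'}}_F + \Gamma_B \norm{\B{\Theta} - \B{\Theta'}}_F$. Because the sparsity pattern restricts the active coordinates to the $\bar d$-neighbors of $i$, the effective problem has dimension $\bigO(\bar{n})$, so the constants $\Gamma_A, \Gamma_B$ inherited from \Cref{thm:mainsensitivity_global} depend only on this local size rather than on the global dimension, which is what keeps the final bound scalable.

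Finally I would convert the matrix perturbations into a parameter perturbation. Since $\Theta$ is exactly the list of the (mutually disjoint) free entries of $\A{\Theta}$ and $\B{\Theta}$ placed at the known nonzero locations, the parameterization $\Theta \mapsto (\A{\Theta}, \B{\Theta})$ is linear and satisfies $\norm{\A{\Theta} - \A{\Theta'}}_F^2 + \norm{\B{\Theta} - \B{\Theta'}}_F^2 = \norm{\Theta - \Theta'}_2^2$. Hence each Frobenius term is at most $\norm{\Theta - \Theta'}_2$, and combining with the previous display gives $\norm{\phi - \phi'}_2 \leq (\Gamma_A + \Gamma_B)\norm{\Theta - \Theta'}_2$, i.e.\ the claim with $\Gamma = \bigO(\Gamma_A + \Gamma_B)$.

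I expect the main obstacle to be the careful justification that \Cref{thm:mainsensitivity_global} survives the addition of the fixed locality constraint \eqref{eq:comm-constraint} --- specifically that feasibility and uniqueness of the minimizer (strong convexity) are preserved once the decision variables are restricted to the sparsity subspace, and that the resulting Lipschitz constant reflects only the local dimension $\bar{n}$. If one prefers to avoid re-examining the internals of \Cref{thm:mainsensitivity_global}, an equivalent route is to eliminate the sparsity constraints by passing to reduced coordinates indexed by the active neighbors of $i$ (using Assumption \ref{assump:comm}, which makes the dynamics graph a subgraph of the communication graph so that the support propagates correctly), recasting \eqref{eq:synth} as a genuine instance of \eqref{eq:mpc} of dimension $\bigO(\bar{n})$ and invoking \Cref{thm:mainsensitivity_global} directly; the reduction only shrinks the Frobenius perturbations, so the final bound is unaffected.
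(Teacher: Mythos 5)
There is a genuine gap, and it sits exactly where you predicted the main obstacle would be: the treatment of the locality constraint \eqref{eq:comm-constraint}. Your first route asserts that the proof of \Cref{thm:mainsensitivity_global} is a generic perturbation argument for the minimizer of a strongly convex QP, so that it ``goes through verbatim'' once the ($\Theta$-independent) sparsity subspace is folded into the feasible set. It is not: that proof constructs the optimizer explicitly via the min-norm reparameterization $\bm{u} = (I-P_H^+P_H)\bm{\eta}$, $\eta^* = \F^+\g$, and the constants $\Gamma_A,\Gamma_B$ come from Wedin-type pseudo-inverse perturbation bounds on $P_H$ and $\F$, with uniform singular-value control supplied by the controllability grammians of \Cref{lem:baseCtrl}. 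Adding the sparsity constraints replaces these operators by restricted ones whose uniform regularity does \emph{not} follow from controllability of $(A(\Theta),B(\Theta))$ alone. Your fallback route fails for a related structural reason: restricting to the coordinates of the $\bar{d}$-neighborhood does not produce ``a genuine instance of \eqref{eq:mpc},'' because the dynamics propagate support one hop beyond the neighborhood at every step. In the partitioned form \eqref{eqn:bdy}, keeping the column supported in the neighborhood imposes the additional per-step equality $A^{(j)}_{bn}\tilde{\phi}^{j,x}[k] + B^{(j)}_{b}\tilde{\phi}^{j,u}[k] = 0$ (boundary cancellation), a constraint absent from \eqref{eq:mpc}; so \Cref{thm:mainsensitivity_global} cannot be invoked directly on the reduced problem, and the difficulty is not that ``the reduction only shrinks the Frobenius perturbations.''

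The paper closes this gap with the de-constraining reparameterization of \cite{yu2020localized} (\Cref{lem:jing}): when $B_b^{(j)}B_b^{(j)\dagger} = I$ --- a condition the paper notes is equivalent to Assumption \ref{assump:feasibility} --- every boundary-feasible control is parameterized as $\tilde{\phi}^{j,u}[k] = -B_b^{(j),\dagger}A^{(j)}_{bn}\tilde{\phi}^{j,x}[k] + \bigl(I - B_b^{(j),\dagger}B_b^{(j)}\bigr)v^j[k]$, and substituting this into \eqref{eq:synth} yields an unconstrained-in-sparsity problem of the form \eqref{eq:objcol} in the variables $(\tilde{\phi}^{j,x}, v^j)$, to which \Cref{thm:mainsensitivity_global} applies. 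Even after this step, two issues remain that your sketch would still have to handle: the elimination map itself depends on $\Theta$ (through $B_b^{(j),\dagger}$ and $A^{(j)}_{bn}$), so converting the bound on $\|\tilde{\phi}-\tilde{\phi}'\|_2$ into one on $\|\phi-\phi'\|_2$ costs an extra triangle-inequality term, which the paper controls using Assumption \ref{assump:decay} together with a uniform lower bound $\sigma_{\min}$ on the singular values of $B_b$ (from Assumptions \ref{assump:controllable} and \ref{assump:feasibility}); and consequently the prefactor does not simply ``collapse to $1$'' for $Q=R=I$ --- the extra factors are absorbed into $\Gamma = \bigO(\Gamma_A+\Gamma_B)$. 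Your intuition that the constants should reflect only the local dimension $\bar{n}$ is right in spirit, since the reparameterized problem lives in neighborhood coordinates, but that conclusion is delivered by this construction rather than by the unmodified theorem.
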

\begin{proof}
The SLS synthesis problem that we consider in  \eqref{eq:synth} has one additional sparsity constraints than general SLS synthesis presented in \eqref{eq:objcol} to which \Cref{thm:mainsensitivity_global} apples. Therefore, we need to de-constrain the synthesis problem \eqref{eq:synth} and turn it into a problem of the form \eqref{eq:objcol} in order to apply \Cref{thm:mainsensitivity_global}. To do so, we follow the procedure in section IV.A of \cite{yu2021localized}, where a re-parameterization of $\coljtu$ is used to characterize all sparse $\coljtu$ which will result in sparse $\coljtx$ according to the dynamical evolution \eqref{eq:characterization2}. 
First, we rewrite \eqref{eq:characterization2} with the nonzere variables grouped together as follows.
\begin{equation}
\label{eqn:bdy}
\begin{bmatrix}
\tilde{\phi}^{j,x}\\ \tilde{\phi}^{j,x}_b
\end{bmatrix}[k+1]
 = 
\begin{bmatrix}
A^{(j)}_{nn} & A^{(j)}_{nb}\\
A^{(j)}_{bn}  & A^{(j)}_{bb} 
\end{bmatrix}
\begin{bmatrix}
        \tilde{\phi}^{j,x}\\ \tilde{\phi}^{j,x}_b
\end{bmatrix}[k]  + 
\begin{bmatrix}
B^{(j)}_{n} \\
B^{(j)}_{b} 
\end{bmatrix}
\tilde{\phi}^{j,u}[k]
\end{equation}
where $\tilde{\phi}^{j,x}$ denotes the vector of nonzero entries in $\coljtx$ and  $\tilde{\phi}^{j,x}_b$ denotes the ``boundary'' positions of $\tilde{\phi}^{j,x}$. The ``boundary'' positions of $\tilde{\phi}^{j,x}$ corresponds to the positions in the vector that would become nonzero from zero due to the dynamical evolution  \eqref{eq:characterization2} in one time step. We refer interested reader to \cite{yu2021localized} for detailed setup/derivation for \eqref{eqn:bdy}. 
We also partition $A$,$B$ in \eqref{eq:characterization2} to correspond the entries that are associated with $\tilde{\phi}^{j,x}$ and  $\tilde{\phi}^{j,x}_b$. $\tilde{\phi}^{j,u}$ denote the reduced vector with only non-zero entries of $\lcolitu$. 
\begin{lemma}[Lemma 2, \cite{yu2021localized}] 
\label[lemma]{lem:jing}
If $B_b^{(j)}B_b^{(j)^ \dagger} = I$, then
the vectors $\{v^j[k]\}$ characterize all $\tilde{\phi}^{j,u}[k]$ via 
    \begin{equation}
    \label{eq:subs}
        \tilde{\phi}^{j,u}[k] = -B_b^{(j),\dagger} A_{bn}^{(j)} \tilde{\phi}^{j,x}[k] + \left(I - B_b^{(j),\dagger} B_b^{(j)}\right) v^j[k]\,.
    \end{equation}
\end{lemma}
We remark that the pseudo-inverse condition in \Cref{lem:jing} is equivalently to Assumption \ref{assump:feasibility}, as observed in \citet{yu2021localized} and \citet{anderson2017structured}. 

We can now substitute \eqref{eq:subs} into the synthesis problem \eqref{eq:synth} and obtain an SLS synthesis problem in the same form as \eqref{eq:objcol} with transformed dynamical evolution in terms of the new variables $\tilde{\phi}^{j,x}[k]$ and $v^j[k]$. Consider the optimal solutions $\tilde{\phi}$ and $\tilde{\phi'}$ (concatenated from $\tilde{\phi}^{j,x}$ and $v^j$ ) computed from the de-constrained problem with two different model input $\Theta$ and $\Theta'$. By \Cref{thm:mainsensitivity_global}, we have
\begin{equation}
\label{eq:last-one}
    \norm{\tilde{\phi} - \tilde{\phi}'}_2 \leq \left(\Gamma_A + \Gamma_B\right) \norm{\Theta - \Theta'}_F.
\end{equation}
Observe that
$$\tilde{\phi}^{j,u} = \begin{bmatrix}-B_b^{(j),\dagger} A_{bn}^{(j)} & \left(I - B_b^{(j),\dagger} B_b^{(j)} \right) \end{bmatrix} \tilde{\phi}. $$
Therefore, we could bound the sensitivity of the solution to \eqref{eq:synth} via
\begin{align*}
    \norm{\phi - \phi'}_2 
    &\leq \norm{\begin{bmatrix}
        I & 0 \\ -B_b^{(j),\dagger} A_{bn}^{(j)} & \left(I - B_b^{(j),\dagger} B_b^{(j)} \right) 
        \end{bmatrix}\tilde{\phi} - \begin{bmatrix}
        I & 0 \\ -B_b^{' (j),\dagger} A_{bn}^{'(j)} & \left(I - B_b^{'(j),\dagger} B_b^{'(j)} \right) 
        \end{bmatrix}\tilde{\phi}'}_2\\
    &\leq  \norm{\left(\begin{bmatrix}
        0 & 0 \\ -B_b^{(j),\dagger} A_{bn}^{(j)} +B_b^{'(j),\dagger} A_{bn}^{'(j)} & - B_b^{(j),\dagger} B_b^{(j)}  + B_b^{'(j),\dagger} B_b^{'(j)} 
        \end{bmatrix}\right)\tilde{\phi}}_2\\
        & \quad \quad \quad \quad \quad  +\norm{ \begin{bmatrix}
        I & 0 \\ -B_b^{'(j),\dagger} A_{bn}^{'(j)} & \left(I - B_b^{'(j),\dagger} B_b^{' (j)} \right) 
        \end{bmatrix}\left(\tilde{\phi}- \tilde{\phi}' \right)}_2\\
    &\leq  \frac{4C\kappa }{\sigma_{\min} (1-\rho) }  
        + \left(2+ \frac{2\kappa }{\sigma_{\min}}\right)(\Gamma_A + \Gamma_B) \norm{\Theta - \Theta'}_F\\
    &= \bigO\left(\Gamma_A + \Gamma_B\right ) \norm{\Theta - \Theta'}_F,
\end{align*}
where $\sigma_{\min}$ denotes the minimum singular value of the matrix $B_b$ for all $B({\theta^i})$ with $\theta^i\in\mathcal{P}^i_0$. Note that the left pseudo-inverse has the largest singular value of $1/ \sigma_{\min}$ with $\sigma_{\min}$ the smallest singular value of the original matrix. Due to \Cref{assump:controllable} and \Cref{assump:feasibility}, we know that $B_b$ has to be bounded from below so that \eqref{eq:synth} is feasible.
We have also used the fact that norm of an lower triangular block matrix is upperbounded by the sum of the norm of each component block. We invoke the exponential decay property of the closed-loop responses to bound the decay rate of $\tilde{\phi}$ by relating the nonzero component of the solution to \eqref{eq:synth} and $\tilde{\phi}$ via \eqref{eq:last-one}.
\end{proof}

\section{Perturbation Analysis of $\mathcal{H}_2$-optimal SLS Synthesis}
\label{sec:sensitivity}
\newcommand{\R}{\mathbb{R}}
\newcommand{\cl}[1]{\mathcal{#1}}

\newcommand{\lemref}[1]{(Lem.\ref{#1})}
\newcommand{\thmref}[1]{(Thm.\ref{#1})}
\newcommand{\corref}[1]{(Coro.\ref{#1})}
\newcommand{\propref}[1]{(Prop.\ref{#1})}
\newcommand{\aspref}[1]{(Ass.\ref{#1})}
\newcommand{\secref}[1]{(Sec.\ref{#1})}
\newcommand{\conref}[1]{(Const. \ref{#1})}
\newcommand{\algoref}[1]{(Algo. \ref{#1})}
\newcommand{\figref}[1]{(Fig. \ref{#1})}
\newcommand{\defref}[1]{(Def. \ref{#1})}
\newcommand{\remref}[1]{(Rem. \ref{#1})}

\subsection{From $\cl{H}_2$-optimal control to Least Squares}
This section presents results about general SLS synthesis. Due to notation overhead, we will drop time indices and suppress the horizon index $k\in[H]$ in closed-loop operators $\PX{}[k]$, $\PX{}[k]$ and write $\Phi^x_k$, $\Phi^u_k$ instead. 
\noindent Let $\Phi^x_k\in\mathbb{R}^{n \times n}$ and $\Phi^u_k\in\mathbb{R}^{m \times n}$ and consider the following canonical SLS synthesis problem with LQR cost for system matrices $[A,B]$ and weighting matrices $C \in \R^{n \times n}, D\in \R^{m \times m}$ :
\begin{align} 
\label{eq:obj} S = \min & \left\| \begin{bmatrix}C & 0\\0& D \end{bmatrix} \begin{bmatrix} \Phi^x_1 & \Phi^x_2 & \dots & \Phi^x_T \\ \Phi^u_1 & \Phi^u_2 & \dots & \Phi^u_T \end{bmatrix} \right\|^2_{F} \\
\notag \text{ s.t.: } & \Phi^x_{1} = I \\
\notag & \Phi^x_{k+1} = A \Phi^x_{k} + B \Phi^u_{k}, \quad \forall \: k: 1 \leq k \leq H \\
\notag & \Phi^x_{H+1} = 0
\end{align}
The objective in \eqref{eq:obj} is equivalent to weighted $\mathcal{H}_2$ norm on the closed-loop operators $\bPX{}$ and  $\bPU{}$, as well as the LQR cost on the state and control input weighed by $C^2$ and $D^2$.
Denote $\phi^{j,x}_{k}\in\mathbb{R}^n$, $\phi^{j,u}_{k}\in\mathbb{R}^m$ as the $j$th column of $\Phi^{x}_{k} \in \R^{n \times n}$, $\Phi^{u}_{k} \in \R^{m \times n}$ and $e_j$ the unit vector in the $j$-th coordinate axis. As described in \Cref{sec:sls}, we can separate the problem by columns and can equivalently restate \eqref {eq:obj} in terms of each column $\phi^{j,x}_k$ and $\phi^{j,u}_k$ : 
\begin{align} 
    \label{eq:objcol}  S_j :=\min & \left\| \begin{bmatrix}C & D \end{bmatrix} \begin{bmatrix} \phi^{j,x}_1 & \phi^{j,x}_{2} & \dots & \phi^{j,x}_{H} \\ \phi^{j,u}_1 & \phi^{j,u}_{2} & \dots & \phi^{j,u}_{H} \end{bmatrix} \right\|^2_{F} \\
    \notag \text{ s.t.: } & \phi^{j,x}_1 = e_j \\
    \notag & \phi^{j,x}_{k+1} = A \phi^{j,x}_{k} + B \phi^{j,u}_{k}, \quad \forall \: 1 \leq k \leq H  \\
    \notag & \phi^{j,x}_{H+1} = 0
\end{align}
We will now fix $j$ and rewrite \eqref{eq:objcol} further and introduce new variables to avoid tedious notation.
Define $u_k = \phi^{j,u}_{k}, \forall 1 \leq k \leq H $,  $\bm{u} = [u^{\top}_1 ,\dots, u^{\top}_H]^\top$ and the block-lower-triangular matrix $\bm{G}_u \in\mathbb{R}^{Hn \times Hm}$, the vector $\xi_j \in \mathbb{R}^{Hn}$ and the lifted weight matrices $\bm{C}$, $\bm{D}$ as
\begin{align}
&\bm{G}_u = \begin{bmatrix}B &0 & 0 & \dots & 0\\ 
    AB &B & 0 & \dots & 0\\ 
    A^2B &AB & B & \dots & 0\\ 
     & \dots & \dots & & \\
     A^{H-1}B & A^{H-2}B & A^{H-3}B & \dots & B\\
\end{bmatrix} & & \xi_j =\begin{bmatrix} -A e_j \\ -A^2 e_j \\ \dots \\ -A^H e_j \end{bmatrix} && \bm{C} = I_H \otimes C && \bm{D} =  I_H \otimes D , 
\end{align}
where $I_k$ is the identity matrix for $\R^k$. Denote by $P_i$, $1 \leq i \leq H$ the $i$-th block-row of $\bm{G}_u$:
\begin{align}
P_i = [A^{i-1}B, A^{i-2}B, \dots, B, 0, \dots,0]
\end{align}
Observe that with these definitions, it holds that for any feasible $\phi^{j,u}_{k}$, $\phi^{j,x}_{k}$ and fro all $\forall 1 \leq k \leq H$: $$\phi^{j,x}_{k+1} = -\xi_{j,k} + P_k \bm{u}$$ due to the constraints in \eqref{eq:objcol}.
Now we can rewrite the subproblem $S_j$ as
\begin{subequations}
\label{eq:SJ-original}
\begin{align}
\label{eq:Sj}S_j = \min\limits_{\bm{u}} \quad & \left\|\begin{bmatrix}\bm{C} \bm{G}_u\\\bm{D} \end{bmatrix} \bm{u} - \begin{bmatrix}\bm{C}\xi_j\\\bm{0}\end{bmatrix} \right\|^2_2 \quad  + (C^{\top}C)_{jj} \\
 \label{eq:cons}\text{ s.t.: } &\quad  0 = A^\top 
  e_j + P_H \bm{u}
\end{align}
\end{subequations}
For large systems which consist of many interconnected (sparsely) small systems, it is often the case that the overall system is $H$-controllable for some suitable choice of $H\ll n$ where $n$ is the global state dimension.

\subsection{Representation as a Least-Squares problem}
We now rewrite \eqref{eq:SJ-original} as a least square problem. Define $\bm{u}^*_{c} :=  P^{\top}_H (P_H P^{\top}_H)^{-1}A^\top e_j $, which is the solution to the optimization problem
\begin{align*}
\min\limits_{\bm{u}} & \quad \|\bm{u}\|^2_2 \\
\text{ s.t. } & 0 = -A^\top e_j + P_H \bm{u}. 
\end{align*}
We can interpret $\bm{u}^*_{c}$ as the smallest control action, measured in $\ell_2$, that drives the system from the origin to $-A^\top e_j$ in $H$ time-steps. This relates to controllability grammians as described in \cite{dullerud2013course}. 
Using $M^+$ to denote the Moore-Penrose Inverse of a matrix $M$, we can also write $\bm{u}^*_{c} :=  P^{+}_H A^\top e_j = P^\top_H W^{-1}_H A^\top e_j  $, where $W_H = P_H P_H^\top$.\\

Let $H$ denote the FIR-Horizon of the problem, then define the matrices
\begin{align}\label{eq:gab}
    &\bm{G}_w(A) = \begin{bmatrix}I &0 & 0 & \dots & 0\\ 
        A &I & 0 & \dots & 0\\ 
        A^2 &A & I & \dots & 0\\ 
         & \dots & \dots & & \\
         A^{H-1} & A^{H-2} & A^{H-3} & \dots & I\\
    \end{bmatrix}, &\bm{G}_u(A,B) = \begin{bmatrix}B &0 & 0 & \dots & 0\\ 
        AB &B & 0 & \dots & 0\\ 
        A^2B &AB & B & \dots & 0\\ 
         & \dots & \dots & & \\
         A^{H-1}B & A^{H-2}B & A^{H-3}B & \dots & B\\
    \end{bmatrix}.
\end{align}
and denote $P_i(A,B)$ as the $i$th block matrix row of $\bm{G}_u(A,B)$:
\begin{align}\label{eq:pdef}
    P_i(A,B) = [A^{i-1}B, A^{i-2}B, \dots, B, 0, \dots,0]
\end{align}
$\bm{G}_u(A,B)$ can be written as $\bm{G}_u(A,B) = \bm{G}_w(A)(I_H \otimes B)$, where $I_H$ is the identity matrix in $\R^H$. Let $Z \in \mathbb{R}^{H \times H}$ be defined as the nilpotent matrix
\begin{align}
Z = \begin{bmatrix} \bm{0}_{H-1 \times 1}  & I_{H-1} \\
0& \bm{0}_{1 \times H-1} \end{bmatrix},
\end{align}
and notice it's psuedo-inverse is $Z^+ = Z^\top$. Using $Z$, it is easy to verify that $\bm{G}_w(A)$ can be expressed as:
\begin{align}
    \bm{G}_w(A) = \left(I_H - Z^+ \otimes A\right)^{-1}.
\end{align}

Ignoring the constant terms in \eqref{eq:Sj}, we can reparametrize $\bm{u}= -\bm{u}^*_c + \bm{u}'$ where $\bm{u}'\in\mathrm{null}(P_H)$ and describe \eqref{eq:SJ-original} as the optimization problem:
\begin{align}\label{eq:sj}
    S_j \quad &:= \min\limits_{\bm{u}' \in \mathrm{null}(P_H(A,B))} \Big\|\begin{bmatrix}\bm{C}& 0\\ 0 & \bm{D}\end{bmatrix}\begin{bmatrix}\bm{G}_u(A,B) \\ I \end{bmatrix}(\bm{u}'-\bm{u}_c^*(A,B))\Big\|^2_2.
   \end{align}
Let $\bm{u}^*(A,B)$ be a minimizer of the above problem for fixed $A,B$, we are interested in the SLS solutions $$\phi^{*j}(A,B) := \begin{bmatrix}\phi^{*j}_x(A,B) \\ \phi^{*j}_u(A,B) \end{bmatrix} = \begin{bmatrix}\bm{G}_u(A,B)\\I \end{bmatrix}(\bm{u}^*(A,B)-\bm{u}_c^*(A,B))$$ 
and how these solutions are perturbed with changes in $A,B$.

For the rest of the discussion, we will drop mentioning the explicit dependence on $(A,B)$ and the column index $j$ to reduce the notational burden.
First, we (over-)parametrize $\bm{u}$ as $$\bm{u} = (I-P^+_H P_H)\bm{\eta}, $$ to cast the above problem into an unconstrained one:
\begin{align}\label{eq:sj}
    S_j \quad &:= \min\limits_{\bm{\eta}} \Big\|\underbrace{\begin{bmatrix}\bm{C}& 0\\ 0 & \bm{D}\end{bmatrix}\begin{bmatrix}\bm{G}_u(A,B) \\ I \end{bmatrix}(I-P^+_HP_H)}_{\F}\bm{\eta}- \underbrace{\begin{bmatrix}\bm{C}& 0\\ 0 & \bm{D}\end{bmatrix}\begin{bmatrix}\bm{G}_u(A,B) \\ I \end{bmatrix}\bm{u}^*_c(A,B)}_{\g}\Big\|^2_2
\end{align}
The unique min-norm solution $\eta^*$ to the above problem is $\eta^* = \F^+\g$ and therefore the optimal solution $\phi^*$ takes the form
\begin{align}
\label{eq:phi}
\phi^* = \begin{bmatrix}\bm{C}^{-1}& 0\\ 0 & \bm{D}^{-1}\end{bmatrix}(\F \F^+\g - \g) = \begin{bmatrix}\bm{C}^{-1}& 0\\ 0 & \bm{D}^{-1}\end{bmatrix}\underbrace{(\F \F^+ - I)\g}_{\nu^*} =: \begin{bmatrix}\bm{C}^{-1}& 0\\ 0 & \bm{D}^{-1}\end{bmatrix} \nu^*
\end{align}

\subsection{Local lipshitzness of $\cl{H}_2$-optimal closed-loop operators}
\noindent Here, we perform perturbation analysis on the term $\nu^* = (\F \F^+ - I)\g$. Throughout the discussion, we will make frequent use of the following identities:
\begin{lemma}\label{lem:diffs}
For arbitrary matrices $X,Y\in\R^{n \times m}$ and $A, B \in \R^{n \times n}$, it holds that 
\begin{enumerate}[i)]
\item $A^k_1-A^k_2 = \sum^{k-1}_{j=0} A^{k-1-j}_1(A_1-A_2)A^j_2$
\item $XX^+-YY^+ = (I-XX^+)(X-Y)Y^+ + \left[(I-YY^+)(X-Y)X^+\right]^\top$
\item If $A$ and $B$ are invertible, then $A^{-1}-B^{-1}= A^{-1}(B-A)B^{-1}$.
\end{enumerate}
\end{lemma}
\noindent The following is a corollary from Theorem 4.1 in \cite{wedin1973perturbation}:
\begin{theorem}\label{thm:wedin}
Let $X$ and $Y$ be matrices with equal rank, let $\|\:\cdot\:\|_2$ denote the induced 2-norm and $\|\:\cdot\:\|_F$ denote the Frobenius norm. The following inequalities hold:
\begin{align*}
\|X^+-Y^+\|_2 &\leq \varphi \|X^+\|_2\|Y^+\|_2 \|X-Y\|_2\\
\|X^+-Y^+\|_F &\leq \sqrt{2} \|X^+\|_2\|Y^+\|_2 \|X-Y\|_F
\end{align*}
where $\varphi = \tfrac{1 + \sqrt{5}}{2}$ denotes the golden ratio constant.
\end{theorem}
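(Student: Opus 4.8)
The plan is to reduce the theorem to an exact additive identity for $X^+-Y^+$ and then read off the two constants from two different ``collapsing'' arguments. Write $E=X-Y$, let $P_X=XX^+$, $P_Y=YY^+$ be the orthogonal projectors onto the column spaces (ranges) of $X,Y$, and let $Q_X=X^+X$, $Q_Y=Y^+Y$ be the projectors onto the row spaces. First I would establish the clean three-term decomposition
\[
X^+-Y^+ \;=\; \underbrace{-\,X^+EY^+}_{T_1}\;+\;\underbrace{X^+(I-P_Y)}_{T_2}\;-\;\underbrace{(I-Q_X)Y^+}_{T_3},
\]
which holds for \emph{all} $X,Y$ with no rank hypothesis: expanding $T_1=-X^+XY^++X^+YY^+$ and using $X^+XX^+=X^+$, $Y^+YY^+=Y^+$ shows the three terms telescope to $X^+-Y^+$. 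This is the standard pseudoinverse perturbation identity; I would verify it by the direct expansion just described (it also follows by specializing Wedin's Theorem~4.1, and is in the spirit of the projector identity in Lemma~\ref{lem:diffs}(ii)).

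Next comes the key quantitative step, the only place the equal-rank hypothesis enters: I claim each $T_i$ obeys $\|T_i\|\le\|X^+\|_2\|Y^+\|_2\|E\|$ in both norms. For $T_1$ this is immediate from submultiplicativity. For $T_2=X^+(I-P_Y)=X^+P_X(I-P_Y)$ I would bound $\|T_2\|\le\|X^+\|_2\,\|(I-P_Y)P_X\|$; the crucial fact is that when $\operatorname{rank}X=\operatorname{rank}Y$ the ranges of $X$ and $Y$ are equidimensional, so by the principal-angle / CS decomposition $\|(I-P_Y)P_X\|=\|(I-P_X)P_Y\|$ in both the spectral and the Frobenius norm. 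Since $(I-P_X)P_Y=-(I-P_X)EY^+$, this gives $\|(I-P_Y)P_X\|\le\|Y^+\|_2\|E\|$, hence $\|T_2\|\le\|X^+\|_2\|Y^+\|_2\|E\|$. The term $T_3$ is handled identically after transposing, using equal rank of $X^*,Y^*$ and the analogous symmetry of the row-space projectors $Q_X,Q_Y$.

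With all three pieces controlled, the two constants come from two different aggregation arguments. For the Frobenius bound I would note that $T_1,T_2,T_3$ have pairwise orthogonal row/column supports — they are the $(Q_X,P_Y)$, $(Q_X,I-P_Y)$, $(I-Q_X,P_Y)$ blocks of $X^+-Y^+$, with the fourth block zero — so that $\|X^+-Y^+\|_F^2=\|T_1\|_F^2+\|T_2\|_F^2+\|T_3\|_F^2$. Re-expressing the bound on each $T_i$ through the mutually orthogonal blocks $P_XEQ_Y$, $(I-P_X)E$, $E(I-Q_Y)$ of $E$, the block $(I-P_X)E(I-Q_Y)$ gets counted twice, which collapses the sum to at most $2\,(\|X^+\|_2\|Y^+\|_2)^2\|E\|_F^2$ and yields the factor $\sqrt2$. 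For the spectral bound the Pythagorean split fails, so instead I would argue per unit vector $w$: splitting $w$ by $P_Y$ (writing $\|P_Yw\|=\cos\phi$, $\|(I-P_Y)w\|=\sin\phi$) and splitting the output by $Q_X$, the component of $(X^+-Y^+)w$ in $\operatorname{range}(X^*)^\perp$ is $T_3P_Yw$ while the component in $\operatorname{range}(X^*)$ is $T_1P_Yw+T_2(I-P_Y)w$. This reduces $\|(X^+-Y^+)w\|^2$ to $\le(\|X^+\|_2\|Y^+\|_2\|E\|_2)^2\big[(\cos\phi+\sin\phi)^2+\cos^2\phi\big]$, and maximizing the bracket over $\phi$ gives $\tfrac{3+\sqrt5}{2}=\varphi^2$, i.e. the golden ratio.

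I expect the main obstacle to be the second step: driving each term down to the \emph{product} $\|X^+\|_2\|Y^+\|_2$ rather than $\|X^+\|_2^2$ or $\|Y^+\|_2^2$. This is exactly where equal rank is indispensable, through the symmetry $\|(I-P_Y)P_X\|=\|(I-P_X)P_Y\|$ of complementary projector products for equidimensional subspaces; without it one is left with the cruder $\max(\|X^+\|_2,\|Y^+\|_2)^2$ estimate of the general (unequal-rank) Wedin theorem. The remaining subtlety is only that the operator-norm constant needs the per-vector quadratic optimization rather than the clean Frobenius orthogonality, but both are routine once the three-term estimate is in place.
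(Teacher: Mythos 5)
Your proposal is correct, but it necessarily takes a different route from the paper, because the paper supplies no proof at all: Theorem~\ref{thm:wedin} is imported verbatim as a corollary of Theorem~4.1 of \cite{wedin1973perturbation}. What you have written is a valid, self-contained reconstruction of the classical Wedin-type argument, and I checked each step. The telescoping identity $X^+-Y^+=-X^+EY^+ + X^+(I-P_Y)-(I-Q_X)Y^+$ holds for arbitrary $X,Y$ exactly as you expand it; each term is bounded by $\|X^+\|_2\|Y^+\|_2\|E\|$, with the equal-rank hypothesis entering precisely where you say, via the principal-angle identity $\|(I-P_Y)P_X\|=\|(I-P_X)P_Y\|$ for equidimensional ranges (and its row-space analogue for $T_3$), combined with $(I-P_X)P_Y=-(I-P_X)EY^+$ and $Q_X(I-Q_Y)=X^+E(I-Q_Y)$; the Frobenius constant follows from the Pythagorean identity over the three orthogonal blocks of $X^+-Y^+$ plus your bookkeeping in which the corner block $(I-P_X)E(I-Q_Y)$ of $E$ is charged twice; and the spectral constant follows from maximizing $(\cos\phi+\sin\phi)^2+\cos^2\phi$, whose maximum is indeed $\tfrac{3+\sqrt{5}}{2}=\varphi^2$. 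The trade-off between the two routes is the usual one: the citation keeps the paper short and leans on a standard reference, while your argument makes the result self-contained and makes transparent exactly what fails without equal rank. One further remark: your Frobenius bookkeeping is slightly lossy. If you insert \emph{both} projections around $E$, i.e.\ bound $\|T_2\|_F\le \|X^+\|_2\|Y^+\|_2\|(I-P_X)EQ_Y\|_F$ and $\|T_3\|_F\le \|X^+\|_2\|Y^+\|_2\|P_XE(I-Q_Y)\|_F$, then the three terms are charged to three \emph{mutually orthogonal} blocks of $E$, no block is counted twice, and the constant $\sqrt{2}$ improves to $1$; so your method in fact proves a sharper Frobenius inequality than the one the paper cites (and a constant of $1$ is the best possible, as the rank-one case $X=\sigma_1u_1v_1^*$, $Y=\sigma_2u_2v_2^*$ already attains it with equality).
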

 
 \noindent Next we present the core theorem of the perturbation analysis: Given two \textit{arbitrary} controllable systems $(A_1,B_1)$ and $(A_2,B_2)$, \thmref{thm:mainsensitivity} bounds the worst-case difference in solutions $\|\phi^{*}_1-\phi^{*}_2\|_2$ in terms of the differences in parameters space $\|A_1-A_2\|_{2}$ and $\|B_1-B_2\|_2$ between both systems. This result is the first perturbation bound for $\cl{H}_2$-optimal control with SLS (considering arbitrary pairs of $A_1,A_2$ and $B_1,B_2$) . 
\begin{theorem}\label{thm:mainsensitivity}
Let $C,D \succ 0$, let $(A_1,B_1)$ and $(A_2,B_2)$ be two controllable pairs of system matrices with FIR horizon $H$ and let $\phi^{*j}_1$ and $\phi^{*j}_2$ be the corresponding SLS-solutions to the subproblem $S_j$. Then, it holds that:
\begin{align}
\|\phi^{j*}_1-\phi^{j*}_2\|_2 \leq \Gamma_A\|A_1-A_2\|_F + \Gamma_B \|B_1-B_2\|_F
\end{align}
where the Lipshitz-constants $\Gamma_A,\Gamma_B$ stand for
\begin{align*}
    \Gamma_A &= \kappa_{CD}\Gamma'_1 + \kappa_{CD}\Gamma'_2 \|B_1\|_2\|\bm{G}_w(A_1)\|_2, &&\kappa_{CD} = \frac{\max\{\sigma_{max}(C),\sigma_{max}(D)\}}{\min\{\sigma_{min}(C),\sigma_{min}(D)\}}  \\
    \Gamma_B &= \kappa_{CD}\Gamma'_2 \|\bm{G}_w(A_2)\|_2 && 
\end{align*}
and $\Gamma'_1$ and $\Gamma'_2$ are defined as:
\begin{align*}
    \Gamma'_1 &= \alpha_{H,1} \alpha_{H,2}H(1+\left\|\bm{G}_{u,2}\right\|_2)\|P^+_{H,2}\|_2  \\
    \Gamma'_2 &= \alpha_{H,1} \|P^+_{H,1}\|_2 \left(1 + {\varphi} \|P^+_{H,2}\|_2 +{\varphi} \|P^+_{H,2}\|_2\left\|\bm{G}_{u,2}\right\|_2 \right) + \|\g_2\|_2(\|\F^+_1\|_2 + \|\F^+_2\|_2) + \dots \\
    \notag & \quad + {\varphi}\|\g_2\|_2(\|\F^+_1\|_2 + \|\F^+_2\|_2)\|P^+_{H,1}\|_2 \|P^+_{H,2}\|_2(\|P_{H,1}\|_2 + \|P_{H,2}\|_2)(1+\|\bm{G}_{u,1}\|_2).
\end{align*}
and $\varphi=\tfrac{1+\sqrt{5}}{2}$ is the golden ratio. 
\end{theorem}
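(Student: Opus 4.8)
The plan is to reduce everything to the single vector $\nu^* = (\F\F^+ - I)\g$ identified in the derivation above, since $\phi^{*j} = \mathrm{diag}(\bm{C}^{-1},\bm{D}^{-1})\nu^*$ and the two weight matrices are the only source of the factor $\kappa_{CD}$. First I would establish the clean two-term decomposition
\[
\nu^*_1 - \nu^*_2 = (\F_1\F_1^+ - \F_2\F_2^+)\g_1 + (\F_2\F_2^+ - I)(\g_1 - \g_2),
\]
obtained by adding and subtracting $\F_2\F_2^+\g_1$. Because $I - \F_2\F_2^+$ is an orthogonal projector, its operator norm is at most one, so the triangle inequality gives $\norm{\nu^*_1 - \nu^*_2}_2 \le \norm{\F_1\F_1^+ - \F_2\F_2^+}_2\,\norm{\g_1}_2 + \norm{\g_1 - \g_2}_2$. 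This isolates the two genuinely analytic quantities: the perturbation of the range projector $\F\F^+$, and the perturbation of $\g$.

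For the projector term I would apply identity (ii) of \Cref{lem:diffs} with $X = \F_1$, $Y = \F_2$ and bound each summand using $\norm{I - \F_i\F_i^+}_2 \le 1$, obtaining $\norm{\F_1\F_1^+ - \F_2\F_2^+}_2 \le \norm{\F_1-\F_2}_2(\norm{\F_1^+}_2 + \norm{\F_2^+}_2)$. The remaining work is to bound the \emph{raw} perturbations $\norm{\F_1 - \F_2}_2$ and $\norm{\g_1 - \g_2}_2$ in terms of $\norm{A_1-A_2}_F$ and $\norm{B_1-B_2}_F$. Both $\F$ and $\g$ are built from three ingredients: the block-Toeplitz maps $\bm{G}_w(A) = (I_H - Z^+\otimes A)^{-1}$ and $\bm{G}_u(A,B) = \bm{G}_w(A)(I_H\otimes B)$, the controllability block-row $P_H(A,B)$ and its pseudo-inverse $P_H^+$, and the minimal-energy action $\bm{u}_c = P_H^+ A^H e_j$. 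I would perturb these one at a time: power differences $A_1^k - A_2^k$ via identity (i) (producing the constants $\alpha_{H,i}$ that collect the norms of the intermediate powers), the inverse $\bm{G}_w$ via identity (iii), and the pseudo-inverses $P_H^+$, $\F^+$ via \Cref{thm:wedin}, whose equal-rank hypothesis is guaranteed because controllability (\Cref{lem:baseCtrl}) forces $P_H$ to be full row rank and hence $\F$ to have the fixed rank $Hm - n$, independent of $(A,B)$. Substituting $\bm{G}_u = \bm{G}_w(I_H\otimes B)$ lets me split each difference into an $A$-only part and a $B$-only part, which is exactly what produces the two separate constants $\Gamma_A$ and $\Gamma_B$.

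Assembling these bounds, every occurrence of $\norm{P_H^+}_2$, $\norm{\F^+}_2$, $\norm{\bm{G}_w}_2$, $\norm{\bm{G}_u}_2$ is uniformly controlled by the controllability constants of \Cref{lem:baseCtrl} (in particular $\underline{\sigma}^u$ bounds $\norm{P_H^+}_2$), so the collected coefficients $\Gamma_1', \Gamma_2'$ and then $\Gamma_A, \Gamma_B$ are finite; the golden-ratio constant $\varphi$ enters precisely through the Wedin bounds on the $P_H^+$ and $\F^+$ differences. Finally I would pass back from $\nu^*$ to $\phi^{*j}$ via the diagonal weight $\mathrm{diag}(\bm{C}^{-1},\bm{D}^{-1})$, which contributes $1/\min\{\sigma_{\min}(C),\sigma_{\min}(D)\}$, while the $\bm{C},\bm{D}$ sitting inside $\g$ and $\F$ contribute $\max\{\sigma_{\max}(C),\sigma_{\max}(D)\}$; together these yield the factor $\kappa_{CD}$ multiplying both constants. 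I expect the main obstacle to be the $\g$-perturbation term $\norm{\g_1-\g_2}_2$: here $\bm{u}_c = P_H^+ A^H e_j$ composes a pseudo-inverse with a high matrix power, so its perturbation requires \Cref{thm:wedin} and identity (i) simultaneously and must be split carefully so that the $A$- and $B$-contributions land in $\Gamma_A$ and $\Gamma_B$ with the correct norms of $B_1$, $\bm{G}_w(A_1)$, and so on. Bookkeeping the cross terms so that nothing depends on global (rather than local, controllability-bounded) quantities is the delicate part.
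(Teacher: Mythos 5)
Your proposal is correct and follows essentially the same route as the paper's proof: the same add-and-subtract decomposition of $\nu^*_1-\nu^*_2$ (yours subtracts $\F_2\F_2^+\g_1$ where the paper subtracts $\F_1\F_1^+\g_2$, a trivially symmetric variant), identity (ii) of \Cref{lem:diffs} for the projector difference, Wedin's bound (\Cref{thm:wedin}) for $\|P_{H,1}^+-P_{H,2}^+\|_2$, identities (i) and (iii) for the power and resolvent perturbations, and the factorization $\bm{G}_u(A,B)=\bm{G}_w(A)(I_H\otimes B)$ (the paper's \Cref{lem:Gwu}) to separate the $A$- and $B$-contributions into $\Gamma_A$ and $\Gamma_B$, with $\kappa_{CD}$ arising exactly as you describe. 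The only cosmetic difference is that you invoke Wedin also for $\F^+$, which the assembled bound does not actually require since only the norms $\|\F_1^+\|_2,\|\F_2^+\|_2$ enter via identity (ii).
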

\begin{proof}
Recall the identities of \lemref{lem:diffs}. Write $\nu^*_1-\nu^*_2$ where $\nu^*_i$ is from \eqref{eq:phi} for $(A_i,B_i)$ as
\begin{align}
\notag    \nu^*_1-\nu^*_2 &= (\F_1\F^+_1-I)(\g_1-\g_2) + (\F_1\F^+_1 - \F_2\F^+_2)\g_2 \\
 \label{eq:nu1bound}   \|\nu^*_1-\nu^*_2\|_2&\leq \|\g_1-\g_2\|_2 + \|\F_1\F^+_1 - \F_2\F^+_2\|_2\|\g_2\|_2,
\end{align}
where we used the fact that $(\F_1\F^+_1-I)$ is a projection and therefore $\|\F_1\F^+_1-I\|_2 =1$. Rewrite $\F_1\F^+_1 - \F_2\F^+_2$ as
$$ (I-\F_1\F^+_1)(\F_1-\F_2)\F^+_2 + \left[(I-\F_2\F^+_2)(\F_1-\F_2)\F^+_1\right]^\top $$ to conclude that 
\begin{align}
\|\F_1\F^+_1 - \F_2\F^+_2\|_2 \leq \|\F_1-\F_2\|_2(\|\F^+_1\|_2 + \|\F^+_2\|_2).
\end{align}
Substitution into \eqref{eq:nu1bound} yields:
\begin{align}
     \label{eq:nu2bound}   \|\nu^*_1-\nu^*_2\|_2 &\leq \|\g_1-\g_2\|_2 + \|\F_1-\F_2\|_2(\|\F^+_1\|_2 + \|\F^+_2\|_2)\|\g_2\|_2,
\end{align}
\begin{enumerate}
\item \underline{Bounding $\|\F_1-\F_2\|_2$}: Rewrite $\F_1-\F_2$ as 
\begin{align}
    \begin{bmatrix}\bm{C}^{-1}& 0\\ 0 & \bm{D}^{-1}\end{bmatrix}(\F_1-\F_2) &=  \begin{bmatrix}\bm{G}_{u,1} \\ I \end{bmatrix}(I-P^+_{H,1}P_{H,1}) -\begin{bmatrix}\bm{G}_{u,2} \\ I \end{bmatrix}(I-P^+_{H,2}P_{H,2}) \\
    &=  \begin{bmatrix}\bm{G}_{u,1} \\ I \end{bmatrix}(P^+_{H,2}P_{H,2}-P^+_{H,1}P_{H,1}) + \begin{bmatrix}\bm{G}_{u,1} - \bm{G}_{u,2} \\ 0 \end{bmatrix}(I-P^+_{H,2}P_{H,2})
\end{align}
From the above we can derive the inequality:
\begin{align}\label{eq:F12bound}
\frac{\|\F_1-\F_2\|_2}{\max\{\|C\|_2,\|D\|_2\}} &\leq (1+\|\bm{G}_{u,1}\|_2)\|P^+_{H,2}-P^+_{H,1}\|_2(\|P_{H,1}\|_2 + \|P_{H,2}\|_2) + \|\bm{G}_{u,1}-\bm{G}_{u,2}\|_2 
\end{align}
Now we will use the result \thmref{thm:wedin} to bound $\|P^+_{H,2}-P^+_{H,1}\|_2$ as
\begin{align}
    \|P^+_{H,2}-P^+_{H,1}\|_2 \leq {\varphi} \|P^+_{H,1}\|_2 \|P^+_{H,2}\|_2 \|P_{H,2}-P_{H,1}\|_2
\end{align}
Furthermore, noticing $P_{H,2}-P_{H,1} = [\bm{0},\dots,\bm{0},\bm{I}_n](\bm{G}_{u,2}-\bm{G}_{u,1})$ we can conclude 
\begin{align}
    \|P^+_{H,2}-P^+_{H,1}\|_2 \leq {\varphi} \|P^+_{H,1}\|_2 \|P^+_{H,2}\|_2 \|\bm{G}_{u,2}-\bm{G}_{u,1}\|_2.
\end{align}
We combine this into \eqref{eq:F12bound} to obtain
\begin{align}\label{eq:F12bound-2}
    \notag &\frac{\|\F_1-\F_2\|_2}{\max\{\|C\|_2,\|D\|_2\}}\\
     \leq &\left( 1+ {\varphi} \|P^+_{H,1}\|_2 \|P^+_{H,2}\|_2 (1+\|\bm{G}_{u,1}\|_2)(\|P_{H,1}\|_2 + \|P_{H,2}\|_2) \right)\|\bm{G}_{u,1}-\bm{G}_{u,2}\|_2 
    \end{align}
\item \underline{Bounding $\|\g_1-\g_2\|_2$}:
Introduce the constant $\alpha_H := \max_{0\leq k\leq H} \|A^k\|_2$ and observe that $\|A^H_1-A^H_2\|_2$ can be bounded as:
\begin{align}
    \|A^H_1-A^H_2\|_2 = \|\sum^{H-1}_{j=0} A^{H-1-j}_1(A_1-A_2)A^j_2\| \leq H \alpha_{H,1} \alpha_{H,2} \|A_1-A_2\|_2
\end{align}
We can rewrite $\g_1-\g_2$ as
\begin{align}
    \begin{bmatrix}\bm{C}^{-1}& 0\\ 0 & \bm{D}^{-1}\end{bmatrix}(\g_1-\g_2) &= \begin{bmatrix}\bm{G}_{u,1} \\ I \end{bmatrix}P^+_{H,1}A^H_1e_j - \begin{bmatrix}\bm{G}_{u,2} \\ I \end{bmatrix}P^+_{H,2}A^H_2e_j \\
&=\begin{bmatrix}(\bm{G}_{u,1}-\bm{G}_{u,2}) \\ 0 \end{bmatrix}P^+_{H,1}A^H_1e_j + \begin{bmatrix}\bm{G}_{u,2} \\ I\end{bmatrix}(P^+_{H,1}-P^+_{H,2})A^H_1e_j \\
\notag&\quad \dots+ \begin{bmatrix}\bm{G}_{u,2} \\ I \end{bmatrix}P^+_{H,2}(A^H_1-A^H_2)e_j
\end{align}
and obtain the bound:
\begin{align}
\frac{\|\g_1-\g_2\|_2}{\max\{\|C\|_2,\|D\|_2\}} &\leq \alpha_{H,1}\left\| \bm{G}_{u,1}-\bm{G}_{u,2}\right\|_2 \|P^+_{H,1}\|_2+ \alpha_{H,1}(1+\left\|\bm{G}_{u,2}\right\|_2)\left\|P^+_{H,1}-P^+_{H,2}\right\|_2 \\
&\quad \dots + \alpha_{H,1} \alpha_{H,2}H(1+\left\|\bm{G}_{u,2}\right\|_2)\|P^+_{H,2}\|_2\|A_1-A_2\|_2 \\
&\leq  \alpha_{H,1} \|P^+_{H,1}\|_2 \left(1 + {\varphi} \|P^+_{H,2}\|_2 +{\varphi} \|P^+_{H,2}\|_2\left\|\bm{G}_{u,2}\right\|_2 \right)\left\| \bm{G}_{u,1}-\bm{G}_{u,2}\right\|_2 \\
&\quad \dots + \alpha_{H,1} \alpha_{H,2}H(1+\left\|\bm{G}_{u,2}\right\|_2)\|P^+_{H,2}\|_2\|A_1-A_2\|_2
\end{align}
\end{enumerate}
We get the bound 
\begin{align}
    \frac{\|\nu^*_1-\nu^*_2\|_2}{\max\{\|C\|_2,\|D\|_2\}} \leq \Gamma'_1 \|A_1-A_2\|_2 + \Gamma'_2 \|\bm{G}_{u,1}-\bm{G}_{u,2}\|_2
\end{align}
where $\Gamma'_1$ and $\Gamma'_2$ are the constants:
\begin{align}
    \Gamma'_1 &= \alpha_{H,1} \alpha_{H,2}H(1+\left\|\bm{G}_{u,2}\right\|_2)\|P^+_{H,2}\|_2  \\
    \Gamma'_2 &= \alpha_{H,1} \|P^+_{H,1}\|_2 \left(1 + {\varphi} \|P^+_{H,2}\|_2 +{\varphi} \|P^+_{H,2}\|_2\left\|\bm{G}_{u,2}\right\|_2 \right) + \|\g_2\|_2(\|\F^+_1\|_2 + \|\F^+_2\|_2) + \dots \\
    \notag & \quad + {\varphi}\|\g_2\|_2(\|\F^+_1\|_2 + \|\F^+_2\|_2)\|P^+_{H,1}\|_2 \|P^+_{H,2}\|_2(\|P_{H,1}\|_2 + \|P_{H,2}\|_2)(1+\|\bm{G}_{u,1}\|_2) 
\end{align}
Using \lemref{lem:Gwu}, we obtain the final bound:
\begin{align}\label{eq:Gamma12}
    \|\phi^*_1 - \phi^*_2\|_2 \leq  \kappa_{CD}\|\nu^*_1-\nu^*_2\|_2 \leq \Gamma_A \|A_1-A_2\|_2 + \Gamma_B \|B_1-B_2\|_2
\end{align}
with the constants $\Gamma_A,\Gamma_B$ defined as:
\begin{align}\label{eq:GammaAB}
\Gamma_A &= \kappa_{CD}\Gamma'_1 + \kappa_{CD}\Gamma'_2 \|B_1\|_2\|\bm{G}_w(A_1)\|_2 \|\bm{G}_w(A_2)\|_2 \\
\Gamma_B &= \kappa_{CD}\Gamma'_2 \|\bm{G}_w(A_2)\|_2
\end{align}
\end{proof}
\subsection{Global lipshitzness of $\cl{H}_2$-optimal closed-loop operators over compact sets $\cl{S}$}

\noindent This section derives a global Lipshitz bound for $\cl{H}_2$-optimal SLS solutions over a compact set of controllable systems $\cl{S}$. As a starting point we consider the previous theorem \thmref{thm:mainsensitivity}. Our main proof strategy is to derive global bounds on the constants $\Gamma_A$ and $\Gamma_B$ instead of for a fixed pair of systems. We proceed with a collection lemmas bounding individual terms in the equations \eqref{eq:Gamma12} and \eqref{eq:GammaAB} for $\cl{S}$.

\subsubsection{Auxiliary Lemmas}
\begin{lemma}\label{lem:Gwu}
For any pair of system matrices $(A_1,B_1)$ and $(A_2,B_2)$ (with compatible dimensions) holds
\begin{align}
    \|\bm{G}_w(A_1)-\bm{G}_w(A_2)\|_2 & \leq \|\bm{G}_w(A_1)\|_2 \|\bm{G}_w(A_2)\|_2 \| A_1 -  A_2\|_2 \\
    \notag \|\bm{G}_u(A_1,B_1)-\bm{G}_u(A_2,B_2)\|_2& \leq \|B_1\|_2\|\bm{G}_w(A_1)\|_2 \|\bm{G}_w(A_2)\|_2 \| A_1 -  A_2\|_2 + \|\bm{G}_w(A_2)\|_2\|B_1- B_2\|_2
\end{align}
\end{lemma}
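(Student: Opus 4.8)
The plan is to exploit the two structural facts about $\bm{G}_w$ and $\bm{G}_u$ already recorded in the excerpt: the closed-form inverse representation $\bm{G}_w(A) = (I_H - Z^+ \otimes A)^{-1}$ with $Z^+ = Z^\top$, and the factorization $\bm{G}_u(A,B) = \bm{G}_w(A)(I_H \otimes B)$. With these in hand, both inequalities reduce to the matrix identities collected in \lemref{lem:diffs} together with submultiplicativity of the induced $2$-norm and the elementary norm facts $\|M \otimes N\|_2 = \|M\|_2\|N\|_2$ and $\|Z\|_2 = \|Z^\top\|_2 = 1$.

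For the first inequality, I would set $X = I_H - Z^\top \otimes A_1$ and $Y = I_H - Z^\top \otimes A_2$, so that $\bm{G}_w(A_1) = X^{-1}$ and $\bm{G}_w(A_2) = Y^{-1}$. Applying identity (iii) of \lemref{lem:diffs} gives $X^{-1} - Y^{-1} = X^{-1}(Y-X)Y^{-1}$, and since $Y - X = Z^\top \otimes (A_1 - A_2)$, this becomes $\bm{G}_w(A_1) - \bm{G}_w(A_2) = \bm{G}_w(A_1)\,(Z^\top \otimes (A_1-A_2))\,\bm{G}_w(A_2)$. Taking $2$-norms, using submultiplicativity, and then $\|Z^\top \otimes (A_1-A_2)\|_2 = \|Z^\top\|_2 \, \|A_1 - A_2\|_2 = \|A_1 - A_2\|_2$, yields the claimed bound directly.

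For the second inequality, I would use the factorization to write $\bm{G}_u(A_1,B_1) - \bm{G}_u(A_2,B_2) = \bm{G}_w(A_1)(I_H \otimes B_1) - \bm{G}_w(A_2)(I_H \otimes B_2)$ and insert the cross term $\bm{G}_w(A_2)(I_H\otimes B_1)$, obtaining $(\bm{G}_w(A_1) - \bm{G}_w(A_2))(I_H \otimes B_1) + \bm{G}_w(A_2)(I_H \otimes (B_1 - B_2))$. The triangle inequality and submultiplicativity split this into two terms; using $\|I_H \otimes B_1\|_2 = \|B_1\|_2$ and $\|I_H \otimes (B_1 - B_2)\|_2 = \|B_1 - B_2\|_2$, and substituting the first inequality to control $\|\bm{G}_w(A_1) - \bm{G}_w(A_2)\|_2$, gives exactly the stated right-hand side.

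Honestly, there is no genuine obstacle here: the lemma is a routine perturbation estimate, and its entire content is organizing the add-subtract step and invoking the inverse-difference identity. The only points requiring a moment of care are verifying the Kronecker-norm identity $\|M\otimes N\|_2 = \|M\|_2\|N\|_2$ (standard, since the singular values of a Kronecker product are products of the singular values) and the fact that the nilpotent shift $Z$ has unit spectral norm, both of which I would state and use without further comment.
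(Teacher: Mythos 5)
Your proposal is correct and follows essentially the same route as the paper's proof: the inverse-difference identity from \lemref{lem:diffs} applied to $\bm{G}_w(A) = (I_H - Z^+\otimes A)^{-1}$ for the first bound, and the add-subtract of the cross term $\bm{G}_w(A_2)(I_H\otimes B_1)$ in the factorization $\bm{G}_u(A,B)=\bm{G}_w(A)(I_H\otimes B)$ for the second. Your explicit verification of the Kronecker-norm identity and $\|Z^\top\|_2=1$ only spells out steps the paper uses implicitly.
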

\begin{proof}
Using \lemref{lem:diffs} we can write 
Using $\bm{G}_u(A,B) = \bm{G}_w(A)(I_H \otimes B)$ and \lemref{lem:diffs} we can write $\bm{G}_{u,1}-\bm{G}_{u,2}$ as

\begin{align}
    \bm{G}_{u,1}-\bm{G}_{u,2} &= \bm{G}_w(A_1)(I_H \otimes B_1) - \bm{G}_w(A_2)(I_H \otimes B_2) \\
    &= \left(\bm{G}_w(A_1)-\bm{G}_w(A_2)\right)(I_H \otimes B_1) + \bm{G}_w(A_2)\left( I_H \otimes (B_1- B_2)\right) 
\end{align}
It holds that
\begin{align}
    \bm{G}_w(A_1)-\bm{G}_w(A_2) &= \bm{G}_w(A_1)(\bm{G}_w(A_2)^{-1}-\bm{G}_w(A_1)^{-1})\bm{G}_w(A_2) \\
    &= \bm{G}_w(A_1)(Z^+ \otimes (A_1 -  A_2))\bm{G}_w(A_2)
\end{align}
which leads to the bound
\begin{align}
\left\| \bm{G}_w(A_1)-\bm{G}_w(A_2) \right\|_2 \leq \|\bm{G}_w(A_1)\|_2 \|A_1-A_2\|_2 \|\bm{G}_w(A_2)\|_2 
\end{align}
\end{proof}
In total, we need to global bounds on the quantities $\|\bm{G}_{u}\|_2$,$\|\bm{G}_{w}\|_2$, $\|P^+_{H}\|_2$, $\|P_{H}\|_2$, $\|\F^+\|_2$, $\|\g\|_2$.

\begin{lemma}
Let $(A,B)$ be pair of fixed system matrices, let $\bm{G}_{u}(A,B)$, $\bm{G}_w(A)$ be the matrices defined in \eqref{eq:gab}, and let $W^u_H = \sum^{H-1}_{i=0} A^iBB^\top A^{i\top}$, $W^w_H = \sum^{H-1}_{i=0} A^i A^{i\top}$  be the $H$th controllability grammian w.r.t to the input $u$ and the distrubance $w$, respectively. Then it holds:
\begin{align}
    &\|\bm{G}_{u}(A,B)\|_2 \leq \sqrt{H \sigma_{max}(W^u_H(A,B))} && \|\bm{G}_{w}(A)\|_2 \leq \sqrt{H \sigma_{max}(W^w_H(A))} 
\end{align}
\end{lemma}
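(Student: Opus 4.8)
The plan is to prove the stated bound for $\bm{G}_u(A,B)$ and then obtain the bound for $\bm{G}_w(A)$ for free by specialization, since $\bm{G}_w(A) = \bm{G}_u(A,I)$ and correspondingly $W^w_H(A) = W^u_H(A,I)$. Thus everything reduces to showing $\|\bm{G}_u(A,B)\|_2 \le \sqrt{H\,\sigma_{max}(W^u_H)}$, which I would attack by estimating the action of $\bm{G}_u$ on an arbitrary unit vector rather than through a triangle-inequality expansion.

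First I would fix a unit vector $\bm{x} = [x_1^\top,\dots,x_H^\top]^\top$ and read off the block structure of $\bm{y} := \bm{G}_u \bm{x}$. Because $\bm{G}_u$ is block lower triangular with $(i,j)$ block $A^{i-j}B$ for $j\le i$, its $i$-th block is $y_i = \sum_{j=1}^{i} A^{i-j}B\,x_j = P_i\,\bar{x}_i$, where $P_i = [A^{i-1}B,\dots,AB,B]$ is exactly the nonzero part of the $i$-th block row of $\bm{G}_u$ and $\bar{x}_i = [x_1^\top,\dots,x_i^\top]^\top$ is the truncation of $\bm{x}$ to its first $i$ blocks. This is the crucial rewriting: it exposes each output block as the image of a truncation of $\bm{x}$ under $P_i$, reusing the $P_i$ already defined in \eqref{eq:pdef}.

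Next I would bound each block. Since $P_iP_i^\top = \sum_{k=0}^{i-1}A^kBB^\top(A^k)^\top = W^u_i$, we have $\|P_i\|_2^2 = \sigma_{max}(W^u_i)$; and because $W^u_H - W^u_i = \sum_{k=i}^{H-1}A^kBB^\top(A^k)^\top \succeq 0$, monotonicity of the maximum eigenvalue over the positive-semidefinite order gives $\sigma_{max}(W^u_i)\le\sigma_{max}(W^u_H)$. Combined with $\|\bar{x}_i\|_2\le\|\bm{x}\|_2 = 1$, this yields $\|y_i\|_2^2 \le \sigma_{max}(W^u_H)$ for every $i$. Summing over the $H$ stacked output blocks, $\|\bm{y}\|_2^2 = \sum_{i=1}^H\|y_i\|_2^2 \le H\,\sigma_{max}(W^u_H)$, and taking the supremum over unit $\bm{x}$ delivers the claim; applying it with $B=I$ gives the $\bm{G}_w$ bound.

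The main obstacle — and the reason the argument is worth stating carefully — is getting the factor $\sqrt{H}$ rather than $H$. The tempting route of writing $\bm{G}_u = \sum_{l=0}^{H-1}(Z^\top)^l\otimes(A^lB)$ and applying the triangle inequality produces $\sum_{l}\|A^lB\|_2 \le H\sqrt{\sigma_{max}(W^u_H)}$, which is loose by a $\sqrt{H}$ factor. The Pythagorean summation across the stacked output blocks is precisely what recovers the tight $\sqrt{H}$; the only facts one must verify along the way are the identity $P_iP_i^\top = W^u_i$ and the grammian monotonicity $W^u_i\preceq W^u_H$, both of which are immediate from the definitions.
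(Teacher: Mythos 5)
Your proof is correct and follows essentially the same route as the paper's: both decompose the action of $\bm{G}_u$ into its block rows $P_i$ (your truncation $P_i\bar{x}_i$ is identical to the paper's zero-padded $P_k\bm{u}$), bound each block via $\|P_i\|_2^2 = \sigma_{max}(W^u_i) \le \sigma_{max}(W^u_H)$ using grammian monotonicity, and sum the $H$ squared block norms to recover the tight $\sqrt{H}$ factor. Your closing remark about why the naive triangle-inequality expansion loses a $\sqrt{H}$ is a nice observation, but the core argument matches the paper's.
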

\begin{proof}
$\|\bm{G}_{u}\|_2$ is defined as $\|\bm{G}_{u}\|^2_2 := \max\limits_{\|u\|_2 = 1} \|\bm{G}_u \bm{u}\|^2_2$, by decomposing $\bm{u} = [u^\top_0,\dots,u^\top_{H-1}]^\top$ we can rewrite this as
\begin{align}
\|\bm{G}_{u}\|^2_2 &= \max\limits_{\|u\|_2 = 1} \left\| \begin{bmatrix} B u_0 \\ ABu_0 + Bu_1\\ \dots \\ A^{H-1}Bu_0 + \dots + Bu_{H-1} \end{bmatrix} \right\|^2_2 = \max\limits_{\|u\|_2 = 1} \sum^{H}_{k=1} \|P_k \bm{u}\|^2_2 \\
&\leq  \sum^{H}_{k=1} \max\limits_{\|u\|_2 = 1} \|P_k \bm{u}\|^2_2 =\sum^{H}_{k=1}  \|P_k\|^2_2 \leq H \|P_H\|^2_2 \leq H \|W^u_H\|_2
\end{align}
Where we used the fact that $\|P_k\|^2_2$ increases in $k$ and that $\|P_k\|^2_2$ is equal to the induced $2$-norm of the corresponding controllabillity grammian $W^u_k = \sum^{k-1}_{i=0} A^iBB^\top A^{i\top}$. Thus, we obtain the bound
$$\|\bm{G}_{u}(A,B)\|_2 \leq \sqrt{H \sigma_{max}(W^u_H(A,B))}, $$
and the bound on $\|\bm{G}_{w}(A)\|_2$ follows in the same way.
\end{proof}
\begin{lemma}
    Let $(A,B)$ be pair of $H$-controllable fixed system matrices, let $P_H(A,B)$ be the matrix defined in \eqref{eq:pdef}, and let $W^u_H = \sum^{H-1}_{i=0} A^iBB^\top A^{i\top}$ be the $H$th controllability grammian w.r.t to the input $u$. Then, the induced $2$ norm of $P_H(A,B)$ and its Moore-Penrose Inverse $P^+_H(A,B)$ can be written as:
    \begin{align}
        &\|P_H(A,B)\|_2 = \left(\sigma_{max}(W^u_H(A,B))\right)^{\tfrac{1}{2}}  && \|P^+_H(A,B)\|_2 = \left(\sigma_{min}(W^u_H(A,B))\right)^{-\tfrac{1}{2}}
    \end{align}
\end{lemma}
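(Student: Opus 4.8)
The plan is to reduce everything to the single algebraic identity $P_H P_H^\top = W^u_H$ and then read off both norms from the spectrum of the symmetric positive (semi)definite grammian.

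First I would verify this identity directly from the definition \eqref{eq:pdef}. Since $P_H(A,B) = [A^{H-1}B,\,A^{H-2}B,\,\dots,\,B]$ is a block row, multiplying it by its transpose collapses the block structure into a sum over the $H$ blocks:
\begin{align*}
P_H P_H^\top = \sum_{i=0}^{H-1} A^{i} B B^\top (A^{i})^\top = W^u_H,
\end{align*}
where the reindexing of $A^{H-1}B,\dots,A^0 B$ by $i = 0,\dots,H-1$ merely relabels the summation. Next, for the norm of $P_H$ itself, I would recall that for any matrix $M$ the squared singular values coincide with the eigenvalues of $MM^\top$, and that for the symmetric positive semidefinite matrix $W^u_H$ the eigenvalues equal its singular values. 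Hence $\|P_H\|_2^2 = \sigma_{max}(P_H P_H^\top) = \sigma_{max}(W^u_H)$, giving $\|P_H\|_2 = (\sigma_{max}(W^u_H))^{1/2}$.

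For the pseudoinverse I would first invoke $H$-controllability via \Cref{lem:baseCtrl}, which guarantees that $P_H$ has full (row) rank $n$; equivalently $W^u_H \succ 0$, so $\sigma_{min}(W^u_H) > 0$. Full row rank means the right inverse $P_H^+ = P_H^\top (P_H P_H^\top)^{-1} = P_H^\top (W^u_H)^{-1}$ is indeed the Moore-Penrose inverse, and the nonzero singular values of $P_H^+$ are precisely the reciprocals of the nonzero singular values of $P_H$. Thus $\|P_H^+\|_2 = 1/\sigma_{min}(P_H) = (\sigma_{min}(W^u_H))^{-1/2}$, where $\sigma_{min}(P_H)$ denotes the smallest \emph{nonzero} singular value of $P_H$.

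The computation is short; the only point requiring care is the pseudoinverse half. One must ensure that the relevant smallest singular value of $P_H$ is a nonzero one — otherwise $\|P_H^+\|_2$ would not equal $1/\sigma_{min}(P_H)$ — and this is exactly where the $H$-controllability hypothesis enters, forcing $P_H$ to be full rank and $W^u_H$ to be invertible. Without it the stated formula would be vacuous, since a zero eigenvalue of $W^u_H$ would give $\sigma_{min}(W^u_H)=0$. I expect no further obstacle, as the rest is a direct consequence of the standard correspondence between the singular values of a full-rank matrix and those of its Moore-Penrose inverse.
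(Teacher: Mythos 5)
Your proposal is correct and follows essentially the same route as the paper: both arguments use $H$-controllability to get that $P_H$ is full row rank, identify $P_H P_H^\top$ with the grammian $W^u_H$, and read off $\|P_H\|_2$ and $\|P_H^+\|_2$ from the extreme eigenvalues of $W^u_H$. Your writeup is simply more explicit than the paper's (verifying the block product $P_H P_H^\top = W^u_H$, stating the right-inverse formula $P_H^+ = P_H^\top (W^u_H)^{-1}$, and noting why the smallest singular value of $P_H$ is nonzero), which is a fine elaboration rather than a different method.
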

\begin{proof}
    Because we assume a sufficient degree of controllability, $P_H(A,B)$ is full row-rank. This implies that 
    \begin{align}
    &\|P_H(A,B)\|_2 = \sqrt{\lambda_{max}(P_H(A,B)P^\top_H(A,B))} = \sqrt{ \sigma_{max}(W^u_H(A,B))}\\
    &\left(\|P^+_H(A,B)\|_2\right)^{-1}  = \sqrt{\lambda_{min}(P_H(A,B)P^\top_H(A,B))} = \sqrt{ \sigma_{min}(W^u_H(A,B))}
    \end{align}
\end{proof}

\begin{lemma}
Let $(A,B)$ be a fixed pair of $H$-controllable system matrices, and let $\F(A,B)$ denote the matrix 
\begin{align}
    \F(A,B) = \begin{bmatrix}\bm{C}& 0\\ 0 & \bm{D}\end{bmatrix}\begin{bmatrix}\bm{G}_u(A,B) \\ I \end{bmatrix}(I-P^+_H(A,B)P_H(A,B)).
\end{align}
Then, $\|\F^+(A,B)\|_2 \leq \sigma^{-1}_{min}(D)$.
\end{lemma}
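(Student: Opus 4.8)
The plan is to exploit the standard fact that for any matrix $M$ the spectral norm of its Moore--Penrose inverse equals the reciprocal of its smallest \emph{nonzero} singular value, i.e.\ $\|M^+\|_2 = 1/\sigma_{\min}^+(M)$, where I write $\sigma_{\min}^+(M) := \min\{\|Mv\|_2 : \|v\|_2 = 1,\ v \perp \ker M\}$. Thus it suffices to show that $\F$ expands every vector orthogonal to its kernel by a factor of at least $\sigma_{\min}(D)$, and then take reciprocals. All the $A,B,C$-dependent blocks will end up contributing nonnegatively and can simply be discarded, which is what makes the bound clean.

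First I would set $\Pi := I - P_H^+(A,B) P_H(A,B)$, which is the orthogonal projector onto $\ker P_H$; it is a genuine projector because $H$-controllability makes $P_H$ full row rank, per \Cref{lem:baseCtrl}. I then write $\F = M\,\Pi$ with $M := \operatorname{diag}(\bm{C},\bm{D})\begin{bmatrix}\bm{G}_u\\ I\end{bmatrix}$. The key structural observation is that $M$ is injective: if $Mv=0$ then in particular $\bm{D} v = 0$, and since $\bm{D} = I_H \otimes D$ with $D \succ 0$ this forces $v=0$. Consequently $\F v = 0$ iff $\Pi v = 0$, so $\ker \F = \operatorname{range}(P_H^+ P_H)$ and hence $(\ker \F)^\perp = \ker P_H = \operatorname{range}(\Pi)$.

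Next I take any unit vector $v \in (\ker\F)^\perp$, so that $\Pi v = v$, and compute directly
\begin{align*}
\F v = \begin{bmatrix}\bm{C} & 0 \\ 0 & \bm{D}\end{bmatrix}\begin{bmatrix}\bm{G}_u v \\ v\end{bmatrix} = \begin{bmatrix}\bm{C}\bm{G}_u v \\ \bm{D} v\end{bmatrix},
\end{align*}
whence $\|\F v\|_2^2 = \|\bm{C}\bm{G}_u v\|_2^2 + \|\bm{D} v\|_2^2 \ge \|\bm{D} v\|_2^2 \ge \sigma_{\min}(\bm{D})^2\|v\|_2^2 = \sigma_{\min}(D)^2$, using $\sigma_{\min}(I_H \otimes D) = \sigma_{\min}(D)$. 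Minimizing over all such $v$ gives $\sigma_{\min}^+(\F) \ge \sigma_{\min}(D)$, and taking reciprocals yields $\|\F^+(A,B)\|_2 \le \sigma_{\min}^{-1}(D)$.

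There is little genuine difficulty here; the only points requiring care are the bookkeeping that identifies $(\ker\F)^\perp$ \emph{exactly} with $\operatorname{range}(\Pi)$ — this is what legitimately lets me drop the projector after restricting to the correct subspace, rather than picking up an extra factor — and the injectivity argument for $M$, which is what guarantees the kernel of $\F$ is no larger than $\operatorname{range}(P_H^+P_H)$. Everything else is the routine singular-value characterization of the pseudoinverse norm.
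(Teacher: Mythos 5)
Your proof is correct, and while it shares the same skeleton as the paper's — both use $\|\F^+\|_2^{-1} = \sigma_{\min}^{\text{nonzero}}(\F)$, factor $\F$ as (block-diagonal weight times stacked $\bm{G}_u$, $I$) composed with the projector $\Pi = I - P_H^+P_H$, and ultimately discard the $\bm{C}\bm{G}_u$ block to land on $\sigma_{\min}(\bm{D}) = \sigma_{\min}(D)$ — the justification of the key restriction differs in a genuine way. The paper invokes the Courant--Fischer minimax principle: it counts ranks ($\bm{L}$ full column rank, $\operatorname{rank}(\Pi) = Hn_u - n_x$), identifies $\sigma_{-1}(\F)$ as the $r_\F$-th largest singular value, and lower-bounds it by choosing $\Pi$ as a feasible projector in the outer maximization, then relaxes the minimum from $\operatorname{range}(\Pi)$ to the whole sphere to get $\sigma_{\min}(\bm{L}) \geq \sigma_{\min}(\bm{D})$. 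You instead identify $\ker\F$ \emph{exactly}: since $M = \operatorname{diag}(\bm{C},\bm{D})\bigl[\begin{smallmatrix}\bm{G}_u\\ I\end{smallmatrix}\bigr]$ is injective (as $\bm{D} = I_H \otimes D$ with $D \succ 0$), $\ker\F = \ker\Pi$, so $(\ker\F)^\perp = \operatorname{range}(\Pi) = \ker P_H$, and the variational characterization of the smallest nonzero singular value applies directly on that subspace. This is more elementary — no minimax, no rank bookkeeping — and it quietly sidesteps some looseness in the paper's rank accounting (e.g., the paper calls $\F$ ``full column rank'' while assigning it rank $Hn_u - n_x < Hn_u$). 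Two small remarks: your claim that full row rank of $P_H$ is what makes $\Pi$ a genuine projector is unnecessary ($I - P_H^+P_H$ is always the orthogonal projector onto $\ker P_H$; controllability only fixes its rank), and in the degenerate case $\ker P_H = \{0\}$ one has $\F = 0$ and $\F^+ = 0$, so the bound holds trivially — neither you nor the paper spells this out, but your argument accommodates it without change.
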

\begin{proof}
For an arbitrary matrix $M$, $(\|M^+\|_2)^{-1}$ is equal to the smallest \underline{non-zero} singular eigenvalue of $M$ (we will denote this quantity as $\sigma_{-1}(M)$). Thus, in order to bound $\|M^+\|_2$ from above, we have to bound $\sigma_{-1}(M)$ from below. Denote $\bm{L}$ as the matrix $$ \bm{L}:=\begin{bmatrix}\bm{C}& 0\\ 0 & \bm{D}\end{bmatrix}\begin{bmatrix}\bm{G}_u(A,B) \\ I \end{bmatrix} $$ and notice that it is full column rank and has rank of $H\times n_u$. The projection $\Pi_{\cl{N}(P_H)}:=(I-P^+_H(A,B)P_H(A,B))$ has rank $H\times n_u-n_x$ due the assumption of $H$-controllability. Hence, $\bm{F}=\bm{L}\Pi_{\cl{N}(P_H)}$ is full column rank with rank $r_{\F} := H\times n_u-n_x$ and has a null space $\cl{N}(\F)$ of dimension $n_x$. From these observations, we can equivalently say that $\sigma_{-1}(\F)$ is the $r_{\F}$th largest (or equivalently $n_x+1$ smallest) singular eigenvalue of $\F$. Using the Minimax principle, we can therefore write:
\begin{align}
\sigma_{-1}(\F) &= \max\limits_{\mathrm{proj. }\Pi, \text{ s.t.: } \mathrm{rank}(\Pi) = r_{\F}} \min\limits_{x\text{ s.t.: }\|\Pi x\|=1} x^\top \Pi \F^\top \F \Pi x \\
&= \max\limits_{\mathrm{proj. }\Pi, \text{ s.t.: } \mathrm{rank}(\Pi) = r_{\F}} \min\limits_{x\text{ s.t.: }\|\Pi x\|=1} x^\top \Pi \Pi_{\cl{N}(P_H)} \bm{L}^\top \bm{L} \Pi_{\cl{N}(P_H)}  \Pi x 
\end{align}
Now recall that $\Pi_{\cl{N}(P_H)}$ is of rank $r_{\F}$, hence it is a feasible choice for the variable $\Pi$ of the outer optimization problem. This leads to the bound 
\begin{align}
    \sigma_{-1}(\F) &\geq \min\limits_{x\text{ s.t.: }\|\Pi_{\cl{N}(P_H)} x\|=1} x^\top \Pi_{\cl{N}(P_H)} \bm{L}^\top \bm{L} \Pi_{\cl{N}(P_H)} x \\
    &\geq \min\limits_{z\text{ s.t.: }\|z\|=1} z^\top \bm{L}^\top \bm{L} z = \sigma_{min}(\bm{L})
\end{align}
We obtain a simple, but possibly conservative, lower bound on $\sigma_{min}(\bm{L})$ as follows:
\begin{align*}
    &\sigma^2_{min}(\bm{L}) = \min\limits_{z\text{ s.t.: }\|z\|=1} \|\bm{L} z\|^2_2 = \min\limits_{z\text{ s.t.: }\|z\|=1} \|\bm{C}\bm{G}_u(A,B) z\|^2_2 + \|\bm{D}z\|^2_2 \geq \sigma^2_{min}(\bm{C}\bm{G}_u(A,B)) + \sigma^2_{min}(\bm{D})\\
    &\implies \quad \sigma_{min}(\bm{L}) \geq \sigma_{min}(\bm{D})
\end{align*}
Finally, this provides us with the final result: $\|\F^+(A,B)\|_2 = \sigma^{-1}_{-1}(\bm{F}) \leq  \sigma^{-1}_{min}(\bm{L}) \leq \sigma^{-1}_{min}(\bm{D})$
\end{proof}

We obtain an upper bound for $\|\g\|_2$, as a corollary of the previous three Lemmas:
\begin{lemma}
Let $(A,B)$ be a fixed pair of $H$-controllable system matrices. Let $\g = \bm{L}\bm{u}^*_c$, where $\bm{L}$ and $\bm{u}^*_{c}$ are defined as:
\begin{align}
    & \bm{L}:=\begin{bmatrix}\bm{C}& 0\\ 0 & \bm{D}\end{bmatrix}\begin{bmatrix}\bm{G}_u(A,B) \\ I \end{bmatrix} & &\bm{u}^*_{c} :=  P^{+}_H A^H e_j = P^\top_HW^{-1}_HA^H e_j.
\end{align}
Then, it holds:
$$\|\bm{g}\|_2 \leq \left(\|C\|_2 \sqrt{H}\sigma^{\tfrac{1}{2}}_{max}(W^u_H) + \|D\|_2\right)\sigma^{-\tfrac{1}{2}}_{min}(W^u_H)\alpha_H $$
where $\alpha_H := \max_{0\leq k\leq H} \|A^k\|_2$
\end{lemma}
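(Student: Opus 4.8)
The plan is to recognize this statement as an immediate corollary of the three preceding norm lemmas, so the whole argument reduces to tracking the block structure of $\bm{L}$ and applying submultiplicativity of the spectral norm. First I would record that the lifted weights are Kronecker products, $\bm{C} = I_H \otimes C$ and $\bm{D} = I_H \otimes D$, so that $\|\bm{C}\|_2 = \|C\|_2$ and $\|\bm{D}\|_2 = \|D\|_2$. This is precisely what lets the \emph{unlifted} weights $C,D$ appear in the final bound.

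Next I would exploit the block structure of $\g = \bm{L}\bm{u}^*_c$. Writing it out,
\[
\g = \begin{bmatrix}\bm{C}\,\bm{G}_u(A,B) \\ \bm{D}\end{bmatrix}\bm{u}^*_c = \begin{bmatrix}\bm{C}\,\bm{G}_u(A,B)\,\bm{u}^*_c \\ \bm{D}\,\bm{u}^*_c\end{bmatrix},
\]
and using $\|[a^\top, b^\top]^\top\|_2 = \sqrt{\|a\|_2^2 + \|b\|_2^2} \le \|a\|_2 + \|b\|_2$ together with submultiplicativity I obtain
\[
\|\g\|_2 \le \|C\|_2 \|\bm{G}_u(A,B)\|_2 \|\bm{u}^*_c\|_2 + \|D\|_2 \|\bm{u}^*_c\|_2 = \big(\|C\|_2\|\bm{G}_u(A,B)\|_2 + \|D\|_2\big)\|\bm{u}^*_c\|_2.
\]
I would deliberately split into the two blocks rather than bounding $\|\bm{L}\|_2$ directly, because the target inequality carries $\|C\|_2\sqrt{H}\,\sigma_{max}^{1/2}(W^u_H)$ and $\|D\|_2$ as an additive pair, which is exactly the triangle-inequality form above.

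The third step substitutes the two prior lemmas. The grammian bound gives $\|\bm{G}_u(A,B)\|_2 \le \sqrt{H\,\sigma_{max}(W^u_H)} = \sqrt{H}\,\sigma_{max}^{1/2}(W^u_H)$. For $\bm{u}^*_c = P_H^+ A^H e_j$ I would write $\|\bm{u}^*_c\|_2 \le \|P_H^+\|_2\,\|A^H\|_2\,\|e_j\|_2$, then invoke $\|P_H^+\|_2 = \sigma_{min}^{-1/2}(W^u_H)$ from the preceding lemma, $\|e_j\|_2 = 1$, and $\|A^H\|_2 \le \alpha_H$ (immediate from $\alpha_H := \max_{0\le k\le H}\|A^k\|_2$), giving $\|\bm{u}^*_c\|_2 \le \sigma_{min}^{-1/2}(W^u_H)\,\alpha_H$. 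Combining with the previous display yields exactly the stated bound.

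The main obstacle here is bookkeeping rather than any genuine difficulty: one must keep the lifted and unlifted weights straight through the $I_H\otimes$ structure, and must choose the additive block split so the constants land in the advertised form. The only hypothesis that does real work is $H$-controllability, which guarantees $W^u_H \succ 0$, hence $\sigma_{min}(W^u_H) > 0$ and a well-defined $P_H^+$ with $\|P_H^+\|_2 = \sigma_{min}^{-1/2}(W^u_H)$; I would flag this explicitly so the division is justified.
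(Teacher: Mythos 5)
Your proof is correct and matches the paper's intended argument: the paper states this lemma without a written proof, presenting it explicitly as a corollary of the three preceding lemmas, and your derivation — splitting $\g = \bm{L}\bm{u}^*_c$ into its two blocks, using $\|I_H \otimes C\|_2 = \|C\|_2$, and substituting $\|\bm{G}_u\|_2 \leq \sqrt{H}\,\sigma_{max}^{1/2}(W^u_H)$, $\|P_H^+\|_2 = \sigma_{min}^{-1/2}(W^u_H)$, and $\|A^H\|_2 \leq \alpha_H$ — is exactly the chain the paper intends. Your explicit flagging of $H$-controllability as the hypothesis guaranteeing $W^u_H \succ 0$ (hence a well-defined, norm-bounded $P_H^+$) is a worthwhile detail the paper leaves implicit.
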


\subsubsection{The final bound}
With the results of the last section, we can now bound the constants $\Gamma_A$ and $\Gamma_B$ used in \thmref{thm:mainsensitivity}. Rather than writing the explicit form of the constants we shall only analyze how they scale with system parameters. Recall $\Gamma_A$, $\Gamma_B$ are defined as 
\begin{align*}
    \Gamma_A &= \kappa_{CD}\Gamma'_1 + \kappa_{CD}\Gamma'_2 \|B_1\|_2\|\bm{G}_w(A_1)\|_2 \|\bm{G}_w(A_2)\|_2 \\
    \Gamma_B &= \kappa_{CD}\Gamma'_2 \|\bm{G}_w(A_2)\|_2,
\end{align*}
where $\Gamma'_1$, $\Gamma'_2$ are dominated by the terms: 
\begin{align*}
    \Gamma'_1 &\sim \cl{O}\left(\alpha_{H,1} \alpha_{H,2}H\left\|\bm{G}_{u,2}\right\|_2\|P^+_{H,2}\|_2\right)\\
    \Gamma'_2 &\sim \cl{O}\left(\|\g_2\|_2(\|\F^+_1\|_2 + \|\F^+_2\|_2)\|P^+_{H,1}\|_2 \|P^+_{H,2}\|_2(\|P_{H,1}\|_2 + \|P_{H,2}\|_2)(1+\|\bm{G}_{u,1}\|_2)\right) 
\end{align*}

Let us first revisit the collection of bounds we have derived:
\begin{enumerate}[i)]
\item $\|\bm{G}_{u}(A,B)\|_2 \leq \sqrt{H \sigma_{max}(W^u_H(A,B))}$,  $\|\bm{G}_{w}(A)\|_2 \leq \sqrt{H \sigma_{max}(W^w_H(A))} $
\item $\|P_H(A,B)\|_2 = \left(\sigma_{max}(W^u_H(A,B))\right)^{\tfrac{1}{2}},\quad \|P^+_H(A,B)\|_2 = \left(\sigma_{min}(W^u_H(A,B))\right)^{-\tfrac{1}{2}}$
\item $\|\F^+(A,B)\|_2 \leq \sigma^{-1}_{min}(D)$
\item $\|\bm{g}\|_2 \leq \left(\|C\|_2 \sqrt{H}\sigma^{\tfrac{1}{2}}_{max}(W^u_H) + \|D\|_2\right)\sigma^{-\tfrac{1}{2}}_{min}(W^u_H)\alpha_H $
\item $\alpha_H := \max_{0\leq k\leq H} \|A^k\|_2$
\item $\kappa_{CD} = \frac{\max\{\sigma_{max}(C),\sigma_{max}(D)\}}{\min\{\sigma_{min}(C),\sigma_{min}(D)\}}$
\end{enumerate}

Before we state the final bound, we require the following standard controllability result \cite{dullerud2013course}.

\begin{lemma}\label{lem:baseCtrl}
Let $\cl{S} $ be a compact set of matrices where each element $(A\in\R^{n \times n},B\in\R^{n \times m}) \in\cl{S}$ represents a controllable linear dynamical system with equations $x({t+1}) = Ax(t)+Bu(t) + w(t)$, state $x(t)\in \R^n$, input $u(t)\in\R^m$ and disturbance $w(t)\in\R^n$. Then, there exists an FIR Horizon $H\leq n$, and positive scalar constants $\overline{\sigma}^w$, $\underline{\sigma}^w$, $\overline{\sigma}^u$, $\underline{\sigma}^u$ such that the following statements hold:
    \begin{itemize}
        \item For any $(A,B)\in\cl{S}$ and any initial state, $\zeta_0$, there exists an input $u(0), u(1), \dots, u({H-1})$, such that the system trajectory $x({t+1})  = A x(t) + B u(t),\forall {t \leq H-1}, x(0) = \zeta_0$ satisfies $x(H)=0$ at time $H$.  
        \item For any $(A,B)\in\cl{S}$, the matrix $P_H = [A^{H-1}B, A^{H-2}B, \dots, B] \in \R^{n \times Hm}$ is full column rank.
        \item For any $(A,B) \in \cl{S}$, the following FIR-SLS-constraint is feasible:\\ There exist $\Phi^x[1], \dots, \Phi^x[H]\in\mathbb{R}^{n\times n}$ and $ \Phi^u[0], \dots, \Phi^u[H-1]\in\mathbb{R}^{m \times n} $ such that:
        \begin{align*}
        \Phi^x[0] = I, \quad\forall k = 0,...,H-1:\:\: \Phi^x[k+1] = A\Phi^x[k] + B\Phi^u[k], \text{ and  }\Phi^x[H]=0
        \end{align*}
        \item For any $(A,B)\in\cl{S}$, the corresponding grammians $W^u_H(A,B)$ and $W^w_H(A)$ are positive-definite and their singularvalues satisfy the inequalities:
        \begin{align*}
             &\underline{\sigma}^u \leq \sigma_{min}(W^u_H(A,B)), && \sigma_{max}(W^u_H(A,B)) \leq \overline{\sigma}^u \\
             &\underline{\sigma}^w \leq \sigma_{min}(W^w_H(A)), && \sigma_{max}(W^w_H(A)) \leq \overline{\sigma}^w 
        \end{align*}
    \end{itemize}
\end{lemma}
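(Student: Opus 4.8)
The plan is to use a single horizon $H = n$ for the entire family $\mathcal{S}$, derive each claim pointwise from the Kalman rank condition, and only then upgrade the quantitative statements to uniform ones by invoking compactness. First I would observe that for any controllable $(A,B) \in \mathcal{S}$ the order-$n$ controllability matrix $[B, AB, \ldots, A^{n-1}B]$ has rank $n$ by the standard Kalman criterion, and since $P_n = [A^{n-1}B, \ldots, AB, B]$ is merely a block-column reordering of it, $P_n$ has rank $n$ as well. This settles the second bullet with $H = n$ and immediately gives positive-definiteness of $W^u_H(A,B) = P_H P_H^T = \sum_{t=0}^{H-1} A^t B B^T (A^t)^T$; positive-definiteness of $W^w_H(A) = \sum_{t=0}^{H-1} A^t (A^t)^T$ is automatic since the $t=0$ term alone equals $I$.

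For the first bullet I would use the explicit minimum-energy input. Writing $x(H) = A^H \zeta_0 + P_H U$ with $U = [u(0)^T, \ldots, u(H-1)^T]^T$, the choice $U = -P_H^T (W^u_H)^{-1} A^H \zeta_0$ is well defined because $W^u_H$ is invertible, and a direct substitution yields $x(H) = 0$. The third bullet (FIR-SLS feasibility) then follows column by column: applying this construction with $\zeta_0 = e_j$ produces $\Phi^u[k](:,j)$ that drive $\Phi^x[0](:,j) = e_j$ to $\Phi^x[H](:,j) = 0$ under the recursion $\Phi^x[k+1] = A\Phi^x[k] + B\Phi^u[k]$; stacking over $j = 1, \ldots, n$ gives the required $\Phi^x, \Phi^u$ with $\Phi^x[0] = I$ and $\Phi^x[H] = 0$.

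The remaining and genuinely quantitative task is the fourth bullet, where I would invoke compactness of $\mathcal{S}$. The maps $(A,B) \mapsto W^u_H(A,B)$ and $A \mapsto W^w_H(A)$ are polynomial in the matrix entries, hence continuous, and the extreme eigenvalues $\sigma_{\max}(\cdot)$ and $\sigma_{\min}(\cdot)$ of a symmetric matrix depend continuously on its entries. Therefore $\sigma_{\max}(W^u_H)$ and $\sigma_{\max}(W^w_H)$ are continuous on the compact set $\mathcal{S}$ and attain finite maxima $\overline{\sigma}^u, \overline{\sigma}^w$. Likewise $\sigma_{\min}(W^u_H)$ and $\sigma_{\min}(W^w_H)$ are continuous and, by the pointwise positive-definiteness established above, strictly positive at every point of $\mathcal{S}$; a continuous strictly positive function on a compact set attains a strictly positive minimum, yielding $\underline{\sigma}^u, \underline{\sigma}^w > 0$.

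The main obstacle is precisely this last step: the uniform lower bound $\underline{\sigma}^u > 0$ cannot be read off pointwise, since individual pairs could in principle approach a boundary where controllability degenerates and $\sigma_{\min}$ tends to zero. The essential leverage is that $\mathcal{S}$ is assumed compact and consists only of controllable pairs, so the infimum of $\sigma_{\min}(W^u_H(\cdot))$ is attained and therefore positive; without compactness the bound could fail. I would also remark that the stated ``full column rank'' of $P_H \in \mathbb{R}^{n \times Hm}$ should be read as rank $n$ (full row rank), which is exactly what the Kalman condition supplies and what positive-definiteness of $W^u_H = P_H P_H^T$ requires.
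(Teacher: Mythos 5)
Your proof is correct. The paper does not actually prove this lemma---it simply cites it as standard (Dullerud--Paganini, Doyle et al.)---and your argument is precisely the standard one being invoked: Kalman rank at horizon $H=n$ (valid uniformly over $\cl{S}$ by Cayley--Hamilton), the minimum-energy input $U = -P_H^T(W^u_H)^{-1}A^H\zeta_0$ for deadbeat transfer, columnwise application to $e_j$ for FIR-SLS feasibility, and a Weierstrass argument (continuity of the Grammians and of extreme eigenvalues plus pointwise positive-definiteness on the compact set) for the uniform bounds $\underline{\sigma}^u, \underline{\sigma}^w > 0$ and $\overline{\sigma}^u, \overline{\sigma}^w < \infty$. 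Your side remark is also right: $P_H \in \R^{n \times Hm}$ must be read as full \emph{row} rank (rank $n$), which is what the paper itself uses later when writing $\|P^+_H\|_2 = \sigma_{min}(W^u_H)^{-1/2}$, so the statement's ``full column rank'' is a typo rather than a gap in your argument.
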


\noindent
We can not use $ \underline{\sigma}_u,  \overline{\sigma}_u$,  $\underline{\sigma}_w$,  $\overline{\sigma}_w$ in \Cref{lem:baseCtrl} in conjuncture of the bounds derived above to obtain
\begin{align}
    &\Gamma'_2 = \cl{O}\left(
        \alpha_H \:\kappa_{CD}\: H \:\left(\frac{\overline{\sigma}_u}{\underline{\sigma}_u}\right)^{\frac{3}{2}}\right) && \Gamma'_1 = \cl{O}\left( \alpha^2_H H^{\frac{3}{2}}\left(\frac{\overline{\sigma}_u}{\underline{\sigma}_u}\right)^{\frac{1}{2}} \right)
\end{align}
and finally
\begin{align}
    &\Gamma_A = \cl{O}\left(
        \alpha^2_H \:\kappa^2_{CD}\:\|B_1\|_2\: H^2 \:\left(\frac{\overline{\sigma}_u}{\underline{\sigma}_u}\right)^{\frac{3}{2}} \overline{\sigma}_w\right) && \Gamma_B = \cl{O}\left(
            \alpha_H \:\kappa^2_{CD}\: H^{\frac{3}{2}} \:\left(\frac{\overline{\sigma}_u}{\underline{\sigma}_u}\right)^{\frac{3}{2}} \overline{\sigma}_w^{\frac{1}{2}}\right)
\end{align}

\begin{theorem}\label{thm:mainsensitivity_global}
Let $C,D \succ 0$, and let $\cl{S}$ be a compact set of controllable systems with known FIR horizon $H$ and constants $ \underline{\sigma}_u,  \overline{\sigma}_u$,  $\underline{\sigma}_w$, $\overline{\sigma}_w$ as defined in \lemref{lem:baseCtrl}. Then there are fixed constants $\Gamma_A,\Gamma_B$, such that for any two pairs of system matrices $(A_1,B_1), (A_2,B_2) \in \cl{S}$ the corresponding $\cl{H}_2$ optimal SLS-solutions of problem $S_j$ ($j$ arbitrary), denoted $\phi^{*j}_1$ and $\phi^{*j}_2$, satisfy the following inquality: 
\begin{align}
\|\phi^{j*}_1-\phi^{j*}_2\|_2 \leq \Gamma_A\|A_1-A_2\|_F + \Gamma_B \|B_1-B_2\|_F.
\end{align}
Furthermore, $\Gamma_A$ and $\Gamma_B$ satisfy
\begin{align}
    &\Gamma_A = \cl{O}\left(
        \alpha^2_H \:\kappa^2_{CD}\:\beta\: H^2 \:\left(\frac{\overline{\sigma}_u}{\underline{\sigma}_u}\right)^{\frac{3}{2}} \overline{\sigma}_w\right) && \Gamma_B = \cl{O}\left(
            \alpha_H \:\kappa^2_{CD}\: H^{\frac{3}{2}} \:\left(\frac{\overline{\sigma}_u}{\underline{\sigma}_u}\right)^{\frac{3}{2}} \overline{\sigma}_w^{\frac{1}{2}}\right),
\end{align}
where $\beta:= \max\limits_{(A,B)\in\cl{S}}\|B\|_2$ and $\kappa_{CD}$ stands for
$$\kappa_{CD} = \frac{\max\{\sigma_{max}(C),\sigma_{max}(D)\}}{\min\{\sigma_{min}(C),\sigma_{min}(D)\}}$$
\end{theorem}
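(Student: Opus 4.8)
The plan is to bootstrap from the instance-wise bound of Theorem~\ref{thm:mainsensitivity}, which already establishes that for any two controllable pairs $(A_1,B_1),(A_2,B_2)$ sharing a common FIR horizon $H$ one has $\|\phi^{j*}_1-\phi^{j*}_2\|_2 \le \Gamma_A\|A_1-A_2\|_F + \Gamma_B\|B_1-B_2\|_F$, with $\Gamma_A,\Gamma_B$ written in closed form through the auxiliary quantities $\Gamma'_1,\Gamma'_2$. The only thing left to prove is that these data-dependent constants can be replaced by a single pair of constants valid uniformly over the compact set $\cl{S}$, together with the advertised scaling. Note that the passage between the induced $2$-norm appearing throughout the intermediate estimates and the Frobenius norm of the statement is free, since $\|M\|_2 \le \|M\|_F$ for every matrix $M$; thus any $2$-norm Lipschitz bound on the parameters immediately yields the Frobenius-norm bound claimed.

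First I would isolate the building blocks that enter $\Gamma_A$ and $\Gamma_B$ through $\Gamma'_1,\Gamma'_2$: the quantities $\|\bm{G}_u(A,B)\|_2$, $\|\bm{G}_w(A)\|_2$, $\|P_H(A,B)\|_2$, $\|P^+_H(A,B)\|_2$, $\|\F^+(A,B)\|_2$, $\|\g\|_2$, and $\alpha_H := \max_{0\le k\le H}\|A^k\|_2$. Each of these admits a closed-form estimate in terms of the singular values of the two controllability grammians $W^u_H(A,B)$ and $W^w_H(A)$: the auxiliary lemmas of this section give $\|\bm{G}_u\|_2 \le \sqrt{H\,\sigma_{max}(W^u_H)}$, $\|\bm{G}_w\|_2 \le \sqrt{H\,\sigma_{max}(W^w_H)}$, $\|P_H\|_2 = \sigma_{max}(W^u_H)^{1/2}$, $\|P^+_H\|_2 = \sigma_{min}(W^u_H)^{-1/2}$, $\|\F^+\|_2 \le \sigma_{min}(D)^{-1}$, and the stated upper bound on $\|\g\|_2$.

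Next I would invoke the uniform control furnished by Lemma~\ref{lem:baseCtrl}: over $\cl{S}$ the grammians stay positive definite with $\underline{\sigma}_u \le \sigma_{min}(W^u_H),\ \sigma_{max}(W^u_H) \le \overline{\sigma}_u$ and $\underline{\sigma}_w \le \sigma_{min}(W^w_H),\ \sigma_{max}(W^w_H) \le \overline{\sigma}_w$, so every grammian-dependent factor above is bounded independently of the chosen pair. The two remaining data-dependent quantities, $\alpha_H$ and $\|B\|_2$, are continuous in $(A,B)$, hence attain finite maxima $\bar{\alpha}_H$ and $\beta := \max_{(A,B)\in\cl{S}}\|B\|_2$ on the compact set; I would replace $\alpha_H$ by $\bar{\alpha}_H$ and $\|B_1\|_2$ by $\beta$. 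Substituting these uniform bounds into $\Gamma'_1,\Gamma'_2$ and then into $\Gamma_A,\Gamma_B$, and taking the supremum over $\cl{S}$, produces global constants. Tracking the dominant terms then yields the $\cl{O}$ expressions: $\kappa_{CD}$ enters once through the final step of Theorem~\ref{thm:mainsensitivity} and once more through the weight matrices hidden in $\|\g\|_2$ and $\|\F^+\|_2$, which (since $\kappa_{CD}\ge 1$) makes the $\kappa_{CD}\,\Gamma'_2$ contribution dominate and accounts for the $\kappa_{CD}^2$ factor, while the grammian ratios collect into $(\overline{\sigma}_u/\underline{\sigma}_u)^{3/2}$.

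The main obstacle is not any individual estimate but the bookkeeping needed to propagate the uniform bounds cleanly and to read off the dominant scaling; the genuine mathematical content of uniformity is already delivered by Lemma~\ref{lem:baseCtrl}, which guarantees through compactness and controllability that the grammian singular values never degenerate as $(A,B)$ ranges over $\cl{S}$. The one point I would be careful about is that $H$ must be a \emph{common} FIR horizon valid for every element of $\cl{S}$ (again supplied by Lemma~\ref{lem:baseCtrl}), so that $P_H$, $\bm{G}_u$, $\bm{G}_w$ have matching dimensions across the two instances and the differences $P_{H,1}-P_{H,2}$, $\bm{G}_{u,1}-\bm{G}_{u,2}$ are well defined; without a uniform $H$ the instance-wise theorem could not even be applied to an arbitrary pair in $\cl{S}$.
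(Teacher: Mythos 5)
Your proposal is correct and follows essentially the same route as the paper: it bootstraps the instance-wise bound of Theorem~\ref{thm:mainsensitivity}, replaces each data-dependent factor ($\|\bm{G}_u\|_2$, $\|\bm{G}_w\|_2$, $\|P_H\|_2$, $\|P^+_H\|_2$, $\|\F^+\|_2$, $\|\g\|_2$) by its grammian-based estimate from the auxiliary lemmas, and then makes these uniform over $\cl{S}$ via the constants $\underline{\sigma}_u, \overline{\sigma}_u, \underline{\sigma}_w, \overline{\sigma}_w$ and the common FIR horizon $H$ from Lemma~\ref{lem:baseCtrl}, exactly as the paper does in its final-bound section. Your accounting of the $\kappa^2_{CD}$ factor (one power from the final step of Theorem~\ref{thm:mainsensitivity}, one hidden in $\|\g\|_2\,\|\F^+\|_2$ inside $\Gamma'_2$) and your observation that $\|M\|_2 \leq \|M\|_F$ closes the gap between the $2$-norm estimates and the stated Frobenius-norm bound both match the paper's reasoning.
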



\section{Extensions to Non-Convex Parameter Set Setting}
\label{sec:convex}
Representing model uncertainty as convex compact parameter sets is not always practical; sometimes potentially even impossible. Our approach can be readily extended to compact non-convex parameter sets $\mathcal{S}$, if those can be written as a finite union of convex sets $\bigcup^{N}_{i=1} \mathcal{P}_i$. This class of non-convex sets covers a large range of practical scenarios and the presented approach can be extended without losing stability guarantees. We can ensure by wrapping the proposed algorithm in a high-level routine SETSELECT, which runs the algorithm on the smaller convex sets $\mathcal{P}_i$ until they become entirely inconsistent: 
\begin{enumerate}
    \item At $t=0$, we select an arbitrary convex set $\mathcal{P}_{k_0}$ and perform consistent model chasing with CONSIST as before.
    \item If at some point $\mathcal{P}_{k_0}$ becomes entirely inconsistent, we select an arbitrary set $\mathcal{P}_{k_1}$ from the remaining collection $\{\mathcal{P}_1,\dots, \mathcal{P}_{N}\} \setminus \mathcal{P}_{k_0}$ and restart CONSIST with that set $\mathcal{P}_{k_1}$. If $\mathcal{P}_{k_1}$ is also entirely inconsistent, repeat that selection process.
\end{enumerate}
Per definition, the above algorithm never violates consistency. Because there are finitely many convex sets $\mathcal{P}_{k_i}$, the cost accrued due to restarting CONSIST scales up the total movement cost of the convex counterpart by a fixed constant. Overall, the stability proof is not impacted.

\end{document}